\date{}
\newcommand{\stab}{\operatorname{stab}} 
\newcommand{\ch}{\operatorname{CH}}     
\newcommand{\med}{\operatorname{med}}
\newcommand{\dist}{\operatorname{dist}}
\newcommand{\argmin}{\operatorname*{arg\,min}} 
\newcommand{\etal}{\textit{et al.}}
\theoremstyle{plain}
\newtheoremstyle{claimstyle}{\topsep}{\topsep}{}{0pt}{\sffamily}{. }{5pt plus 1pt minus 1pt}%
  {$\vartriangleright$ \thmname{#1}\thmnumber{ #2}\thmnote{ (#3)}}
\theoremstyle{plain}
\newtheorem{theorem}{Theorem}
\newtheorem{lemma}[theorem]{Lemma}
\newtheorem{corollary}[theorem]{Corollary}
\newtheorem{definition}[theorem]{Definition}
\newtheorem{conjecture}[theorem]{Conjecture}
\newtheorem{observation}[theorem]{Observation}
\theoremstyle{definition}
\theoremstyle{remark}
\newtheorem*{note*}{Note}
\newtheorem*{remark*}{Remark}
\newtheorem*{claim*}{Claim}
\title{Curve Stabbing Depth: Data Depth for Plane Curves%
   \thanks{A preliminary version of this work appeared in the proceedings of the 34th Canadian Conference on Computational Geometry (CCCG 2022) \cite{Durocher:2022}. This research was funded in part by the Natural Sciences and Engineering Research Council of Canada (NSERC).}}
\author{\hspace{1mm}Stephane Durocher \\
	Department of Computer Science\\
	University of Manitoba\\
	\texttt{stephane.durocher@umanitoba.ca} \\
	\And
	\hspace{1mm}Alexandre Leblanc \\
	Department of Statistics\\
	University of Manitoba\\
	\texttt{alex.leblanc@umanitoba.ca} \\
    \And
	\hspace{1mm}Spencer Szabados\thanks{Communicating author. Research initiated while at the University of Manitoba.} \\
	David R. Cheriton School of Computer Science\\
	University of Waterloo\\
	\texttt{sszabados@uwaterloo.ca} \\
}
\begin{document}
\maketitle

\begin{abstract}
Measures of data depth have been studied extensively for point data. Motivated by recent work on analysis, clustering, and identifying representative elements in sets of trajectories, we introduce {\em curve stabbing depth} to quantify how deeply a given planar curve $Q$ is located relative to a given set $\cal C$ of curves in $\mathbb{R}^2$. Curve stabbing depth evaluates the average number of elements of $\cal C$ stabbed by rays rooted along the length of $Q$. We describe an $O(n^3 + n^2 m\log^2m+nm^2\log^2 m)$-time algorithm for computing curve stabbing depth when $Q$ is an $m$-vertex polyline and $\cal C$ is a set of $O(n)$ polylines, each with $O(m)$ vertices. We analyze the newly proposed depth under common properties sought for point data, including equivariance under transformations, stability of the median element, and robustness to perturbations in data. Finally, we analyze a randomized algorithm for approximating the curve stabbing depth and contrast it to our deterministic algorithm.
\end{abstract}

\section{Introduction}
\label{sec:intro}
Processes that generate and require analyzing functional or curve data are becoming increasingly common within various domains, including medicine (e.g., ECG signals \cite{Ieva:2013} and analysis of nerve fibres in brain scans \cite{Micheaux:2021}), GIS techniques for generating and processing positional trajectory data (e.g., tracking migratory animal paths \cite{Buchin:2013}, air traffic control \cite{Chu:2021}, and clustering of motion capture data \cite{Durocher:2020}), and in the food industry (e.g., classification of nutritional information via spectrometric data \cite{Ferraty:2003}). In this paper, we consider depth measures for curve data.

Traditional depth measures are defined on multidimensional point data and seek to quantify the centrality or the outlyingness of a given object relative to a set of objects or to a sample population. Common depth measures include simplicial depth~\cite{Liu:1990}, Tukey (half-space) depth~\cite{Tukey:1975}, Oja depth~\cite{Oja:1983}, convex hull peeling depth~\cite{Barnett:1976}, and regression depth~\cite{Rousseeuw:1999}. See \cite{Liu:1999} and \cite{Serfling:2003} for further discussion on depth measures for multivariate point data. Previous work exists defining depth measures for sets of functions (so called functional data) \cite{Ferraty:2003, Cuevas:2007, Ieva:2013, Claeskens:2014}, often with a focus on classification. Despite the fact that curves can be expressed as functions (e.g., $[0,1]\to \mathbb{R}^d$), depth measures for functional data typically do not generalize to curves, as they are often sensitive to the specific parameterization chosen, being restricted to functions whose co-domain is $\mathbb{R}$, which can only represent $x$-monotone curves (e.g., $[0,1]\to \mathbb{R})$. 

Consequently, new methods are required for efficient analysis of trajectory and curve data. Recent work has examined identifying representative elements (e.g., finding a median trajectory \cite{Buchin:2013} or a central trajectory \cite{vanKreveld:2017} for a given set of trajectories) and clustering in a set of trajectories \cite{Buchin:2015,Durocher:2020}. 


%
%
%
In this work, we introduce {\em curve stabbing depth}, a new depth measure defined in terms of stabbing rays to quantify the degree to which a given curve is nested within a given set of curves. Our main contributions are: 
\begin{itemize}
\setlength\itemsep{0.1em} 
    \item In Section~\ref{sec:defs}, we define curve stabbing depth, a new depth measure for curves in $\mathbb{R}^2$, and describe a general approach for evaluating the curve stabbing depth of a given curve $Q$ relative to a set $\mathcal{C}$ of curves in $\mathbb{R}^2$.
    \item In Section~\ref{sec:alg}, we present an $O(n^3 + n^2m\log^2 m + nm^2\log^2 m)$-time algorithm for computing the curve stabbing depth of a given $m$-vertex polyline $Q$ relative to a set $\mathcal{P}$ of $n$ polylines in $\mathbb{R}^2$, each with $O(m)$ vertices.
    \item In Section~\ref{sec:properties}, we analyze properties of curve stabbing depth.
    \item In Section~\ref{sec:approx}, we examine randomized algorithms for approximating the curve stabbing depth of a given curve $Q$ relative to a set $\mathcal{C}$ of curves in $\mathbb{R}^2$.

\end{itemize}

\section{Preliminaries and Notation}
\label{sec:preliminaries}
In this section we introduce some preliminary definitions used throughout the remainder of the paper that the reader should familiarize themselves with along with the accompanying notation.

\subsection{Unparameterized Curve Space}
\label{sec:preliminaries.unparameterizedCurveSpace}

In fashion similar to de~Micheaux \etal.~\cite{Micheaux:2021}, we begin by defining a space of parameterization-free planar curves. Our goal is to reconcile the often informal interpretation of a geometric curve as a continuous set of points ``drawn'' in the plane, with more rigorous definitions curves that enable rigorous analysis, as done in \cite{Kurtek:2012}, \cite{Kemppainen:2017}.

For clarity and ease of delivery, let $F^{(1)}$ denote the space of piecewise differentiable functions with domain $[0,1]$ and codomain $\mathbb{R}^2$, each equipped with the standard Euclidean metric $\|\cdot\|_2$, which parameterize plane rectifiable curves. That is, for any $f\in F^{(1)}$, there exists a finite partition $0=t_0<t_1<\dots<t_n=1$ of $[0,1]$, such that $f'$ exists on each open interval $(t_i,t_{i+1})$ and is continuous on $[t_i,t_{i+1}]$, meaning that $\lim_{t\to t_i^+} f'(t)$ and $\lim_{t\to t_{i+1}^-} f'(t)$ both exist, and the arc length of $f$, denoted $L(f)$, is finite. See \cite{Marsden:1999}.

Moving forward, we use the term \emph{curve} to refer to the \emph{trace} (range) of a function $f:[0,1]\to \mathbb{R}^2$, which is defined as the locus of points in $\mathbb{R}^2$ given by
\begin{equation*}
    \Gamma_f = \{p\in\mathbb{R}^2\mid \exists t\in[0,1] \text{ s.t. } f(t) = p\}.
\end{equation*}

Working with traces is intuitive and suffices for the study carried out in this work; however, due to the parametric nature of these curves, each trace $\Gamma_f$ admits infinitely many functional representations in $F^{(1)}$, which need not be equivalent in general. E.g., consider two functions $f$ and $g$ with identical traces that contain a loop; $f$ may orbit this loop multiple times, switching directions between traversals before proceeding along the remainder of the curve, whereas $g$ might complete only a single unidirectional loop. 

As such, we consider the set of curves $T^{(1)}\subset F^{(1)}$, where for any $f\in T^{(1)}$ there exists a finite set $X\subset [0,1]$ of points, such that for any $\{x,y\} \subset [0,1] \setminus X$, 
$f(x) \neq f(y)$. 
The resulting functions in the set $T^{(1)}$ do not retrace arc segments along their trace, except for possibly finitely many points in $X$, which we call \emph{crossings}.

Define two functions, $f,g\in T^{(1)}$, to be {\em equivalent}, denoted $f\sim g$, if there exists a continuous monotonically non-decreasing homomorphism $\alpha:[0,1]\to[0,1]$ such that $f(t)=(g\circ \alpha)(t)$ for all $t\in[0,1]$. In this way, we can define the space of \emph{unparameterized plane curves} as the quotient space $\Gamma=T^{(1)}/\sim$, which consists of the set of  \emph{equivalence classes} of the form $[f] = \{g\in T^{(1)}\mid g\sim f\}$. The resulting space, $\Gamma$, achieves the desired effect of adequately distinguishing curves with identically shaped traces, while still enabling their evaluation via parameterizations. Specifically, each equivalence class of a non-zero arc length function $[f]$ gives rise to a \emph{representative element}, used to label the class, that is never locally constant; a curve defined by $f$ is \emph{locally constant} if for some non-empty interval $(a,b)\subseteq [0,1]$ and some $c\in \mathbb{R}^2$, $f(t) = c$ for all $t\in (a,b)$. This follows by a result of \cite{Burago:2001}. 
Moving forward, the notation for equivalence classes is omitted, and each class' representative element is taken without explicit mention. This space is then equipped with the Fr\'{e}chet distance, which for two curves $C_1$ and $C_2$ is defined as
\begin{equation*}
    \dist(C_1,C_2) = \inf_{\substack{c_1\in[C_1]\\ c_2\in[C_2]}}\sup_{t\in[0,1]}\|c_1(t)-c_2(t)\|.
\end{equation*}
The distance (or difference) between two points (degenerate curves of zero arc length) trivially reduce to the distance between the points. See \cite{Alt:1995} for further discussion of Fr\'{e}chet distance. The resulting metric space, $(\Gamma,\dist(\cdot,\cdot))$, is taken to be the underlying universe of plane curves we consider.

\subsection{Preliminary Definitions}
\label{sec:preliminaries.definitions}

Our algorithm presented in Section~\ref{sec:alg} applies firstly to  \emph{polylines} (or polygonal chains) in $\mathbb{R}^2$, and by extension, to traces whose components consist of polylines, a well-known class of curves within $\Gamma$ whose definition we now recall.

\begin{definition}[Polyline]
A {\em polyline} is a piecewise-linear curve consisting of the line segments ${\overline{p_1p_2}, \overline{p_2p_3}, \ldots , \overline{p_{m-1}p_m}}$ determined by the sequence of points $(p_1,p_2 \ldots, p_m)$ in $\mathbb{R}^2$.
\end{definition}

Additionally, our definition of curve stabbing depth refers to the notions of a stabbing ray and stabbing number, which are defined.

\begin{definition}[Stabbing Number]\label{def:stabbingNumber}
Given a ray $\overrightarrow{q_\theta}$ rooted at a point $q$ in $\mathbb{R}^2$ that forms an angle $\theta$ with the $x$-axis, the {\em stabbing number} of $\overrightarrow{q_\theta}$ relative to a set $\mathcal{C}$ of plane curves, denoted $\stab_{\cal C}(\overrightarrow{q_\theta})$, is the number of elements in $\mathcal{C}$ intersected by $\overrightarrow{q_\theta}$; i.e., 
\begin{equation}
    \stab_\mathcal{C}(\overrightarrow{q_\theta}) = |\{C\in\mathcal{C}\mid C\cap \overrightarrow{q_\theta}\neq \emptyset\}|.
\end{equation}
\end{definition}

General position for points and line segments is a fundamental concept throughout computational geometry, which is utilized for simplifying arguments by avoiding the discussion surrounding singular edge cases which are trivially dealt with in practice. In our study of smooth curves, we make use of a similar notion, which we call {\em subpath uniqueness} for curves, for specifying how ``nice'' curves intersect. 

\begin{definition}[Subpath Uniqueness]
Two curves $C_1,C_2\in\Gamma$ are said to be unique with respect to their subpaths, 
if for any two arcs with non-zero arc length, $C'_1 \subseteq C_1$ and $C'_2 \subseteq C_2$, $\dist(C'_1, C'_2) > 0$.
\end{definition}

That is, for any two curves $C_1$ and $C_2$ with unique subpaths,
$C_1 \cap C_2$ is a (possibly empty) discrete set of points in $\mathbb{R}^2$.

\section{Curve Stabbing Depth}
\label{sec:defs}

Having established required preliminary notions, we introduce the definition of our depth measure for plane curves, along with subsidiary definitions used during its analysis.

\subsection{Definition}
\label{sec:defs.definition}

\begin{definition}[Curve Stabbing Depth]\label{def:curveDepth}
Given a plane curve $Q \in \Gamma$ and a set $\mathcal{C} \subseteq \Gamma$ of plane curves, the {\em curve stabbing depth} of $Q$ relative to $\mathcal{C}$, denoted $D(Q,\mathcal{C})$, is 
\begin{equation}
\small
    D(Q,\mathcal{C}) = \frac{1}{\pi L(Q)}\int_{q\in Q}\int_{0}^{\pi} \min\{\stab_{\mathcal{C}}(\overrightarrow{q_\theta}),\stab_{\mathcal{C}}(\overrightarrow{q_{\theta+\pi}})\} \,d\theta\, dq,  \label{eq:depth}
\end{equation}
where $L(Q)=\int_{q\in Q} ds$ denotes the arc length of $Q$ as calculated by the line integral along the components of $Q$, and $q$ is a point along $Q$.
\end{definition}

We note that, as $L(Q)$ approaches zero (the curve $Q$ becomes a point $q$), the value
of Eq.~\eqref{eq:depth} approaches
\begin{equation}
\small
    D(q,\mathcal{C}) = \frac{1}{\pi}\int_{0}^{\pi}\min\{\stab_{\mathcal{C}}(\overrightarrow{q_\theta}),\stab_{\mathcal{C}}(\overrightarrow{q_{\theta+\pi}})\} \,d\theta. \label{eq:pointDepth}
\end{equation}

Curve stabbing depth Eq.~\eqref{eq:depth} corresponds to the average depth of all points $q \in Q$, where the depth of a single point $q$ relative to $\mathcal{C}$, given by Eq.~\eqref{eq:pointDepth}, is the average stabbing number in all directions $\theta$ around $q$, and for each $\theta$, either the stabbing number of the ray $\overrightarrow{q_\theta}$ or its reflection $\overrightarrow{q_{\theta+\pi}}$ is counted; i.e., 
\begin{equation}
	D(Q,\mathcal{C}) = \frac{1}{\pi}\int_{q\in Q}D(q,\mathcal{C}).
\end{equation}
This generalizes the one-dimensional notion of depth that counts the fewest number of elements in a set that are less than or greater than the query point (outward rank); taking $\mathcal{C}$ as a set of points in one dimension, $D(Q,\mathcal{C})$ coincides with common notions of depth, such as Tukey depth \cite{Tukey:1975}, Convex Hull Peeling Depth \cite{Barnett:1976}, Integrated Rank-Weighted Depth \cite{Durocher:2019}, and Majority Depth \cite{liu:1993}. Consequently, a set $\mathcal{C}$ can be ordered (its elements ranked) according to their individual depths, that is, according to the curve stabbing depth $D(C,\mathcal{C})$ of each $C\in \mathcal{C}$ relative to $\mathcal{C}$. 

\subsection{Wedges and Tangent Points}
\label{sec:defs.wedge}

We now introduce concepts used in the algorithm presented in Section~\ref{sec:alg} to calculate curve stabbing depth. 

As a ray $\overrightarrow{q_\theta}$ rotates about a point $q$, $\stab_{\cal C}(\overrightarrow{q_\theta})$ partitions the range $\theta \in [0,\pi)$ into intervals such that for all values of $\theta$ in a given interval, $\overrightarrow{q_\theta}$ intersects the same subset of $\cal C$. These intervals partition the plane around $q$ into {\em wedges}. We generalize this notion and define the wedges determined by a point $q$ relative to a set $\mathcal{C}$ of curves.


\begin{definition}[Wedge]\label{def:wedge}
The {\em wedge} of the curve $C$ relative to the point $q$, denoted $w(q,C)$, is the region determined by all rays rooted at $q$ that intersect $C$:
\begin{equation}
	w(q,C) = \bigcup_{\theta\in [0,2\pi)}\{\overrightarrow{q_\theta}\mid\overrightarrow{q_\theta} \cap C \neq \emptyset\}.
\end{equation}
\end{definition}

\begin{figure}[h!]
    \centering
    \includegraphics[scale=0.5]{./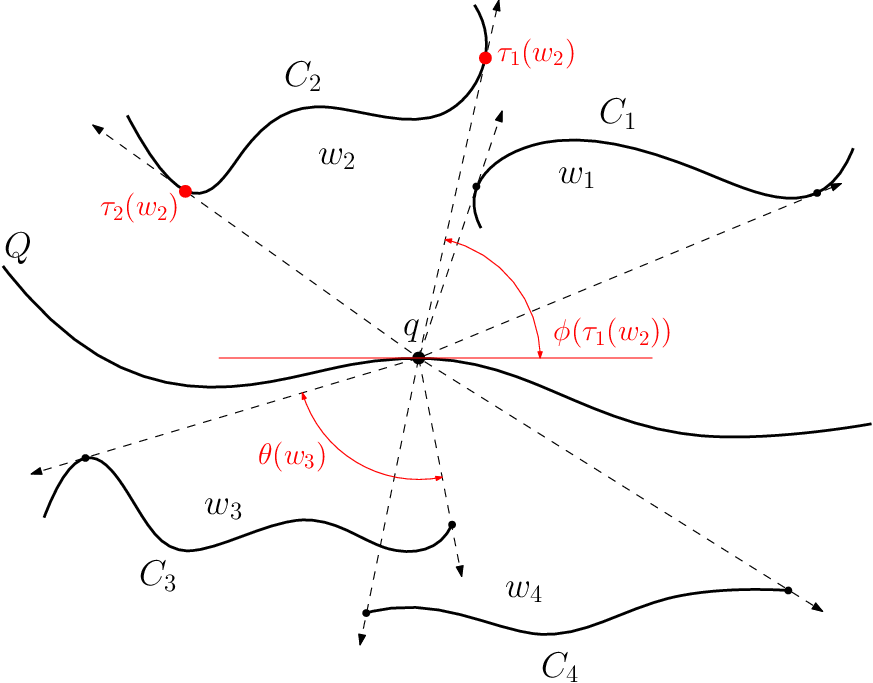}
     \caption{\small The set of wedges $w_1,w_2,w_3,$ and $w_4$ induced by curves $C_1,C_2,C_3$, and $C_4$ rooted at the point $q$ on the curve $Q$. Moving counterclockwise around $q$, the positive angle between $\tau_1(w_2)$ with the horizontal is indicated by $\phi(\tau_1(w_2))$, the tangent points of $w_2$ are labelled $\tau_1(w_2)$ and $\tau_2(w_2)$, and the internal angle of $w_3$ is highlighted by $\theta(w_3)$ with $\phi(w_3)=\phi(\tau_2(w_3))-\phi(\tau(w_3))$.}
    \label{fig:wedge}
\end{figure}

\begin{definition}[Tangent Points]\label{def:tangentPoints}
When $C \cup \{q\}$ is in general position in $\mathbb{R}^2$, the {\em tangent points} of the wedge $w=w(q,C)$, denoted $\tau(w)=\{\tau_1,\tau_2\}$, are those points of $C$ incident with the boundary of $w$; i.e., $\tau(w) = \partial w \cap C$, where $\partial w$ denotes the boundary of $w$. (If $C$ is a curve intersected by all rays rooted at $q$, the tangent points of $w(q,C)$ are taken to be coincident on $C$, with an internal wedge angle of $2\pi$ radians.) $\tau_1(w)$ denotes the tangent point that is the most clockwise of the two around $q$. The angles between the horizontal and tangent points of $w$ are denoted respectively by $\phi(\tau_1(w))$ and $\phi(\tau_2(w))$, with $\phi(w)$ denoting the interior angle of $w$. 
\end{definition}

See Figures~\ref{fig:wedge} and \ref{fig:wedgeAngle}.
The circular sequence of wedges determines an ordering of the curves stabbed about a given point $q$. Moreover, for a given set $\mathcal{C}$ of curves and associated set $\mathcal{W}_\mathcal{C}$ of wedges  rooted at a common point $q$, 
\begin{equation*}\label{eq:stabObs}
\small
\stab_{\mathcal{C}}(\overrightarrow{q_\theta})=|\{w\in \mathcal{W}_\mathcal{C} \mid \theta \in [\phi(\tau_1(w)),\phi(\tau_2(w))] \}|.
\end{equation*}

That is, $\stab_{\cal C}(\overrightarrow{q_\theta})$ is the number of wedges that contain the ray $\overrightarrow{q_\theta}$, where each wedge is associated with a curve in $\cal C$. See Figure~\ref{fig:wedge}.

\subsection{Computing Wedge Angles}\label{sec:wedgeAngles}
Our algorithm for computing curve stabbing depth requires calculating the interior angle $\phi(w)$ of each wedge $w$, which we now describe. We consider two cases for the relative positions of a given query line segment $Q$, a curve $C$, and the wedge $w(q,C)$ rooted at a point $q$: (Case 1) when $q\not\in \ch(C)$, where $\ch(C)$ denotes the convex hull of $C$, i.e., $Q$ does not pass through the interior of $C$, and (Case 2) when $q \in \ch(C)$. When points and curves are in general position with unique subpaths, $C$ cannot coincide with a bounding edge of $w$. See Figure~\ref{fig:wedgeAngle}.

\begin{figure}[h]
    \centering
    \begin{subfigure}{.4\textwidth}
        \centering
        \includegraphics[scale=0.6]{./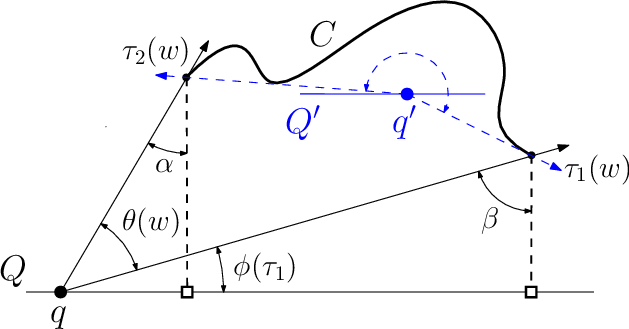}
        \caption{\small (Case 1) A curve $C$ positioned entirely above a query line segment $Q$, and (Case 2) positioned midway through a query curve $Q'$.}
        \label{fig:wedgeAngleAbove}
    \end{subfigure}
    \hfill
    \begin{subfigure}{.4\textwidth}
        \centering
        \includegraphics[scale=0.6]{./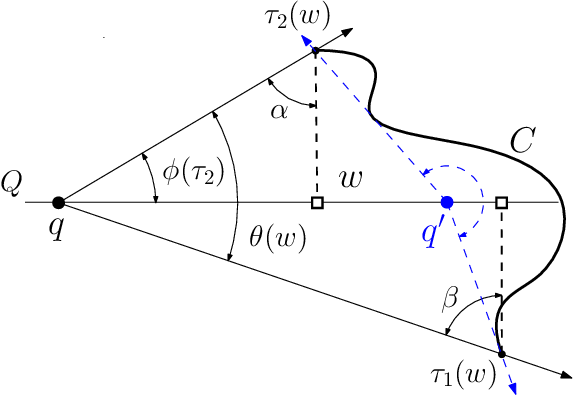}
        \caption{\small (Case 1) A curve $C$ crossing in front of a point $q$ along a query line segment $Q$ with $q\not\in\ch(C)$, and (Case 2) a point $q'\in\ch(C)$ further along $Q$.}
        \label{fig:wedgeAngleCross}
    \end{subfigure}
    \caption{\small Two ways a query line segment $Q$ and a wedge rooted at a point on $Q$ can be arranged under general position. Case~1 is drawn in black while Case~2 is outlined in blue.}
    \label{fig:wedgeAngle}
\end{figure}

In Case~1, when $C$ lies entirely above or below $Q$ the angles formed between the tangent points, root, and horizontal can be evaluated as
\begin{align}
\small
    \theta(w)+\phi(\tau_1) &= \frac{\pi}{2}-\alpha = \frac{\pi}{2}-\tan^{-1}\left(\left|\frac{q_x-\tau_2(w)_x}{\tau_2(w)_y-q_y}\right|\right),\\
    \phi(\tau_1) &= \frac{\pi}{2} - \beta = \frac{\pi}{2}-\tan^{-1}\left(\left|\frac{q_x-\tau_1(w)_x}{\tau_1(w)_y-q_y}\right|\right).
\end{align}
Where the interior angle of $w$ is found to be
\begin{equation}
    \small
    \theta(w) = \tan^{-1}\left(\frac{q_x-\tau_2(w)_x}{\tau_2(w)_y-q_y}\right)\\-\tan^{-1}\left(\frac{q_x-\tau_1(w)_x}{\tau_1(w)_y-q_y}\right).
    \label{eq:interiorAngleAbove}
\end{equation}
When $C$ crosses in front of $Q$, as illustrated in Figure~\ref{fig:wedgeAngleCross}, we calculate 
\begin{equation}
\small
    \theta(w) = \pi - \bigg|\tan^{-1}\left(\frac{q_x-\tau_2(w)_x}{\tau_2(w)_y-q_y}\right)\\
    + \tan^{-1}\left(\frac{q_x-\tau_1(w)_x}{\tau_1(w)_y-q_y}\right)\bigg|.
    \label{eq:interiorAngleCross}
\end{equation}
Once $q$ enters $\ch(C)$, we transition to Case~2, in which the calculations are analogous to those of Case~1, except for modifications needed to account for taking an angle greater than $\pi$ radians, as shown in Figure~\ref{fig:wedgeAngleCross} in blue. Every case considered by our algorithm reduces to Case~1 or Case~2. We sometimes limit discussion to instances of Case~1 depicted in Figure~\ref{fig:wedgeAngleAbove} to simplify the presentation; our results apply to both cases.

\begin{definition}[Circular Partition]\label{def:circularPartition}
The {\em circular partition} induced by the set of wedges $\mathcal{W}_\mathcal{C}=\{w_1,w_2,\dots,w_n\}$ rooted at a common point $q$ is the sequence $0 = \theta_0 < \theta_1 < \cdots < \theta_{4n} < 2\pi$ of angles, corresponding to the ordered sequence of bounding rays of wedges in $\mathcal{W}_\mathcal{C}$; i.e., it is the ordered sequence of values in
\begin{equation*}
    \bigcup_{w\in \mathcal{W_C}} \left\{\begin{array}{l}
        \phi(\tau_1(w)), \phi(\tau_1(w))+\pi \bmod 2\pi,\\ \phi(\tau_2(w)), \phi(\tau_2(w))+\pi\bmod 2\pi\end{array}\right\}.
\end{equation*}
Denote this sequence by $\sigma(\mathcal{W}_\mathcal{C})=(\theta_0,\theta_1,\dots,\theta_{4n})$. 
\end{definition}


\subsection{Wedge Partitioning and Invariance}
Applying Equation~\eqref{eq:stabObs} to Definition~\ref{def:circularPartition},  we arrive at the following observation (see Figure~\ref{fig:wedgePartition}):

\begin{figure}[!ht]
    \centering
    \includegraphics[scale=0.5]{./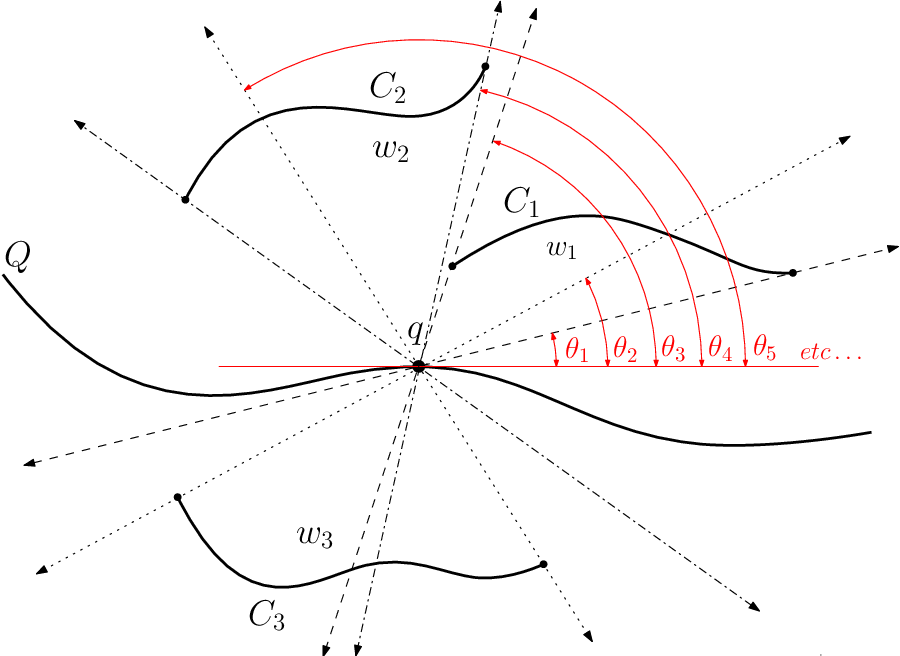}
    \caption{\small A configuration similar to that shown in Figure~\ref{fig:wedge} for three curves $C_1,C_2$, and $C_3$ is depicted, with their respective wedge boundaries extended through the origin. The circular partition induced is shown by the sequence of angles towards the right-hand side of the figure.}
    \label{fig:wedgePartition}
\end{figure}

\begin{observation}\label{obs:stabbing}
Given a set $\mathcal{W}_\mathcal{C}$ of wedges and induced partition $\sigma(\mathcal{W}_\mathcal{C})=(\theta_0,\theta_1,\dots,\theta_{4n})$ for a given point $q$ and set $\mathcal{C}$ of curves, for every $i \in \{0,1, \ldots, 4n-1\}$ and every $\phi_1,\phi_2 \in (\theta_i,\theta_{i+1})$, the set of curves in $\mathcal{C}$ intersected by $\overrightarrow{q_{\phi_1}}$ is the same as that intersected by $\overrightarrow{q_{\phi_2}}$.
\end{observation}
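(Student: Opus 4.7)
The plan is to prove the observation wedge-by-wedge: for each $w \in \mathcal{W}_\mathcal{C}$, I will show that the property ``$\overrightarrow{q_\theta}$ stabs the curve $C$ associated with $w$'' is constant as $\theta$ varies over a single open interval $(\theta_i,\theta_{i+1})$ of the circular partition. By Equation~\eqref{eq:stabObs}, the set of curves in $\mathcal{C}$ intersected by $\overrightarrow{q_\theta}$ is exactly the set of wedges containing $\overrightarrow{q_\theta}$, so per-wedge invariance immediately yields invariance of the entire stabbed subset.

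First, I would fix an arbitrary wedge $w = w(q,C) \in \mathcal{W}_\mathcal{C}$. By Definitions~\ref{def:wedge} and \ref{def:tangentPoints}, the set of angles $\theta$ for which $\overrightarrow{q_\theta}$ meets $C$ is the closed angular arc $I_w$ swept counterclockwise from $\phi(\tau_1(w))$ to $\phi(\tau_2(w))$ (and is the full circle in the degenerate case where every ray rooted at $q$ stabs $C$, which is handled trivially since $I_w$ has no proper boundary in that case). Both endpoints of $I_w$ belong to $\sigma(\mathcal{W}_\mathcal{C})$ by Definition~\ref{def:circularPartition}.

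Next, because $\theta_i$ and $\theta_{i+1}$ are consecutive elements of the strictly increasing sequence $\sigma(\mathcal{W}_\mathcal{C})$, no element of $\sigma(\mathcal{W}_\mathcal{C})$ lies in the open interval $(\theta_i,\theta_{i+1})$; in particular, neither $\phi(\tau_1(w))$ nor $\phi(\tau_2(w))$ does. Consequently, the open interval $(\theta_i,\theta_{i+1})$ cannot straddle the boundary of $I_w$, so it is either contained in $I_w$ or disjoint from $I_w$. Either way, $\overrightarrow{q_{\phi_1}} \cap C \neq \emptyset$ if and only if $\overrightarrow{q_{\phi_2}} \cap C \neq \emptyset$, for any $\phi_1,\phi_2 \in (\theta_i,\theta_{i+1})$. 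Applying this argument to every wedge $w \in \mathcal{W}_\mathcal{C}$ and invoking Equation~\eqref{eq:stabObs} completes the proof.

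The only real subtlety I anticipate is bookkeeping for the cyclic structure of angles modulo $2\pi$ (so that ``counterclockwise from $\phi(\tau_1(w))$ to $\phi(\tau_2(w))$'' is unambiguous) together with the degenerate case noted above; both are handled by fixing a consistent orientation around $q$. I would also remark in passing that the extra partition values $\phi(\tau_j(w)) + \pi \bmod 2\pi$ included in Definition~\ref{def:circularPartition} play no role in this observation: they merely refine $\sigma(\mathcal{W}_\mathcal{C})$ further, which can only preserve the invariance, and they are included so that the same partition is simultaneously compatible with the reflected rays $\overrightarrow{q_{\theta+\pi}}$ appearing in Equation~\eqref{eq:depth}.
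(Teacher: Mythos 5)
Your proposal is correct and is essentially the paper's own argument: the paper presents this observation as following directly from applying Equation~\eqref{eq:stabObs} to Definition~\ref{def:circularPartition}, i.e., exactly your wedge-by-wedge reasoning that every wedge boundary angle is a partition value, so an open interval between consecutive partition values is either contained in or disjoint from each wedge's angular arc. Your write-up merely makes explicit the details the paper leaves implicit (the closed-arc structure of each wedge's direction set, the degenerate full-circle case, and the fact that the antipodal angles only refine the partition).
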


Observation~\ref{obs:stabbing} remains true for any point $q$ in general position relative to $\cal C$, where $q$ moves along $Q$ within a bounded neighbourhood: given a curve $Q$ and a set $\cal C$ of curves, for each point $q$ in general position on $Q$, the relative ordering of wedge boundaries in the circular partition of $q$ remains unchanged when $q$ moves along some interval of $Q$. By partitioning $Q$ into such cyclically invariant segments, this property allows us to calculate the curve stabbing depth of $Q$ relative to $\cal C$ discretely. More formally:

\begin{figure}[!ht]
    \centering
    \includegraphics[scale=0.5]{./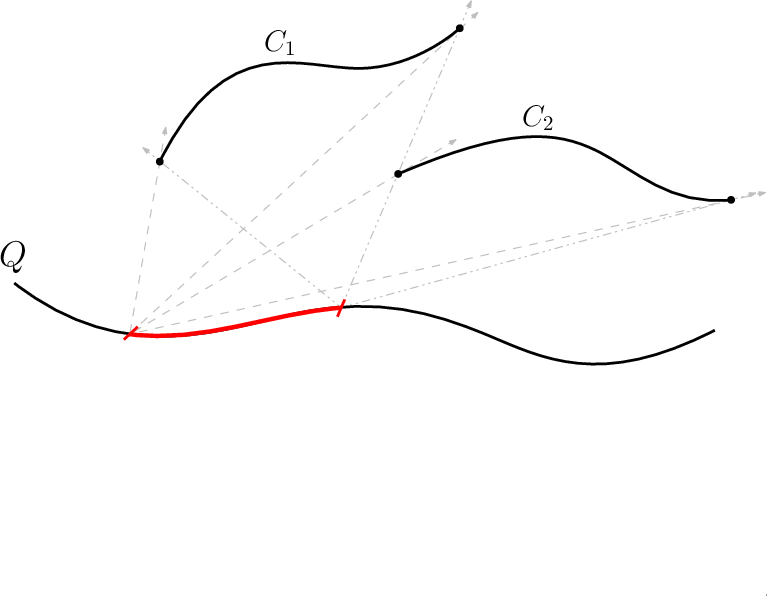}
    \caption{\small A configuration similar to that shown in Figure~\ref{fig:wedge} for two curves $C_1$ and $C_2$ is depicted, the highlighted segment being cyclically invariant with respect to the given population, as can be seen by inspecting the wedge boundaries}
    \label{fig:wedgeSegment}
\end{figure}

\begin{definition}[Cyclically Invariant Segments]\label{def:cyclicalInvaraince}
A segment along a curve that maintains the same cyclic ordering of boundaries within  the circular partitions of each point along its length, is called {\em cyclically invariant}. Specifically, for a given curve $Q$, a segment $I\subseteq Q$ is {\em cyclically invariant} provided $\sigma_q(\mathcal{W}_\mathcal{C})$ has the same ordering of wedge boundaries as $\sigma_{q'}(\mathcal{W}_\mathcal{C}')$, for all $\mathcal{W}_\mathcal{C}$ and $\mathcal{W_\mathcal{C}'}$ defined relative to any $q,q'\in I$ respectively. 
\end{definition} 

See Figure~\ref{fig:wedgeSegment}.
Clearly such segments exist when $\mathcal{P}\cup \{Q\}$ is a set of polylines in $\mathbb{R}^2$. This property might not hold more generally for all plane curves\footnote{We use $\mathcal{C}$ to denote a general set of plane curves, and $\mathcal{P}$ to denote a set of polylines in $\mathbb{R}^2$.}. For the remainder of this article, we assume $\mathcal{P}\cup \{Q\}$ is a set of polylines. 

\begin{lemma}[Invariant Segments along Polylines]\label{lem:invariantSegments}
Given a polyline $Q$ and a set $\cal P$ of polylines, $Q$ can be partitioned into line segments, each of which is cyclically invariant with respect to $\cal P$. 
\end{lemma}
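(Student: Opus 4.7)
The plan is to start from the natural decomposition of $Q$ into its constituent line segments and show that each such segment can be further refined into finitely many cyclically invariant line subsegments.

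First, I would observe that because every $P\in \mathcal{P}$ is piecewise linear, the tangent points $\tau_1(w(q,P))$ and $\tau_2(w(q,P))$ of the wedge $w(q,P)$ must be vertices of $P$ for any $q \notin P$: on the interior of any edge of $P$ the angular coordinate seen from $q$ varies monotonically, so extremal angular directions from $q$ are attained only at the vertices of $P$ (including its two endpoints). Consequently, every bounding ray appearing in $\sigma_q(\mathcal{W}_\mathcal{P})$ passes from $q$ through some vertex of a polyline in $\mathcal{P}$.

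Second, let $s$ be a single line segment of $Q$ and consider how $\sigma_q(\mathcal{W}_\mathcal{P})$ evolves as $q$ varies continuously along $s$. Using the explicit angle formulas of Section~\ref{sec:wedgeAngles}, each boundary angle depends continuously on $q$ (away from a discrete set of configurations), so the cyclic ordering of these angles changes only at discrete events of three types: (i) a wedge's tangent point switches between two vertices $v,v'$ of the same polyline, which happens precisely when $q,v,v'$ become collinear; (ii) two boundary rays (possibly after reflection through $q$) align, which requires $q$ to be collinear with the corresponding pair of tangent vertices; and (iii) $q$ itself meets a polyline of $\mathcal{P}$ (passing through a vertex or crossing an edge), altering the structure of the affected wedge.

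Third, I would bound the number of such events. Let $N$ denote the total number of vertices across $\mathcal{P}$. Events of types (i) and (ii) are associated with the at most $\binom{N}{2}$ lines determined by pairs of such vertices, while events of type (iii) correspond to the finitely many intersection points of $s$ with the edges of $\mathcal{P}$. Each such line or edge crosses $s$ in at most one point; degenerate cases in which $s$ lies along such a line or edge can be absorbed by applying the same subdivision argument to the sub-portions on either side of the overlap. Introducing all such event points as additional subdivision vertices refines $s$ into a finite sequence of line segments, and between consecutive events no event of types (i)--(iii) occurs, so by Definition~\ref{def:cyclicalInvaraince} each refined subsegment is cyclically invariant.

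The main obstacle is verifying that events (i)--(iii) are exhaustive — that is, confirming the map $q \mapsto \sigma_q(\mathcal{W}_\mathcal{P})$ is truly constant on every open interval of $s$ free of such events. This requires combining the continuity of the angular coordinates of vertices as seen from $q$ with the earlier characterization that tangent points are vertices, and then carefully cataloguing a short list of degenerate configurations ($q$ coincident with a vertex of $\mathcal{P}$, three vertices from possibly different polylines collinear with $q$ at a single instant, $s$ overlapping an edge of $\mathcal{P}$, etc.), each contributing only finitely many extra subdivision points and therefore not affecting finiteness of the resulting partition.
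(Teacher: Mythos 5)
Your proof is correct, but it takes a genuinely different route from the paper's. The paper fixes a segment $I$ of $Q$ (splitting any polyline that crosses the supporting line $L_I$ into pieces on either side) and argues that two tangent points belonging to wedges of $P_i$ and $P_j$ can swap in the circular order only when $q$ becomes collinear with a common tangent of $\ch(P_i)$ and $\ch(P_j)$, i.e., a line meeting each hull in a single boundary point; since each pair of polylines admits at most four such tangents, only finitely many points of $L_I$ can trigger a reordering. You instead catalogue events at the level of individual vertices: tangent-point identity switches, alignments of two boundary rays (or a ray with a reflection), and crossings of $q$ with $\mathcal{P}$, all of which lie on the $\binom{N}{2}$ lines spanned by pairs of polyline vertices or on the finitely many edge intersections with the segment, where $N \in O(nm)$ is the total vertex count. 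Both event sets are exhaustive and both deliver the finiteness the lemma needs, so either argument stands. The trade-off is this: your event set is coarser --- $O(n^2m^2)$ candidate lines versus the paper's $O(n^2)$ common tangents --- but your argument is more elementary and actually supplies a step the paper leaves implicit, namely the continuity-plus-vertex-characterization reasoning showing that the cyclic order cannot change between events; the paper asserts the common-tangent characterization without that level of justification. Conversely, the paper's sharper characterization is not merely aesthetic: it is reused in Section~\ref{sec:alg}, where the common tangents define the update-point set $S$ (cf.\ Lemma~\ref{lem:commonTangents} and Lemma~\ref{lem:stabbingNumbers}) and the $O(n^2)$ bound on such events enters the running-time analysis, so substituting your $O(n^2m^2)$ event set there would weaken the stated complexity, even though it is perfectly sufficient for Lemma~\ref{lem:invariantSegments} itself.
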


\begin{proof}
Consider any line segment $I$ of $Q$, and assume without loss of generality that every polyline of $\mathcal{P}=\{P_1,P_2,\dots,P_n\}$ lies on one side the supporting line $L_I$ of $I$. An analogous argument can be applied to polylines that do not adhere to this condition, as any polyline that crosses $L_I$ can be partitioned into separate polylines that are on either side of $L_I$. Two tangent points, say $\tau_1$ and $\tau_2$ associated with $P_i$ and $P_j$ respectively, in a circular partition can only undergo a change in their cyclical ordering when the reference point (root) $q$ becomes collinear with one of the common tangents between the pair of polylines that define the associated wedges; that is, $\tau_1$ and $\tau_2$ may only swap relative positions when $q$ becomes collinear with an element of 
\begin{equation*}
    \tau(\{P_i,P_j\}) = \{\text{all lines } l \mid (l\cap \ch(P_i))\cup(l\cap \ch(P_j)) = \{p_{i,i'},p_{j,j'}\} \},
\end{equation*}
where $p_{i,i'}$ and $p_{j,j'}$ are points on $\partial\ch(P_i)$ and $\partial\ch(P_j)$ respectively. Consequently, as at most four such tangents exist for each pair of polylines, the set of points along $L_I$ that trigger change in wedge orderings must be finite. Therefore, $L_I$, and consequently $I$, can be partitioned into a finite number of cyclically invariant segments. Moreover, each such segment along $I$ is a maximal line segment on $Q$ between two consecutive points that trigger changes.
\end{proof}

By Observation~\ref{obs:stabbing} and Lemma~\ref{lem:invariantSegments}, the double integral in Eq.~\eqref{eq:depth} can be reformulated as a sum of integrals measuring the total angular area swept out by the wedges of $\mathcal{P}$ with stabbing number weights along all cyclically invariant segments. This reformulation, which is made explicit in Section~\ref{sec:algCalculation}, is possible due to the fact that stabbing numbers remain constant within circular partitions which intern remain unchanged along each invariant segment.

\section{Computing Curve Stabbing Depth for Polylines}
\label{sec:alg}
In this section we develop an algorithm for computing the exact curve stabbing depth of a given $m$-vertex polyline $Q$ relative to a given set $\cal P$ of $n$ polylines in the plane, each of which has $O(m)$ vertices. We assume that vertices in $\{Q\} \cup \mathcal{P}$ are in general position in $\mathbb{R}^2$.

\subsection{Algorithm Overview}
\label{sec:alg.overview}
The algorithm partitions $Q$ into cyclically invariant segments, computes the set of wedges and corresponding stabbing numbers for each cyclically invariant segment, each of which is integrated and summed along the length of $Q$. 


In particular, for each segment $Q_i$ of $Q$ and each polyline $P_j$ in $\mathcal{P}$, the algorithm constructs the convex hulls of the connected components of $P_j \setminus (P_j \cap Q_i)$. $Q_i$ is then partitioned into $O(m)$ cyclically invariant segments by identifying intersections between $Q_i$ and a line tangent to one of the convex hulls. This is repeated for each polyline in $\mathcal{P}$ to further partition $Q_i$ into $O(nm)$ cyclically invariant segments. The internal angles of the wedges rooted along each cyclically invariant segment are likewise determined by the polylines in $\mathcal{P}$. For each cyclically invariant segment, the internal angle and the number of polylines of $\mathcal{P}$ stabbed in each wedge relative to a point $q$ on $Q$ is maintained using a pseudoline arrangement associated with each convex hull. The curve stabbing depth along $Q_i$ is then calculated by integrating wedge angles along the sequence of cyclically invariant segments of $Q_i$.

\subsection{Preprocessing}\label{sec:algPreprocessing}
This section describes the data structures used for computing the exact curve stabbing depth of a polyline $Q$ relative to a given set $\mathcal{P}$ of polylines, where $Q=(q_1,q_2\dots,q_m) \in (\mathbb{R}^2)^m$ and $\mathcal{P}=\{P_1,P_2,\dots,P_n\}$, with $P_i=(p_{i1},p_{i2},\dots,p_{im}) \in (\mathbb{R}^2)^m$ for each $i \in \{1, \ldots, n\}$.

\subsubsection{Building The Hierarchy of Convex Hulls}
As the tangent points of a wedge $w(q,P_i)$ are dependent on the position of the root $q$ (which moves along the length of $Q$) we require a method for keeping track of the current tangent points which can quickly provide updates in response to changes in the position of $q$ on $Q$. A hierarchy of nested convex hulls built atop the polylines $\mathcal{P}$ can serve this purpose.

To construct this data structure, begin by computing the convex hull of each polyline in $\mathcal{P}$; let $\mathcal{H}$ denote the set of all such convex hulls. Having constructed $\mathcal{H}$, associate to each $\ch(P_i)\in \mathcal{H}$ the convex hulls formed from the subpaths of $P_i$ that result from cutting $P_i$ by the segments of $Q$. In this way, each polyline $P_i$ has associated to it an outermost convex hull $\ch(P_i)$ and a collection of convex hull sets totaling at most $O(m^2)$ elements, $\mathcal{H}(P_i)$, whose elements, denoted $\mathcal{H}(P_i,I)$, contain the convex hulls of subpaths of $P_i$ that result from $P_i$ being cut by a specific segment $I$ of $Q$.

\begin{figure}
    \centering
    \includegraphics[scale=0.8]{./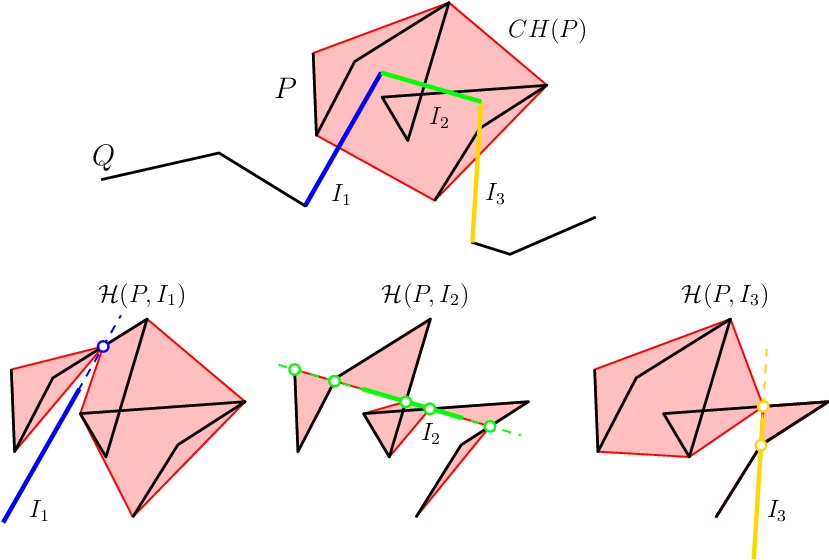}
    \caption{The hierarchy of convex hulls associated with a polyline $P$ as indexed by segments $I_1$, $I_2$, and $I_3$ of $Q$. Indicated in red are the convex hulls that encapsulate subpaths of $P$ generated by the intersecting segment of $Q$. Note the singular dangling segment formed by cutting $P$ by $I_3$ is considered a degenerate convex hull.}
    \label{fig:hierarchyConvexHulls}
\end{figure}

\begin{lemma}\label{lem:convexHullHierachy}
This hierarchy of convex hulls can be built in $O(nm^2\log m)$ worst-case time. 
\end{lemma}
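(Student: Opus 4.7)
The plan is to build the hierarchy in two stages: first compute the outer hull $\ch(P_i)$ for each polyline $P_i$, and then, for every pair $(P_i, I)$ of a polyline and a segment $I$ of $Q$, compute the convex hulls of the subpaths of $P_i$ produced by cutting at $P_i \cap I$. I will argue that the cost per pair is $O(m\log m)$; multiplying by the $n$ polylines and the $O(m)$ segments of $Q$, plus $O(nm\log m)$ for the outer hulls, gives $O(nm^2\log m)$ overall.

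For a fixed pair $(P_i, I)$, I first locate $P_i \cap I$ by testing each of the $O(m)$ edges of $P_i$ against $I$ in constant time, for a total of $O(m)$ time. Under general position, each edge of $P_i$ meets $I$ in at most one point, so $P_i \cap I$ has at most $m-1$ points and partitions $P_i$ into at most $m$ contiguous subpaths $P_i^{(1)}, \ldots, P_i^{(k)}$ of sizes $m_1, \ldots, m_k$ with $\sum_{j=1}^{k} m_j = O(m)$ (each cut contributes at most two endpoint copies to the adjacent pieces). I then compute each $\ch(P_i^{(j)})$ independently using a standard convex hull algorithm in $O(m_j \log m_j)$ time. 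Since $m_j \le m$ for every $j$,
\[
\sum_{j=1}^{k} m_j \log m_j \;\le\; \log m \sum_{j=1}^{k} m_j \;=\; O(m\log m),
\]
yielding the claimed per-pair bound.

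The main thing to verify is precisely this per-pair bound: because the number $k$ of subpaths can be as large as $m$, one must check that the convex-hull work still aggregates to $O(m\log m)$, which the inequality above handles by exploiting that every individual piece has size at most $m$. The remaining steps --- indexing each $\ch(P_i^{(j)})$ by the pair $(P_i, I)$ to form $\mathcal{H}(P_i, I)$, assembling these into $\mathcal{H}(P_i)$ and then $\mathcal{H}$, and handling the degenerate case of a subpath consisting of a single segment or vertex (treated as its own convex hull) --- are routine bookkeeping and cost at most $O(m)$ per pair, already absorbed by the convex-hull computation time.
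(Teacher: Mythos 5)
Your proof is correct and follows essentially the same route as the paper's: outer hulls in $O(nm\log m)$ time, then for each of the $O(nm)$ polyline--segment pairs, $O(m)$ time to find the cut points and an aggregation argument showing the subpath hulls cost only $O(m\log m)$ per pair. The only cosmetic difference is the final inequality --- you bound $\sum_j m_j\log m_j$ by $\log m\sum_j m_j$ directly, whereas the paper invokes superadditivity of $x\log x$ to get $O(3m\log(3m))$; both give the same per-pair bound.
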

\begin{proof}
Constructing $\mathcal{H}$ takes $O(nm\log m)$ time as each individual convex hull can be computed in $O(m\log m)$ time \cite{Chan:1996,Graham:1972} and $\mathcal{P}$ consists of $n$ polylines. Moreover, construction of the convex hulls subdivided by subpaths of polylines takes $O(nm^2\log m)$ additional time, on top of the $O(nm^2)$ time needed in the worst case to compute all the points of intersection. To see this, observe that $Q$ can intersect any $P_i\in\mathcal{P}$ at most $m^2$ times. Any particular segment of $Q$ contributes at most $m$ of these points, and results in at most the same number of subpaths of $P_i$ formed by this intersection. Let $m_1,m_2,\dots,m_m$ denote the number of vertices on each such subpath. We know $m_1+m_2+\dots + m_m  = 3m$, with $2m$ points resulting from double counting of division points. The time needed to build the convex hulls of these subpaths is given by 
\begin{equation*}
    \sum_{i=1}^m O(m_i\log m_i) \subseteq O(3m\log (3m)) = O(m\log m),
\end{equation*}
which follows from the inequality $a\log a + b\log b\leq (a+b)\log (a+b)$ for any $a,b\geq 0$. Thus, constructing the inner convex hulls, $\mathcal{H}(P_i,I)$ for each $I$, takes at most $O(nm^2\log m)$ time.
\end{proof}

\subsubsection{Dynamically Maintaining Wedge Tangent Points}
The primary method used to determine current tangent point pairs of a convex hull $\ch(P_i)$ (or $\ch(P_i,I)\in\mathcal{H}(P_i,I)$), for which $q\not\in\ch(P_i)$ (resp., $q\not\in\ch(P_i,I)$ for any $\ch(P_i,I)\in\mathcal{H}(P_i,I)$), is based on whether or not two paired curves, equal to the angles between $q$ and vertices of the convex hulls boundary as defined in Section~\ref{sec:wedgeAngles}, have the largest difference between them at point $q$. Such curves display several important properties necessary for the ensuing algorithm.

\begin{lemma}[Pseudolines]\label{lem:pseudolines}
The angular curves associated with each segment of a polyline (or vertex of a convex hull) relative to a point $q$ along a segment $I$ of $Q$, defined by arctangents (as seen in Section~\ref{sec:wedgeAngles}), are pseudolines, curves where any two intersect exactly once, when taken on the supporting line $L_I$ of $I$.
\end{lemma}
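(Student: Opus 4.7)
The plan is to transform the claim into an elementary statement about arctangent functions on the line, then verify the pseudoline property by direct algebraic manipulation.

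First I would choose coordinates so that the supporting line $L_I$ is the $x$-axis, parameterizing points on $L_I$ by $q(t)=(t,0)$. As in the proof of Lemma~\ref{lem:invariantSegments}, we may assume without loss of generality that every polyline of $\mathcal{P}$ currently under consideration lies on one side of $L_I$, so each convex hull vertex $v=(a_v,b_v)$ of relevance satisfies $b_v>0$. Combining the formulas from Section~\ref{sec:wedgeAngles} (dropping the absolute value to retain signs), the angular curve associated with $v$ is
$$\phi_v(t)=\frac{\pi}{2}-\arctan\!\left(\frac{t-a_v}{b_v}\right).$$
Because $b_v>0$, the map $t\mapsto\phi_v(t)$ is everywhere defined, continuous, and strictly monotone decreasing, so its graph in the $(t,\phi)$-plane is a simple unbounded arc satisfying the topological prerequisites for being a pseudoline.

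Next I would establish the single-crossing property. For two vertices $v_1=(a_1,b_1)$ and $v_2=(a_2,b_2)$ with $b_1,b_2>0$, the equation $\phi_{v_1}(t)=\phi_{v_2}(t)$ reduces, by the injectivity of $\arctan$, to
$$b_2(t-a_1)=b_1(t-a_2),\qquad\text{i.e.,}\qquad (b_2-b_1)\,t=b_2a_1-b_1a_2.$$
Under the general position assumption on $\{Q\}\cup\mathcal{P}$ we have $b_1\neq b_2$, and the equation admits the unique solution $t^\ast=(b_2a_1-b_1a_2)/(b_2-b_1)$. Geometrically, $q(t^\ast)$ is the unique point of $L_I$ collinear with $v_1$ and $v_2$, i.e., the intersection of $L_I$ with the line through $v_1v_2$; this explains why the pseudoline crossings of the resulting arrangement correspond precisely to positions of $q$ at which two tangent angles swap in the circular partition, matching the triggers identified in Lemma~\ref{lem:invariantSegments}.

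The main obstacle is the potential degenerate case $b_1=b_2$, in which the line $v_1v_2$ is parallel to $L_I$ and the two angular curves are vertical translates with no crossing. This is excluded by the general position hypothesis on $\{Q\}\cup\mathcal{P}$ and, if one wishes to avoid relying on it, by an arbitrarily small perturbation of the input that preserves the combinatorial structure of the arrangement. With that caveat handled, the arrangement of all such angular curves over $L_I$ satisfies the defining property of a pseudoline arrangement, completing the proof.
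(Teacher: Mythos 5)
Your proof is correct and follows essentially the same route as the paper's: transform coordinates so $L_I$ is horizontal, reduce the crossing condition to equality of arctangent arguments, identify the unique crossing as the intersection of $L_I$ with the line through the two defining points, and exclude the parallel (non-crossing) case via general position. The only differences are cosmetic—you make the algebra explicit (which the paper defers to its Lemma on constant-time intersection computation) and add the monotonicity observation, but the key ideas coincide.
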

\begin{proof}
To begin, we must detail the form of the equations defining the curves in question. As we are partitioning a polyline $P_i$ on either side of segments of $Q$ all the curves either take one of the two possible forms discussed in Section~\ref{sec:wedgeAngles}. Without loss of generality, we may assume all the points are transformed so that the segment $I$ is collinear with the horizontal axis. From this, it follows the angular curves are of the form 
\begin{align*}
    \theta(q,p) = \frac{\pi}{2}-\tan^{-1}\left(\frac{p_x-q_x}{p_y}\right) \quad\text{ or }\quad
    \theta(q,p) = \pi+\frac{\pi}{2}-\tan^{-1}\left(\frac{p_x-q_x}{p_y}\right),
\end{align*}
depending on whether the transformed point $p=(p_x,p_y)$ lies above or below $I$. Thus, it suffices to consider only equations of the first form for the proof. It is obvious that two curves, determined by points $p_1$ and $p_2$ of this form intersect exactly at the point along $L_I$ corresponding to the intersection between $L_I$ and the extension of the segment $\overline{p_1p_2}$. Consequently, it follows that any pair of curves can intersect at most once due to the fact that the points defining these curves are selected from distinct segments along polylines in general position with unqiue subpaths; for clarity, two curves do not intersect if and only if the two points $p_1$ and $p_2$ defining them determine a line parallel to $L_I$, a condition that cannot be satisfied under the assumption of general position (see Section~\ref{sec:degenerateCases}). A similar argument holds for curves of the second variety.
\end{proof}

Continuing we see that these pseudolines obey a useful ordering principle.

\begin{lemma}[Ordering]\label{lem:pseudolineordering}
Any set of pseudolines $\mathcal{L}$ defined as in Lemma~\ref{lem:pseudolines} obeys the following ordering condition. Let $V_l$ be a vertical line to the left of the leftmost point of intersection among the elements of $\mathcal{L}$. The line $V_l$ induces a total order $\leq$ on $\mathcal{L}$, where for any $l_1,l_2\in \mathcal{L}$, the relation $l_1\leq l_2$ is true if and only if $l_1$ intersects $V_l$ below (with respect to $y$-coordinates) the intersection of $l_2$ with $V_l$. If $\mathcal{L}$ has the ordering $l_1\leq l_2\leq \cdots\leq l_n$ with respect to $V_l$, then the total order induced on $\mathcal{L}$ by a vertical line $V_r$ to the right of the rightmost point of intersection satisfies $l_1\geq l_2\geq \cdots\geq l_n$; that is, the ordering of the elements of $\mathcal{L}$ are exactly reversed.
\end{lemma}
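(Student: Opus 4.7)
The plan is to reduce this global ordering statement to a pairwise one: it suffices to show that for every pair $l_i, l_j \in \mathcal{L}$, their relative order (which one is larger along the vertical test line) is flipped between $V_l$ and $V_r$. A total order is reversed precisely when every pair of elements is transposed, so a pairwise flip implies the lemma immediately.

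To prove the pairwise flip, I would fix $l_i, l_j \in \mathcal{L}$ and consider the difference $g_{ij}(x) := l_i(x) - l_j(x)$ as a continuous function of the horizontal coordinate $x$ along $L_I$. By Lemma~\ref{lem:pseudolines} the equation $g_{ij}(x) = 0$ has a unique solution $x^{\ast}_{ij}$, and by hypothesis this unique zero lies strictly inside the open interval bounded by $V_l$ and $V_r$. Therefore $g_{ij}$ is continuous and nonvanishing on $[V_l, x^{\ast}_{ij})$ and on $(x^{\ast}_{ij}, V_r]$, so on each of these two intervals it has constant sign. The key remaining step is to verify that these two constant signs are opposite.

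The opposite-sign claim is the one place where care is needed. I would argue that $g_{ij}$ actually changes sign at $x^{\ast}_{ij}$ (rather than merely touching zero tangentially) as follows. Each pseudoline is smooth and strictly monotone in $x$, since it is obtained by composing $\arctan$ with a linear rational function of $x$ whose denominator never vanishes on $L_I$; this is immediate from differentiating the explicit formulas in Section~\ref{sec:wedgeAngles}. Moreover, clearing the arctangents by applying $\tan(\cdot)$ to the equation $l_i(x) = l_j(x)$ yields a linear equation in $x$ with a single simple root, so $g_{ij}$ has a simple zero at $x^{\ast}_{ij}$ and hence changes sign there. Equivalently, the general position and subpath-uniqueness assumptions exclude tangential intersections of the two pseudolines.

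Putting the pieces together, if the order along $V_l$ is $l_1 \leq l_2 \leq \cdots \leq l_n$, then for every $i < j$ we have $l_i(V_l) \leq l_j(V_l)$, and the pairwise flip yields $l_i(V_r) \geq l_j(V_r)$. The unique permutation of $\{1,\ldots,n\}$ that inverts every pair is the complete reversal, so the induced order along $V_r$ is $l_n \leq l_{n-1} \leq \cdots \leq l_1$, as claimed. The main obstacle is isolated entirely in the transversality argument of the previous paragraph; once that is in hand, the combinatorial conclusion is automatic.
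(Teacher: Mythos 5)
Your proof is correct and follows essentially the same route as the paper's: both reduce the full reversal to the pairwise statement that each pair of pseudolines, crossing exactly once strictly between $V_l$ and $V_r$ (Lemma~\ref{lem:pseudolines}), must swap order across that crossing, which forces the complete transposition of the total order. Your explicit transversality step---clearing the arctangents to get a linear function with a simple root, so that the difference $g_{ij}$ genuinely changes sign rather than touching zero tangentially---is the one point the paper leaves implicit (its proof simply asserts that curves which have not swapped order cannot have intersected), so your version is, if anything, slightly more careful on exactly the step you identified as the main obstacle.
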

\begin{proof}
As was done in the proof of Lemma~\ref{lem:pseudolines}, we only consider curves of the first form, let $\mathcal{L}$ be the resulting set of curves. By construction, each curve $l$ in $\mathcal{L}$ asymptotically ranges between $0$ and $\pi$ when taken on the respective supporting line, say $L_I$. Moreover, all curves in $\mathcal{L}$ are monotonically increasing with different slopes due to assumption of general position. Moreover, each pair of curves $l_1,l_2\in \mathcal{L}$ with $l_1\neq l_2$ cross exactly once. Consequently, it follows that the condition on ordering must be satisfied; otherwise, two curves that have not swapped order must not have intersected at least one of other curves, which would contradict the result of Lemma~\ref{lem:pseudolines}.
\end{proof}

Beyond this, we can find the point of intersection between any two curves of the same class in constant time. 

\begin{lemma}\label{lem:constantIntersection}
The point of intersection between any two angular curves defined in terms of the vertices of $\partial\ch(P)$ can be found in constant time.
\end{lemma}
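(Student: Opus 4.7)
The plan is to reduce the problem to solving a single linear equation in one unknown, exploiting the geometric observation already invoked in the proof of Lemma~\ref{lem:pseudolines}. After the coordinate transformation that places the segment $I$ along the horizontal axis, each angular curve associated with a vertex $p = (p_x, p_y)$ of $\partial\ch(P)$ takes the form
$$\theta(q_x, p) = c - \tan^{-1}\!\left(\frac{p_x - q_x}{p_y}\right),$$
where the additive constant $c \in \{\pi/2, 3\pi/2\}$ is determined by whether $p$ lies above or below $L_I$, and $q_x$ encodes the position of $q$ along $L_I$.

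Given two vertices $p_1, p_2 \in \partial\ch(P)$ on the same side of $L_I$, I would set $\theta(q_x, p_1) = \theta(q_x, p_2)$; the additive constant cancels, and by injectivity of $\tan^{-1}$ on its principal branch the equation reduces to
$$\frac{p_{1x} - q_x}{p_{1y}} = \frac{p_{2x} - q_x}{p_{2y}}.$$
Cross-multiplying and isolating $q_x$ yields the closed form
$$q_x = \frac{p_{2x} p_{1y} - p_{1x} p_{2y}}{p_{1y} - p_{2y}},$$
which is evaluated in a constant number of arithmetic operations. The general position assumption (Section~\ref{sec:algPreprocessing}) guarantees $p_{1y} \neq p_{2y}$, so the denominator does not vanish. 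Geometrically, this value is exactly the $x$-coordinate at which the line through $p_1$ and $p_2$ meets $L_I$, making explicit the observation used informally in the proof of Lemma~\ref{lem:pseudolines}.

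For the mixed case, in which $p_1$ lies above and $p_2$ lies below $L_I$ (or vice versa), the additive constants differ by $\pi$, so equality of the two angular curves would force the difference of two arctangents, each lying in $(-\pi/2, \pi/2)$, to equal $\pm \pi$; this is impossible under general position, so no intersection exists and there is nothing to compute. There is no deep obstacle here: the only care required is in tracking the sign conventions introduced by the coordinate transformation across the above/below cases, and in verifying (by a single comparison) that the computed $q_x$ actually falls within the $x$-range corresponding to $I$ before it is used downstream. All of this is carried out in $O(1)$ time.
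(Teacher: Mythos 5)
Your proposal is correct and follows essentially the same route as the paper's own proof: after the coordinate transformation, equate the two same-class angular curves, cancel the arctangents, and solve the resulting linear equation for $q_x = (p_{1y}p_{2x} - p_{2y}p_{1x})/(p_{1y}-p_{2y})$ in $O(1)$ arithmetic operations. Your additional remarks (injectivity of $\tan^{-1}$, the non-vanishing denominator under general position, and the non-existence of intersections in the mixed above/below case) only make explicit points the paper leaves implicit.
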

\begin{proof}
As intersections are calculated solely between curves of the same class, it follows by simple algebraic manipulation that intersections can be found between curves of matching form on the finite segment of interest in finite time. For example, for two curves of the first kind 
\begin{align*}
    \theta(q,p_1) = \frac{\pi}{2}-\tan^{-1}\left(\frac{p_{1x}-q_x}{p_{1y}}\right) \quad\text{ and }\quad 
    \theta(q,p_2) = \frac{\pi}{2}-\tan^{-1}\left(\frac{p_{2x}-q_x}{p_{2y}}\right),
\end{align*}
with $q$ again assumed to be collinear with the horizontal axis. One can derive the point of intersection by solving
\begin{align*}
    \frac{p_{1x}-q_x}{p_{1y}} &= \frac{p_{2x}-q_x}{p_{2y}}
\end{align*}
for $x=q_x=(p_{1y}p_{2x}-p_{2y}p_{1x})/(p_{1y}-p_{2y})$ and $y=\tan^{-1}(x)$.
\end{proof}

The results of Lemmas~\ref{lem:pseudolines}, \ref{lem:pseudolineordering}, and~\ref{lem:constantIntersection} enable the use of the data structure presented in \cite{Agarwal:2019} to maintain the current tangent points of the wedges associated with each polyline, by building the upper and lower envelopes of the angular curves defined by the points on the convex hulls of $P$ relative to the position of $q$ along $Q$. More precisely, for each segment $I$ of $Q$, and for each $\ch(P_i)\in\mathcal{H}$, we construct two paired instances of the pseudoline data structure, with each pair responsible for maintaining matching upper and lower envelopes of sets of the aforementioned pseudolines. 

Tangent points for wedge $w(q,P)$ with $q\not\in\ch(P_i)$ (Case~1 for the cases outlined in Section~\ref{sec:defs}) are maintained by constructing one such data structure pair using the curves defined by vertices along $\partial\ch(P_i)$. This enables tangent points to be enumerated on the upper and lower envelopes simultaneously, by maintaining a reference to the aforementioned vertices of $\partial\ch(P_i)$ that define the segments appearing on each.

Complications arise when attempting to maintain the tangent points of $w(q,P_i)$ with $q\in\ch(P_i)$, that is, for Case~2, using this method. A solution is to instead examine tangent points for \emph{complementary wedges}, denoted $w'(q,P_i)$, which are wedges defined by segments of $\partial \ch(P_i)$ that yield $w(q,P_i)$ after taking their difference with the plane. In particular, the vertices of \emph{window segments} of $\ch(P_i)$, edges of $\partial \ch(P_i)$ that do not belong to $P_i$, induce such complementary wedges. These window segments can be easily identified during the computation of $\mathcal{H}$. It is also important to note that any planar face of $\ch(P_i)$, formed by the planar subdividion of $\mathcal{H}(P_i,I)$, can include at most one window segment on its boundary. This allows for the unique determination of wedge tangent points, as we will see.

In order to dynamically compute these complementary wedges and later refine them to account for occlusions within $\ch(P_i)$, we require the use of a second pair of pseudoline data structures built using the points that define the boundaries of the convex hulls of $\mathcal{H}(P_i,I)$. In particular, tangent points for window wedges are extracted in the following fashion. For any $q$ along $I$ within $\ch(P_i)$, we first identify the planar face in which $q$ lies based on the last point of intersection between $I$ and $\partial\ch(P_i)$ or $\mathcal{H}(P_i,I)$ before $q$. From this, the corresponding window edge, if one exists, can be immediately determined due to each planar face being associated with at most one such segment. Then, as is done for Case~1, utilizing the upper and lower envelopes of the pseduoline data structures of elements in $\mathcal{H}(P_i,I)$ adjacent to the window edge we can derive the unobstructed internal angle of the wedge $w'(q,P_i)$, along with corresponding potential tangent points, say $\tau_1$ and $\tau_2$. However, unlike in Case~1, additional checks need to be performed to ensure the upper and lower curves define a valid tangent point selection. Namely, the tangent points $\tau_1$ and $\tau_2$ are further refined to account for obstructions within the inscribed wedge $w'(q,P_i)$ by inspecting the vertices of $\mathcal{H}(P_i,I)$ that lie within the wedge. These vertices can be easily identified, again, using the pseduoline data structure associated with the elements of $\mathcal{H}(P_i,I)$ and checking if the angles associated with a vertex falls within $w'(q,P_i)$. These obstructing vertices, if any are found, are then used to restrict the internal wedge. This, is accomplished by comparatively sorting their angles with respect to those of $\tau_1$ and $\tau_2$.

Beyond this, visibility to a window segment can be lost completely within a convex hull when the two curves representing the current tangent points of the window wedge cross. Moreover, there does not exist a valid selection of tangent points for a complimentary wedge $w'(q,P_i)$ when $q$ is contained within two nested convex hulls of $\mathcal{H}(P_i,I)$ that appear on opposing sides of segment $I$. Such a configuration can be identified by preemptively sorting the left and right points of intersection of elements of $\mathcal{H}(P_i,I)$ on $I$, their respective inducing segments of $Q$, and augmenting the sorted points to include which ``side'' of the segment they lie on; due to the looping of these curves, the typical notion of right and left will not be consistent; instead we assign an arbitrary right-left orientation at the start of a curve, which is maintained as we sweep along its length.

\begin{lemma}\label{lem:pseudolineDataStructure}
The set of all matched pair pseudoline data structures responsible for maintaining the upper and lower envelopes of tangent point curves can be built in $O(nm^2\log^2 m)$ time.
\end{lemma}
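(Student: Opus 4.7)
The plan is to bound the construction cost by (i) counting how many separate pseudoline data structures are built, (ii) bounding the number of pseudolines fed into each one, and (iii) invoking the construction time of the pseudoline upper/lower envelope data structure of Agarwal~\emph{et al.}~\cite{Agarwal:2019}, which builds the envelope of $k$ pseudolines in $O(k \log^2 k)$ time once the pseudoline hypotheses and constant-time intersection oracle are in place.

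First I would index the data structures. One matched pair is built for each segment $I$ of $Q$ and each polyline $P_i \in \mathcal{P}$, for both Case~1 (using curves defined by the vertices of $\partial \ch(P_i)$) and Case~2 (using curves defined by the vertices on the boundaries of the hulls in $\mathcal{H}(P_i,I)$). Since $Q$ has $O(m)$ segments and $|\mathcal{P}| = n$, the total number of paired instances is $O(nm)$ for each case, hence $O(nm)$ overall up to a constant factor.

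Next I would bound the number of pseudolines in each structure. The Case~1 structure for a pair $(I, P_i)$ contains one angular curve per vertex of $\partial \ch(P_i)$, of which there are $O(m)$. For Case~2, the structure contains one angular curve per vertex on the boundary of each hull in $\mathcal{H}(P_i,I)$. As in the proof of Lemma~\ref{lem:convexHullHierachy}, if the subpaths of $P_i$ created by the intersection with $I$ have sizes $m_1, m_2, \ldots, m_k$, then $m_1 + m_2 + \cdots + m_k \leq 3m$, so each Case~2 structure also sees $O(m)$ pseudolines. Lemmas~\ref{lem:pseudolines}, \ref{lem:pseudolineordering}, and~\ref{lem:constantIntersection} guarantee that the input curves form a family of pseudolines with a coherent left/right ordering and constant-time intersection oracle, which are precisely the preconditions required by the data structure of~\cite{Agarwal:2019}.

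Finally, I would apply the construction bound: building the upper and lower envelope pair over $O(m)$ pseudolines costs $O(m \log^2 m)$ per instance. Summing over the $O(nm)$ pairs (and both cases) yields
\begin{equation*}
    O(nm) \cdot O(m \log^2 m) = O(nm^2 \log^2 m),
\end{equation*}
which is the claimed bound. The most delicate step is the size accounting for Case~2: one must be careful that the hulls in $\mathcal{H}(P_i,I)$ are collected per segment $I$ (not globally across all segments), so that the overlap generated by successive cuts of $P_i$ does not inflate the count beyond $O(m)$ per structure; this is exactly the double-counting argument reused from Lemma~\ref{lem:convexHullHierachy}. Everything else is a direct application of the cited envelope data structure.
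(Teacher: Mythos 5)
Your proposal is correct and follows essentially the same route as the paper: count the data-structure instances, bound the number of angular pseudolines per instance by $O(m)$ (reusing the subpath-size accounting from Lemma~\ref{lem:convexHullHierachy}), and charge $O(m\log^2 m)$ per instance via the insertion/merge cost of the structure of \cite{Agarwal:2019}. The only cosmetic difference is that you charge Case~1 with one pair per (segment, polyline) combination, i.e., $O(nm)$ instances, whereas the paper's proof charges only $n$ outer-hull instances (contributing $O(nm\log^2 m)$); since Case~2 dominates, both counts yield the same $O(nm^2\log^2 m)$ total.
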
 
\begin{proof}
A pseudoline data structure is built for every $\ch(P_i)\in\mathcal{H}$, each of which consists of at most $m$ pseudolines that are defined in terms of  the vertices of $\partial\ch(P_i)$. The data structure is capable of inserting and merging each new entry in $O(\log^2m)$ time, for $m$ currently stored elements. Thus, for $m$ insertions the data structure requires 
\begin{equation*}
    \sum_{i=1}^{m}O(\log^2 i) \subseteq O(m \log^2 m) \text{ time.}
\end{equation*}
Consequently, the first matched pair of data structures can be built in $O(nm\log^2 m)$ worst-case time.
Likewise, each segment $I$ of $Q$ can give rise to at most $m$ internal convex hulls formed by partitioning $P_i$, with at most $2m$ total vertices appearing on the boundaries of the convex hulls of $\mathcal{H}(P_i,I)$. Thus, at most $2m$ entries are added to the pseudoline data structure associated with each $\mathcal{H}(P_i,I)$. Similar analysis to the first case concludes with all upper and lower envelopes of the inner convex hulls being constructed in $O(nm^2\log^2 m)$ worst-case time. Thus, $O(nm^2\log^2 m)$ time is required in the worst case to build all instances of the data structure.
\end{proof}

\begin{lemma}\label{lem:convexHullSorting}
Sorting overlapping nested convex hulls takes $O(nm^2\log m)$ total time.
\end{lemma}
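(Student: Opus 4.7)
The plan is to show that for each segment $I$ of $Q$ and each polyline $P_i\in\mathcal{P}$, the overlap/nesting structure of the convex hulls in $\mathcal{H}(P_i,I)$ along $I$ can be extracted in $O(m\log m)$ time, and then sum over the $O(m)$ segments of $Q$ and the $n$ polylines of $\mathcal{P}$.

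First I would observe, as in the proof of Lemma~\ref{lem:convexHullHierachy}, that the subpaths of $P_i$ produced by cutting $P_i$ with a single segment $I$ of $Q$ induce a set $\mathcal{H}(P_i,I)$ of at most $m$ convex hulls whose total vertex complexity is $O(m)$. For each $h\in\mathcal{H}(P_i,I)$, I would compute the (at most two) intersection points of $\partial h$ with the supporting line $L_I$; since $h$ is convex, scanning its edges yields these points in $O(|h|)$ time, so $O(m)$ time suffices to produce the full list of intersection events over $\mathcal{H}(P_i,I)$.

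Each event would then be tagged with (i) the identifier of the hull $h$ it belongs to, (ii) an entry/exit flag indicating whether $q$ is entering or leaving the projection of $h$ onto $L_I$ as $q$ sweeps along $I$, and (iii) a side label (``left'' or ``right'' of $L_I$) propagated by the orientation-tracking sweep described in the paragraph immediately preceding the lemma, which keeps the side assignment consistent across subpaths that loop. Sorting the resulting $O(m)$ annotated events by position along $L_I$ costs $O(m\log m)$; reading the sorted sequence in order and maintaining, for each side of $L_I$, a stack of currently ``open'' hulls, exposes both the nesting relations among hulls lying on the same side of $I$ and the cross-side containments that Section~\ref{sec:algPreprocessing} flags as degenerate for tangent-point selection in Case~2.

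Summing this cost over the $O(m)$ segments of $Q$ and the $n$ polylines of $\mathcal{P}$ yields $n \cdot O(m) \cdot O(m\log m) = O(nm^2\log m)$, as required. The step I expect to be the main obstacle is the side-labelling in (iii): for a subpath whose two endpoints on $I$ are traversed in an order that is inconsistent with a global above/below orientation, the left/right tag must be inherited from the arbitrary orientation fixed when the subpath was first encountered during the sweep of $Q$, and this inheritance has to be reconciled with the envelopes used by the pseudoline data structure of Lemma~\ref{lem:pseudolineDataStructure}. Once this bookkeeping is set up, the remainder of the argument reduces to a routine sort-and-scan within the stated time bound.
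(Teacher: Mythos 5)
Your proof is correct and takes essentially the same approach as the paper's: both arguments bound the work per (polyline, segment) pair by $O(m\log m)$ for sorting the $O(m)$ intersection points of $\mathcal{H}(P_i,I)$ along $I$, then multiply over the $n$ polylines and $O(m)$ segments of $Q$ to get $O(nm^2\log m)$. The extra detail you supply on event extraction, entry/exit flags, and the stack-based scan for nesting simply makes explicit the bookkeeping that the paper's terser proof leaves implicit.
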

\begin{proof}
Each $\ch(P_i)$ admits at most $m$ different sets of $\mathcal{H}(P_i,I)$, one for each $I$ of $Q$, each with $O(m)$ points of intersection that need to be sorted along $I$. It follows for all $n$ polylines in $\mathcal{P}$ it takes $O(nm^2\log m)$ time to sort all $\mathcal{H}(P_i,I)$.
\end{proof}

As we will be effectively traversing the entire upper and lower envelopes associated with each segment of $Q$, it is more efficient in the worst case to preemptively extract the complete list of segments and intersection points of all pseudolines appearing on the upper and lower envelopes opposed to performing a series of queries. Since the utilized data structure is based on a balanced binary search tree, it is possible to traverse the tree and construct the segment list and associated transition points as desired. Let $T(P_i)$ denote the sequence of pseudonlines and their points of intersection for the pseudoline data structure associated with $\ch(P_i)$, and likewise, let $T(P_i,I)$ denote the sequences of pseudolines associated with $\mathcal{H}(P_i,I)$. 

\begin{lemma}\label{lem:pseudolineExtraction}
The sequence of segments and points of intersection in $T(P_i,I)$ (including $T(P_i)$) can each be extracted from the associated pseudoline data structure in $O(m)$ time. It takes $O(nm^2)$ time to extract all such sequences across all pseudoline data structure instances.
\end{lemma}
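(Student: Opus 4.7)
The plan is to show that each pseudoline data structure stores its current upper and lower envelope in a way that admits a linear-time scan, and then to multiply the per-instance bound by the total number of instances. Since the data structure from \cite{Agarwal:2019} is built on a balanced binary search tree whose nodes correspond (in left-to-right order along the $x$-axis) to the arcs of the envelope, an in-order traversal can visit every arc and every breakpoint between consecutive arcs exactly once.

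First I would establish the per-instance bound. For the outer structure associated with $\ch(P_i)$, at most $m$ pseudolines have been inserted, one per edge of $\partial\ch(P_i)$, so the upper and lower envelopes consist of $O(m)$ arcs and $O(m)$ breakpoints in total; an in-order traversal of the underlying balanced BST therefore yields $T(P_i)$ in $O(m)$ time. For an inner structure associated with $\mathcal{H}(P_i,I)$, the total number of vertices along the boundaries of the convex hulls in $\mathcal{H}(P_i,I)$ is $O(m)$, by the double-counting argument used in the proof of Lemma~\ref{lem:convexHullHierachy}; hence at most $O(m)$ pseudolines populate the structure and $T(P_i,I)$ can likewise be read off in $O(m)$ time.

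Next I would account for the total number of structures. Each polyline $P_i\in\mathcal{P}$ is associated with exactly one outer structure for $\ch(P_i)$ and one inner structure for $\mathcal{H}(P_i,I)$ per segment $I$ of $Q$, yielding $O(nm)$ structures overall. Multiplying this count by the $O(m)$ per-extraction cost established above gives the claimed $O(nm^2)$ total bound.

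The main subtlety I expect is ensuring that an in-order scan of the balanced BST recovers both the arcs appearing on the envelope and the breakpoints between consecutive arcs, without having to reconstruct geometric information by auxiliary queries. Because each breakpoint on an envelope is shared by exactly two consecutive arcs, it suffices to emit the breakpoint between each pair of adjacent arcs encountered in the traversal; by Lemma~\ref{lem:constantIntersection} the coordinates of that breakpoint are computable in $O(1)$ time from the two arcs' defining vertices, so no asymptotic overhead is incurred. Care must also be taken to treat the paired upper/lower envelope maintained for Case~1 and the envelope plus window-segment information maintained for Case~2 uniformly; in both situations the relevant combinatorial sequence still has size $O(m)$, and a single BST traversal per envelope suffices.
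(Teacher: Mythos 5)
Your proposal is correct and follows essentially the same argument as the paper: a linear-time in-order traversal of the balanced BST underlying each envelope structure from \cite{Agarwal:2019} yields the $O(m)$-size sequence per instance, and multiplying by the $O(nm)$ instances gives $O(nm^2)$ total. The extra care you take in counting pseudolines per structure and in noting that breakpoints are recoverable in $O(1)$ time via Lemma~\ref{lem:constantIntersection} only fills in details the paper leaves implicit.
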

\begin{proof}
As the data structure used for maintain the lower (and upper) envelope of pseudolines is implemented using a balanced binary tree \cite{Agarwal:2019}, we can traverse the tree in $O(m)$ time and extract the sequence of segments that appear on the lower (upper) envelope for each instance of the data structure. Consequently, as there are $O(nm)$ instances, it takes $O(nm^2)$ time to extract all such sequences.
\end{proof}

By the above proofs, it follows that the set $T$, which contains all points in both $T(P_i)$ and $T(P_i,I)$, has cardinality $O(nm^2)$. This will be important in the later analysis of the running time of the wedge updates. 

Having described the method used for maintaining tangent point information, we briefly discuss its operation in the following algorithm. The initial tangent points $\tau_1$ and $\tau_2$ of $w(q,P_i)$ for all $i=1,\dots,n$ and $q=q_1$, the first point along $Q$, are derived as described in the preceding paragraphs and arranged into the circular partition $\sigma_q(\mathcal{W}_\mathcal{P})=(\theta_0,\theta_1,\dots,\theta_{4n})$ by sorting the rays associated to each tangent point by slope, treating the opposing ray extended through the origin separately. During this process, note which regions overlap to calculate the initial stabbing numbers of each angular region in the partition (subdivided wedges) as in Eq.~\eqref{eq:stabObs} and Observation~\ref{obs:stabbing}. This ordering is then dynamically maintained as tangent points are incrementally updated as $q$ moves along $Q$ by tracing the corresponding upper and lower envelopes. Similarly, the stabbing numbers are iteratively updated by monitoring the points of $S$ defined along $Q$, determined using a method now described.

\subsubsection{Dynamically Maintaining Stabbing Numbers}

A method capable of first calculating and then maintaining the stabbing number associated with each wedge is required. This is again accomplished using the set of convex hulls, $\mathcal{H}$, by constructing the set $\tau(\mathcal{H})$ of common tangent lines that separate all pairs of convex hulls, as utilized in the proof of Lemma~\ref{lem:invariantSegments}, to partition $Q$ into segments with fixed stabbing numbers. Specifically,   
\begin{equation*}
    \tau(\mathcal{H}) = \{\text{all lines } l \mid \exists \{P_i,P_j\}\subseteq \mathcal{P} \ \text{s.t.} \ (l\cap \ch(P_i))\cup(l\cap \ch(P_j)) = \{p_{i,i'},p_{j,j'}\} \},
\end{equation*}
where $p_{i,i'}$ and $p_{j,j'}$ are vertices of $\partial\ch(P_i)$ and $\partial\ch(P_j)$, respectively. 
\begin{figure}[!ht]
    \centering
    \includegraphics[scale=0.6]{./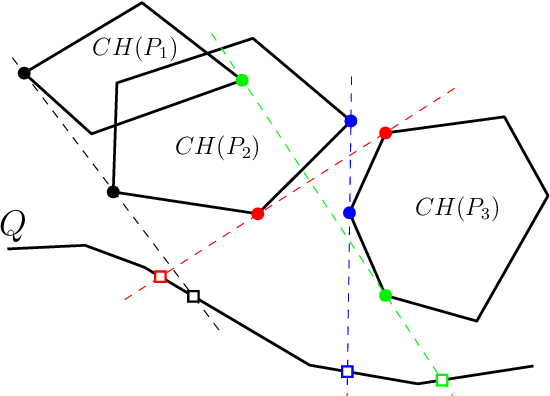}
    \caption{\small Illustration of the common tangents that separate pairs of the convex hulls $\ch(P_1),\ch(P_2)$, and $\ch(P_3)$. To simplify the figure, only those tangents that intersect $Q$ are shown, with their points of intersection marked along $Q$ by boxes.}
    \label{fig:hullTangents}
\end{figure}

\begin{lemma}\label{lem:commonTangents}
The set of all common tangents between all elements of $\mathcal{H}$ can be computed in $O(n^2m\log^2 m)$ time.
\end{lemma}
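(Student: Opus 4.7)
The plan is to iterate over all $\binom{n}{2} = O(n^2)$ unordered pairs of convex hulls in $\mathcal{H}$ and, for each pair, compute its (at most four) common tangent lines. Every line in $\tau(\mathcal{H})$ is associated with a unique pair $(\ch(P_i), \ch(P_j))$ and each such pair contributes at most two outer common tangents, plus two inner common tangents when $\ch(P_i) \cap \ch(P_j) = \emptyset$. Consequently, if each pair can be processed in $O(m \log^2 m)$ worst-case time, summing over all pairs yields the claimed $O(n^2 m \log^2 m)$ bound.

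For a single pair $(\ch(P_i), \ch(P_j))$, I would reuse the pseudoline envelope machinery of Lemma~\ref{lem:pseudolineDataStructure}. Parameterizing candidate tangent directions by the angular curves associated with the vertices of $\partial\ch(P_j)$ viewed from a moving reference point along a supporting line of $\partial\ch(P_i)$, the common tangent contacts correspond to extrema of the upper and lower envelopes of these pseudolines, by the ordering principle established in Lemma~\ref{lem:pseudolineordering}. There are $O(m)$ such curves per pair, each insertable into the structure of \cite{Agarwal:2019} in $O(\log^2 m)$ time, for a total of $O(m \log^2 m)$ per pair. The tangent contact points can then be extracted from the envelope extrema, with each pairwise intersection resolved in constant time via Lemma~\ref{lem:constantIntersection}, and the resulting tangent lines through the identified contact points assembled in $O(1)$ additional time per tangent.

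The main obstacle will be handling the case when $\ch(P_i) \cap \ch(P_j) \neq \emptyset$, which eliminates the inner common tangents and creates additional envelope configurations analogous to Case~2 of Section~\ref{sec:wedgeAngles}. I would address this by first testing disjointness through a linear-time merge of the two sorted vertex sequences of $\partial\ch(P_i)$ and $\partial\ch(P_j)$ (easily absorbed in the per-pair budget), then branching on the result to extract only the outer tangents when the hulls overlap and both inner and outer tangents otherwise. Care must also be taken to exclude spurious extrema that correspond to supporting lines of a single hull but not both, which is enforced by verifying that the returned extremum is simultaneously a supporting vertex of $\partial\ch(P_i)$. Summing the per-pair cost across all $O(n^2)$ pairs yields the stated worst-case running time.
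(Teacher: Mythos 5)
Your outer structure matches the paper's: both iterate over all $\binom{n}{2}=O(n^2)$ pairs of hulls, each contributing at most four common tangents, and bound the per-pair cost. The divergence is in the per-pair primitive, and that is where your argument has a genuine flaw. You claim that ``the common tangent contacts correspond to extrema of the upper and lower envelopes'' of the angular curves of $\partial\ch(P_j)$'s vertices seen from a point moving along a supporting line of $\ch(P_i)$. But by Lemma~\ref{lem:pseudolineordering} (and directly from the arctangent formulas in Lemma~\ref{lem:pseudolines}), every one of these angular curves is strictly monotone in the position of the reference point, so their upper and lower envelopes are strictly monotone as well: they have \emph{no} extrema. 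The object you actually want is a \emph{crossing} between two envelopes, not an extremum of one. Indeed, the proof of Lemma~\ref{lem:pseudolines} shows that the angular curves of two vertices $p_1,p_2$ cross exactly where the line through $p_1p_2$ meets the reference line; if $p_1$ is a vertex of $\ch(P_i)$, $p_2$ a vertex of $\ch(P_j)$, and the crossing lies simultaneously on the envelope of $\ch(P_i)$'s curves and on the envelope of $\ch(P_j)$'s curves, then the line $p_1p_2$ is a supporting line of both hulls, i.e., a common tangent. So your primitive is repairable (intersect the two envelopes rather than seek extrema of one), but as written the central geometric claim fails, and your ``spurious extrema'' filter does not rescue it. Even after the fix you must also (i) choose a reference line with \emph{both} hulls strictly on one side, which a supporting line of $\ch(P_i)$ alone does not guarantee (vertices of $\ch(P_j)$ below the line produce curves of the second form, and Lemma~\ref{lem:constantIntersection} only intersects curves of matching form), and (ii) handle a common tangent parallel to the reference line, which never crosses it and would be missed.

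For comparison, the paper sidesteps this machinery entirely: for each pair it classifies the configuration (disjoint, boundaries intersecting, or nested) in $O(m)$ time using \cite{ORourke:1982}, then invokes \cite{Kirkpatrick:1995} to compute the tangents --- $O(\log m)$ time for disjoint hulls, $O(m)$ worst case when the boundaries intersect, nothing when nested --- giving $O(m)$ per pair and $O(n^2m)$ overall, comfortably inside the stated $O(n^2m\log^2 m)$ bound. Your approach, once corrected, would also fit the bound (envelope construction in $O(m\log^2 m)$ per pair dominates), but it rederives with the pseudoline structure what the cited results already provide more simply and more cheaply.
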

\begin{proof}
There are three distinct cases to consider when computing these common tangents: (1) the two convex hulls are disjoint, (2) their boundaries intersect, and (3) one convex hull entirely contains the other. Case~1 is the simplest, in which the common tangents between two convex hulls $\ch(P_1)$ and $\ch(P_2)$ can be computed in $O(\log|\partial\ch(P_1)|+\log|\partial\ch(P_2)|)$ time \cite{Kirkpatrick:1995}. Case~2 requires $O(m)$ time to compute in the worst case. However, if the two convex hull boundaries intersect at most twice, the common tangents can be found in $O(\log(|\partial\ch(P_1)|+|\partial\ch(P_2)|)\log k)$ time, where $k=\min\{|\partial(\ch(P_1)\cap\ch(P_2))|,|\partial(\ch(P_1)\cup\ch(P_2))|\}$ \cite{Kirkpatrick:1995}. In Case~3, no computation is performed after identifying that the hulls are nested. It takes $O(m)$ time to identify which of the three cases must be applied \cite{ORourke:1982}.
\end{proof}

Marking the points of intersection between lines of $\tau(\mathcal{H})$ and $Q$ yields a point set, $S$, that identifies when wedge stabbing numbers need to be updated relative to the position of $q$ along $Q$; see Figure~\ref{fig:hullTangents}. Wedge stabbing numbers are then iteratively updated as $q$ crosses points of $S$ along $Q$ by updating the current tangent point ordering (wedge overlap) information to reflect the change in overlap between the convex hulls whose common tangent defines the crossed point. 

\begin{lemma}\label{lem:stabbingNumbers}
The initial stabbing numbers associated with all wedges can be computed in $O(n^2m)$ time and maintained in $O(n^2)$ additional time.
\end{lemma}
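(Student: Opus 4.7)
The plan is to split the argument into two phases: the initial setup of stabbing numbers at the starting point $q=q_1$ of $Q$, and the incremental maintenance of these numbers as $q$ sweeps along the remainder of $Q$.

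For the initial phase, by Lemma~\ref{lem:pseudolineDataStructure} the two tangent rays of each of the $n$ wedges $w(q_1,P_i)$ can be retrieved from the pseudoline data structures in $O(1)$ per wedge. The $4n$ bounding rays (including antipodal extensions through $q_1$) are then sorted by angle in $O(n\log n)$ time to form $\sigma_{q_1}(\mathcal{W}_\mathcal{P})$, after which a single angular $\pm 1$ sweep labels each of the $O(n)$ arcs with its stabbing number in $O(n)$ further time (the label at each arc equals the running count of open wedges). To prepare the event sequence consumed by the update phase, I would iterate over the $O(n^2)$ lines in $\tau(\mathcal{H})$ supplied by Lemma~\ref{lem:commonTangents}, compute each intersection with the $O(m)$ segments of $Q$ in constant time per pair, and bucket the resulting $O(n^2m)$ events by the segment of $Q$ in which they fall. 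This preprocessing dominates the initial phase and yields the claimed $O(n^2m)$ bound.

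For the update phase, I would represent $\sigma_q(\mathcal{W}_\mathcal{P})$ as a cyclic doubly-linked list whose arcs carry stabbing-number labels and whose rays are cross-linked to their parent wedges. When $q$ crosses an event associated with the common tangent of $\ch(P_i)$ and $\ch(P_j)$, a single adjacent pair of bounding rays, one from each of $w(q,P_i)$ and $w(q,P_j)$, transposes in the cyclic order; the label of the arc swept over by the swap shifts by $\pm 1$ while every other label is unaffected. Applying the swap and the relabel therefore costs $O(1)$ per event. Since at most $O(n^2)$ common-tangent crossings occur while $q$ traverses a single segment of $Q$ (at most one per line of $\tau(\mathcal{H})$, by Lemma~\ref{lem:pseudolines}), the additional maintenance cost incurred per segment of $Q$ is $O(n^2)$, matching the stated bound.

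The principal obstacle will be establishing the ``single transposition per event'' property on which the $O(1)$ per-event update cost depends. I expect this to follow from the general-position assumption on $\{Q\}\cup\mathcal{P}$ together with subpath uniqueness: no two events in the preprocessed sequence coincide and no three boundary rays become collinear at the same instant, so each event corresponds to an elementary adjacent swap whose only effect on stabbing numbers is a $\pm 1$ change to the single arc pinched between the swapping rays. A secondary technicality is correctly handling Case~2 wedges (where $q\in\ch(P_i)$), which must be treated via their complementary wedges; but as the pseudoline data structures already expose the correct tangent rays in both cases, the bookkeeping in the cyclic list is uniform.
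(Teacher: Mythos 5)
Your proposal follows essentially the same route as the paper's proof: compute the initial circular partition by sorting the $4n$ bounding rays and sweeping to obtain the initial stabbing numbers, take as events the intersections of the $O(n^2)$ common tangents of Lemma~\ref{lem:commonTangents} with $Q$ (this $O(n^2m)$ intersection testing is what dominates the initial phase in both proofs), and then process each event with a constant-time local update to the cyclic order. Your cyclic doubly-linked-list realization and the single-adjacent-transposition argument are more explicit than anything in the paper, which simply asserts that the updates are discrete and constant-time.

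The one substantive divergence is the event count. The paper claims maintenance costs $O(n^2)$ in total along all of $Q$, whereas you charge $O(n^2)$ per segment of $Q$, i.e., $O(n^2m)$ overall --- and your accounting is the defensible one: each line of $\tau(\mathcal{H})$ can cross an $m$-segment polyline up to $m$ times, so the event set (the paper's $S$, your bucketed list) has worst-case size $\Theta(n^2m)$, not $O(n^2)$. The paper's total bound would require each tangent to cross $Q$ only $O(1)$ times, which general position does not guarantee. So you prove a formally weaker statement than the lemma's literal wording, but the lemma's own proof does not justify the stronger one, and neither reading affects Theorem~\ref{thm:mainAlg}, since $O(n^2m)$ constant-time updates are absorbed by the $O(n^2m\log^2 m)$ term. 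Two small corrections to your update step: when rays bounding $w(q,P_i)$ and $w(q,P_j)$ transpose, the pinched arc's membership flips in \emph{both} wedges, so its label changes by $0$ or $\pm 2$ rather than $\pm 1$, and the antipodal arc must be updated in the same step, since the circular partition also contains the reflected rays at angles $\phi+\pi$; both of these remain $O(1)$ per event, so your bound stands.
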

\begin{proof}
The initial tangent points can be computed and cyclically ordered in $O(n\log m + n\log n)$ time, with the initial stabbing numbers being computed during the sorting stage based on overlap determined by tangent point crossings. There are $4 \binom{n}{2}$ common tangents in total, each of which must be tested for intersection with segments of $Q$. Correspondingly, in the worst case each tangent intersects $Q$, and intersection testing takes $O(n^2m)$ time. Maintenance of stabbing numbers then requires at most $O(n^2)$ discrete constant-time updates along the length of $Q$.
\end{proof}
     
We can conclude that it is possible to build a set of data structures capable of progressively updating the tangent points and stabbing numbers associated with all wedges in $O(n^2m\log^2 m + nm^2\log^2m)$ total time, which is sufficient for later calculating the curve stabbing depth of a polyline.

\subsection{Calculating the Depth}\label{sec:algCalculation}     
We can now calculate the curve stabbing depth of a polyline, by application of the data structures constructed in Section~\ref{sec:algPreprocessing} to evaluate wedge angles by applying the equations outlined in Section~\ref{sec:wedgeAngles}, which in Case~1 are given by Equations~\ref{eq:interiorAngleAbove} and~\ref{eq:interiorAngleCross}.

The depth of a curve $Q$ is computed by progressively summing the depths of segments along $Q$ as $q$ transitions across its length. Specifically, the angular area of each wedge given by the difference between the associated upper and lower envelopes is integrated along the segments of $Q$ that result from subdividing $Q$ where tangent points and stabbing numbers are updated. 

Let $\mathcal{I}$ denote the partition of $Q$ into cyclically invariant segments by points of $S$. Then given the set $\mathcal{W}_\mathcal{P}$ of initial wedges as described above, the cyclical invariance of $I\in\mathcal{I}$ allows the angular area swept out by a wedge, $w(q,P)$, along each subsegment $I_i=\overline{ab}$ of $I$ formed between tangent update points of $T$ to be evaluated as
\begin{equation}
    A_i = \int_{q\in I_i} \theta(w(q,P))\, ds,
    \label{eq:unparamInt}
\end{equation}
where $\theta(w(q,P))$ may be one of \ref{eq:interiorAngleAbove} or~\ref{eq:interiorAngleCross} (or similarly for Case~2) as outlined in Section~\ref{sec:wedgeAngles}.

\begin{lemma}\label{lem:tangnetPoints}
We can compute all changes in wedge tangent points, including losses in visibility within the interior of a convex hull, and calculate wedge angles in $O(nm^2)$ total time as $q$ transitions across the length of $Q$.
\end{lemma}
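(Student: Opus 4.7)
The plan is to show that every change in the tangent points of a wedge $w(q,P_i)$, as $q$ sweeps along $Q$, coincides with exactly one breakpoint stored in the envelope sequences $T(P_i)$ and $T(P_i,I)$ already extracted during preprocessing, so that the total number of such events is $O(nm^2)$ and each is processed in constant time.

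First, I would invoke Lemma~\ref{lem:pseudolines} to note that on each segment $I$ of $Q$, the angular curves associated with vertices of $\partial\ch(P_i)$ (or of the boundaries of the inner hulls in $\mathcal{H}(P_i,I)$) behave as pseudolines, with their upper and lower envelopes encoding exactly the current tangent points of $w(q,P_i)$. A tangent point of the wedge changes precisely when two pseudolines swap on an envelope, and each such swap corresponds to one breakpoint of $T(P_i)$ or $T(P_i,I)$. In Case~1 this observation is sufficient: a linear sweep through the already-extracted sequences processes every event in $O(1)$ time per swap, by updating a pointer to the pseudoline currently realizing each tangent point and applying Equation~\eqref{eq:interiorAngleAbove} or~\eqref{eq:interiorAngleCross} to recompute the wedge angle.

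Second, I would address the more delicate Case~2, where $q$ lies inside $\ch(P_i)$ and the relevant wedge is built from a complementary wedge $w'(q,P_i)$ refined by obstructions in $\mathcal{H}(P_i,I)$. Three kinds of events can trigger a change: (a) a swap on an envelope of $T(P_i,I)$ adjacent to the current window segment; (b) a vertex of an inner hull entering or leaving the cone $w'(q,P_i)$; and (c) a loss of visibility when the two curves bounding the window wedge cross, or when $q$ enters a region sandwiched between two nested inner hulls on opposite sides of $I$. Events of type (a) and (b) coincide with breakpoints already in $T(P_i,I)$, while events of type (c) are detected using the preemptively sorted and orientation-tagged intersection points of the inner hulls with $I$ described in the preceding subsection; each is resolved in $O(1)$ by local updates to the maintained tangent record.

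Third, I would bound the total work. By Lemma~\ref{lem:pseudolineExtraction} the sequences $T(P_i)$ and $T(P_i,I)$ have combined size $O(nm^2)$, and the visibility-loss markers contribute at most $O(1)$ additional events per inner hull per segment of $Q$, again $O(nm^2)$ in total. Walking through these lists in order along $Q$ and applying each update costs $O(1)$ time, so maintaining all tangent points takes $O(nm^2)$ overall; each wedge-angle evaluation is a constant-size arithmetic expression in the coordinates of $q$, $\tau_1$, and $\tau_2$. The main obstacle I anticipate is arguing, for Case~2, that no relevant transition is silently missed—particularly the visibility losses between nested inner hulls lying on opposite sides of $I$; establishing this cleanly requires a careful case analysis tied to the orientation-aware sort produced in preprocessing, but contributes no asymptotic overhead to the claimed $O(nm^2)$ bound.
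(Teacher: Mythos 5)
Your proposal is correct and follows essentially the same route as the paper's (much terser) proof: both identify tangent-point changes with breakpoints on the upper and lower envelopes extracted in Lemma~\ref{lem:pseudolineExtraction}, bound these by $O(nm)$ per segment of $Q$ (hence $O(nm^2)$ in total), and process each in $O(1)$ time. Your additional care with Case~2 and the visibility-loss events is a faithful elaboration of machinery the paper sets up in its preprocessing discussion but leaves implicit in the proof itself.
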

\begin{proof}
We need to extract the tangent point information at each crossing on the upper and lower envelopes. There are $O(nm)$ many of these associated with each segment of $Q$ from all polylines, so this takes $O(nm^2)$ time in total. 
\end{proof}


By letting $\vec{u}=(u_x,u_y)$ denote the unit direction vector inline with the line segment $\overline{q_iq_{i+1}}$ which $I$ belongs, we can construct a $3$-by-$3$ affine transformation matrix of the form
\begin{equation*}
    M = \left[\begin{array}{c|c}
            R & T\\ \hline
            0 & 1
        \end{array}\right],
\end{equation*}
composed of a transition (rotation) matrix $R$ that is responsible for rotating $\vec{u}$ to be horizontal, and a translation matrix $T$ that displaces the newly rotated $I$ to a height of zero. In particular, we see 
\begin{equation*}
    R = \begin{bmatrix}
            u_x & u_y\\
            -u_y & u_x
        \end{bmatrix}
    \quad\text{ and }\quad
    T = \begin{bmatrix}
            -q_{i,x}\\ -q_{i,y}
        \end{bmatrix}.
\end{equation*}
Applying this transformation to the set of current wedge tangent points, that is, if $\tau=(\tau_x,\tau_y)$ is a tangent point then the resultant transformed point $\tau'=(\tau'_x,\tau'_y)$ is given by $(\tau'_x,\tau'_y,1)=M(\tau_x,\tau_y,1)$ with a one appended to the point vector in the calculation, for $I\in\mathcal{I}$ allows us to calculate the area swept out by wedges along $I$ in a unified fashion; e.g., for Case~1(a) using Eq.~\eqref{eq:interiorAngleAbove} in the integral, results in Eq.~\eqref{eq:unparamInt} becoming

\begin{align*}
\small
    A_i = \int_{a'}^{b'}
    \bigg[&\tan^{-1}\left(\frac{x-\tau_2(w')_x}{\tau_2(w')_y}\right)
    -\tan^{-1}\left(\frac{x-\tau_1(w')_x}{\tau_1(w')_y}\right)\bigg]\, dx, \label{eq:wedgeArea}
\end{align*}
with the transformed points $a',b'$ and wedge $w'$ defined by the tangent points delineating $I_i$. As this is an integral with known antiderivative, namely 
\begin{align*}
\small
    A_i =& \bigg[ (\tau_1(w')_x-x)\tan^{-1}\left(\frac{\tau_2(w')_x-x}{\tau_2(w')_y}\right)\\
    &+ (x-\tau_1(w')_x)\tan^{-1}(\tau_1(w')_y(\tau_1(w')_x-x))\\
    &+ \frac{1}{2\tau_1(w')_y}\ln(\tau_1(w')_y(\tau_1(w')_x^2-2\tau_1(w')_xx+x^2)+1)\\
    &- \frac{1}{2}\tau_2(w')_y\ln(\tau_2(w')_x^2-2\tau_2(w')_xx + \tau_2(w')_y^2 + x^2) \bigg]_{a'}^{b'},
\end{align*}
the angular area, and therefore depth, can be computed exactly along such segments. Analogous analysis can be applied using Eq.~\eqref{eq:interiorAngleCross} for problems in Case~1(b) reassembling that depicted in Figure~\ref{fig:wedgeAngleCross}, and likewise for computations of Case~2. 

\begin{lemma}\label{lem:wedgeArea}
The angular area of all wedges along a cyclically invariant segment of $Q$ can be computed in $O(nm)$ worst-case time by applying a coordinate transform along the segment.
\end{lemma}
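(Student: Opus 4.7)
The plan is first to apply the affine transformation $M$ introduced immediately above the lemma, which rotates and translates the cyclically invariant segment $I$ so that it lies on the horizontal axis. With $I$ normalized in this way, each wedge angle $\theta(w(q,P_j))$ becomes a function of the single horizontal coordinate $x = q_x$ of the form given by \eqref{eq:interiorAngleAbove} or \eqref{eq:interiorAngleCross} (and the analogous Case~2 expression). Consequently, the integral $A_i$ in \eqref{eq:unparamInt} reduces to a scalar integral with the closed-form antiderivative displayed just before the lemma statement. Applying $M$ to a single tangent point costs $O(1)$, so the transformation is evaluated on the fly as each tangent point is inspected and does not contribute to the asymptotic bound.

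The next step is to enumerate the tangent-update events along $I$. For each polyline $P_j$, the tangent points of $w(q,P_j)$ change only when the upper or lower envelope of the associated pseudoline data structure swaps an extremal curve, or when $q$ crosses the boundary of a hull in $\mathcal{H}(P_j) \cup \mathcal{H}(P_j,I')$ (where $I'$ is the edge of $Q$ that contains $I$), triggering a switch between Case~1 and Case~2 or a loss of visibility to a window segment. By Lemma~\ref{lem:pseudolineDataStructure} at most $O(m)$ pseudolines are contributed by $P_j$ on $I'$, and a pseudoline envelope over $O(m)$ curves has $O(m)$ breakpoints; reading these off in sorted order from the precomputed sequences $T(P_j)$ and $T(P_j,I')$ through Lemma~\ref{lem:pseudolineExtraction} yields at most $O(m)$ events per polyline, hence $O(nm)$ events in total across $I$.

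Given the merged, sorted event list, the algorithm sweeps $q$ along $I$. Between consecutive events the tangent points of every wedge are constant, so the angular area swept by each wedge over a subsegment $I_i = \overline{ab}$ is obtained by evaluating the closed-form antiderivative at the two transformed endpoints in $O(1)$ time. Because each event alters the tangent points of exactly one wedge, it suffices to recompute the contribution of that single wedge and add the resulting increment to the running total for that wedge. Aggregated over all $O(nm)$ events this produces the angular area of every wedge along $I$ in $O(nm)$ time.

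The main obstacle is the careful treatment of Case~2 events: when $q$ lies inside $\ch(P_j)$ the algorithm tracks a complementary wedge anchored at a window segment of $\partial \ch(P_j)$, and the tangent points must be refined by any obstructing vertices of the hulls in $\mathcal{H}(P_j,I')$. The subtle point is that each such obstruction or visibility loss corresponds to an envelope crossing in the second pseudoline structure for $P_j$ and is therefore already counted among the $O(m)$ events for $P_j$; consequently the per-event update remains $O(1)$ and the overall bound of $O(nm)$ per cyclically invariant segment is preserved.
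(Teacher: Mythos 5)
Your proposal is correct and follows essentially the same argument as the paper: each polyline contributes $O(m)$ tangent-point update events along the invariant segment (via the pseudoline envelopes of Lemmas~\ref{lem:pseudolineDataStructure}--\ref{lem:tangnetPoints}), the coordinate transform $M$ and the closed-form antiderivative make each per-event update $O(1)$, and summing over the $O(n)$ wedges (equivalently, $O(nm)$ events) gives the $O(nm)$ bound. Your write-up is a more detailed rendering of the paper's three-sentence proof, including the Case~2 bookkeeping that the paper leaves implicit.
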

\begin{proof}
Along each invariant segment $I$ of $Q$, wedges, as derived per Lemma~\ref{lem:tangnetPoints}, can undergo at most $O(m)$ tangent point updates. Thus, we need only transform $O(m)$ points (pseudolines) to maintain all tangent points of a wedge. Consequently, as there are $O(n)$ wedges, the total angular area of all wedges along $I$ can be computed in $O(nm)$ time. 
\end{proof}

For each $I\in \mathcal{I}$, we select a minimizing subset of the circular partition, $\sigma_q(\mathcal{W}_\mathcal{P})$, by defining the indicator function (bit sequence)
\begin{equation*}
\small
    \mathbb{I}_i = 
    \begin{cases}
        1 & \text{ if } \stab_{\mathcal{P}}(\overrightarrow{q_{\theta^*}}) \leq \stab_{\mathcal{P}}(\overrightarrow{q_{\theta^*+\pi}}) \text{ for all } \theta^* \in [\theta_{i-1}, \theta_i) \\
        0 & \text{ otherwise}, 
    \end{cases}
\end{equation*}
for $i = 1,\dots,4n$. This selection procedure performs the same task as the minimization operation within Eq.~\eqref{eq:depth}. The initial values of $\mathbb{I}_i$ for $i=1,\dots,4n$ are found while sorting $\sigma_q(\mathcal{W}_\mathcal{C})$, and are then iteratively updated using the points of $S$ which we recall form the endpoints of element in $\mathcal{I}$.

\begin{lemma}\label{lem:indicatorVariables}
The initial indicator values of $\mathbb{I}_i$ for $i=1,\dots,4n$ can be computed in $O(n)$ worst case time, and then iteratively updated using an additional $O(n^2)$ time. 
\end{lemma}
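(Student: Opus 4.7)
The plan is to establish both bounds by relating indicator updates directly to changes in stabbing numbers, and by exploiting the antipodal structure that is built into the circular partition.

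For the initial bound, the first step is to observe that by Definition~\ref{def:circularPartition}, the multiset of boundary angles $\{\theta_0,\ldots,\theta_{4n-1}\}$ is closed under the map $\theta \mapsto \theta + \pi \bmod 2\pi$: every tangent angle $\phi(\tau_k(w))$ is explicitly paired with its reflection $\phi(\tau_k(w)) + \pi \bmod 2\pi$ when the partition is constructed. Consequently the $4n$ angular intervals split into $2n$ antipodal pairs, where each interval $(\theta_{i-1},\theta_i)$ has a unique partner interval shifted by exactly $\pi$. Once $\sigma_{q_1}(\mathcal{W}_{\mathcal{P}})$ has been built and its per-interval stabbing numbers recorded during preprocessing, a single linear scan around the sequence suffices to link each interval with its antipode via a pointer. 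A second linear scan then sets $\mathbb{I}_i$ by comparing the stabbing number of interval $i$ with that of its antipode, an $O(1)$ operation at each step. Both scans run in $O(n)$ total time.

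For the iterative bound, the key observation is that $\mathbb{I}_i$ can only change when the stabbing number in interval $i$ or in its antipode changes. Inside any cyclically invariant segment $I \in \mathcal{I}$, the combinatorial structure of the circular partition is fixed and all stabbing numbers are constant by Observation~\ref{obs:stabbing}, so no indicator value changes. Indicator updates must therefore occur exactly at endpoints of segments of $\mathcal{I}$, i.e., at points of $S$. Each such point corresponds to crossing a common tangent between two specific convex hulls $\ch(P_i)$ and $\ch(P_j)$, which alters the nesting of the wedges $w(q,P_i)$ and $w(q,P_j)$ and hence modifies the stabbing counts of only the $O(1)$ intervals whose boundaries are currently defined by the tangent points of those two wedges (and their antipodes). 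Using the antipode pointers from the initialization, each event triggers $O(1)$ comparisons and $O(1)$ indicator flips. By Lemma~\ref{lem:stabbingNumbers} the total number of such events along $Q$ is $O(n^2)$, giving an aggregate update cost of $O(n^2)$.

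The main obstacle will be arguing cleanly that the set of indicators needing re-evaluation at each event is locatable in $O(1)$ time. This requires keeping, alongside each antipodal pair, a back-pointer from the wedge contributing to each interval's stabbing count to the interval itself, so that when the common tangent of $\ch(P_i)$ and $\ch(P_j)$ is crossed, the constantly many affected intervals and their antipodes can be accessed directly rather than via a global scan of $\sigma_q(\mathcal{W}_{\mathcal{P}})$. Maintaining this auxiliary bookkeeping under the sweep is routine — every update to the tangent-point/envelope structure already touches the same wedges — but it is the one point where the $O(1)$-per-event bound would fail if left implicit, so the proof will need to spell it out explicitly.
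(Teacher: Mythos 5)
Your proof is correct and follows essentially the same approach as the paper: initialize each $\mathbb{I}_i$ by comparing stabbing numbers of antipodal (opposing) intervals in $O(n)$ time, then observe that updates occur only at the $O(n^2)$ points of $S$, each affecting $O(1)$ indicator values. Your explicit treatment of antipode pointers and back-pointers for $O(1)$ event handling spells out bookkeeping that the paper's proof leaves implicit, but the underlying argument is the same.
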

\begin{proof}
The initial indicator values are computed by comparing the stabbing numbers of opposing wedges. As there are $O(n)$ many wedges, this procedure can be seen to take $O(n)$ worst case time. At each point of $S$ at most one wedge in $\sigma_q(\mathcal{W}_\mathcal{C})$ can have its stabbing number updated, and consequently as each such wedge is formed from opposing subdivisions at most two values in $\mathbb{I}_i$ for $i=1,\dots,4n$ need to be updated. Then, as $S$ contains $O(n^2)$ points, it follows an additional $O(n^2)$ time is needed for maintaining the minimizing selection along all of $Q$. 
\end{proof}

The final depth accumulated along $I_i$ is given by the reformulation of Eq.~\eqref{eq:depth}
\begin{equation*}
\small
    D_{I_i} = \frac{1}{\pi L(Q)}\sum_{j=1}^{4n} \mathbb{I}_j\cdot \stab_{\mathcal{P'}}(\overrightarrow{q_{\theta^*_j}}) A_j, \label{eq:wedgeSum}
\end{equation*}
where $q$ is any point along $I_i$, sample angle $\theta^*_j \in [\theta_{j-1},\theta_j)$, and $A_j$ is the angular area swept out by the wedge $w(q)$ bounded between the angles $[\phi(\tau_1(w(q)),\phi(\tau_2(w(q))]=[\theta_{j-1},\theta_j]$ as $q$ is translated across $I_i$ as calculated using Eq.~\eqref{eq:unparamInt}. 

The total depth of $Q$ is then found by evaluating the sum 
\begin{equation*}
    D(Q,\mathcal{P}) = \sum_{I\in \mathcal{I}}\sum_{I_i\in I}D_{I_i},
\end{equation*}
with respect to changes in stabbing numbers between each subsegment of $I$ and tangent point updates of each $T(P,I)$ along $I$.

\begin{lemma}\label{lem:computingTotalDepth}
All computations necessary for computing the final curve stabbing depth of a polyline $Q$ with respect to a set of polylines $\mathcal{P}$ take $O(nm^2+n^3)$ time in the worst case.
\end{lemma}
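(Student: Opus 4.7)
The plan is to sum the costs established in Lemmas~\ref{lem:tangnetPoints}, \ref{lem:wedgeArea}, and \ref{lem:indicatorVariables}, together with the cost of assembling $D_{I_i}$ via \eqref{eq:wedgeSum}. Since preprocessing (the convex-hull hierarchy, pseudoline envelopes, and common tangents) is accounted for by the preceding lemmas, here I would only track the work performed as $q$ traverses $Q$ and while the depth is accumulated.

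First I would charge the maintenance of all wedge tangent points to Lemma~\ref{lem:tangnetPoints}, yielding $O(nm^2)$ globally. Each tangent update is tied to a single wedge $w(q,P)$, and the corresponding change in $A_j$ from \eqref{eq:unparamInt} can be applied in $O(1)$ time by evaluating the closed-form antiderivative displayed above at the transformed endpoints $a'$ and $b'$; the affine transformation $M$ is $O(1)$ to build per segment of $Q$, and its application to an individual tangent point is constant-time. All wedge-area bookkeeping within the interior of an invariant segment therefore fits into the $O(nm^2)$ budget inherited from Lemma~\ref{lem:tangnetPoints}.

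Next I would handle the events at which the cyclic ordering of wedges changes, namely the points of $S$ that delimit the invariant segments $\mathcal{I}$. By Lemma~\ref{lem:stabbingNumbers} there are $O(n^2)$ such events, so $|\mathcal{I}| = O(n^2)$. At each invariant-segment boundary the affected portion of $\sigma_q(\mathcal{W}_\mathcal{P})$, the stabbing numbers (Lemma~\ref{lem:stabbingNumbers}), and the indicator bits (Lemma~\ref{lem:indicatorVariables}) are updated, and \eqref{eq:wedgeSum} is evaluated to obtain $D_{I_i}$ for the segment just closed. Since the sum contains $4n$ terms, this step takes $O(n)$ time per boundary and thus $O(n^3)$ overall; the $O(n)$ initial and $O(n^2)$ aggregate incremental costs from Lemmas~\ref{lem:stabbingNumbers} and \ref{lem:indicatorVariables} are absorbed into this bound.

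Adding the two phases gives $O(nm^2) + O(n^3) = O(nm^2 + n^3)$, as claimed. The main obstacle is arguing that the tangent-update work inside each invariant segment is truly incremental — $O(1)$ per event rather than $O(n)$ — so that the cross-term $O(n^2 m^2)$ is avoided. This rests on the observation that a tangent update changes only a single $A_j$, while leaving $\mathbb{I}_j$ and the stabbing count of that wedge untouched, so the running depth accumulator can be corrected by subtracting the old contribution and adding the new one in constant time; only the $O(n^2)$ boundary events of $\mathcal{I}$ trigger an $O(n)$-term re-evaluation of \eqref{eq:wedgeSum}.
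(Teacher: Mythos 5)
Your proposal is correct and takes essentially the same route as the paper, whose entire proof is a one-sentence citation combining the costs of Lemmas~\ref{lem:tangnetPoints}, \ref{lem:wedgeArea}, and \ref{lem:indicatorVariables}. Your additional event-based accounting --- $O(1)$ incremental work per tangent update within an invariant segment (since only one $A_j$ changes, with the indicators and stabbing numbers untouched) versus $O(n)$ work at each of the $O(n^2)$ boundaries of $\mathcal{I}$ --- is precisely the argument needed to rule out the $O(n^2m^2)$ cross-term, a detail the paper leaves implicit rather than contradicts.
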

\begin{proof}
This lemma follows directly from those given in Lemmas~\ref{lem:tangnetPoints}, \ref{lem:wedgeArea}, and \ref{lem:indicatorVariables}.
\end{proof}

This gives our main result:

\begin{theorem}[Computing Curve Stabbing Depth for Polylines]
\label{thm:mainAlg}
The curve stabbing depth of an $m$-segment polyline relative to a set of $n$ polylines, each with $O(m)$ segments, can be computed in $O(n^3 + n^2m\log^2 m + nm^2\log^2 m)$ time using $O(n^2+nm^2)$ space. 
\end{theorem}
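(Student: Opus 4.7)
The plan is to assemble the theorem by aggregating the running-time and space bounds of the preceding lemmas in Section~\ref{sec:algPreprocessing} and Section~\ref{sec:algCalculation}, with one pass to confirm that each component is invoked at most as often as the analysis of each lemma assumes. Since the algorithm has been described in an explicitly modular way — preprocessing, then traversal of $Q$ with incremental updates — the proof should be essentially a summation argument rather than a new construction.

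First I would handle the preprocessing cost. By Lemma~\ref{lem:convexHullHierachy} the nested convex-hull hierarchy is built in $O(nm^2\log m)$ time; by Lemma~\ref{lem:pseudolineDataStructure} the paired pseudoline envelope structures attached to $\mathcal{H}$ and each $\mathcal{H}(P_i, I)$ are built in $O(nm^2\log^2 m)$ time; Lemma~\ref{lem:convexHullSorting} contributes $O(nm^2\log m)$ for sorting the nested hulls along each segment of $Q$; Lemma~\ref{lem:pseudolineExtraction} contributes $O(nm^2)$ for extracting the envelope sequences $T(P_i)$ and $T(P_i,I)$; and Lemma~\ref{lem:commonTangents} contributes $O(n^2m\log^2 m)$ for the set $\tau(\mathcal{H})$ of common tangents defining the points of $S$. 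Adding these terms yields $O(nm^2\log^2 m + n^2m\log^2 m)$ for the preprocessing phase.

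Next I would account for the traversal along $Q$. By Lemma~\ref{lem:stabbingNumbers} the initial stabbing numbers cost $O(n^2m)$ with $O(n^2)$ further work to maintain them across the $O(n^2)$ points of $S$. By Lemma~\ref{lem:tangnetPoints} handling all tangent updates (including loss-of-visibility events inside each $\ch(P_i)$) and evaluating the corresponding wedge angles across all segments of $Q$ costs $O(nm^2)$ in total; by Lemma~\ref{lem:wedgeArea} the closed-form integral for each wedge along a cyclically invariant subsegment is evaluated in $O(nm)$ time per invariant segment — and when summed via the aggregate bound of Lemma~\ref{lem:computingTotalDepth} this contributes $O(nm^2 + n^3)$. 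Finally, Lemma~\ref{lem:indicatorVariables} adds $O(n + n^2)$ for maintaining the minimum-side indicators $\mathbb{I}_i$. Summing preprocessing and traversal gives
\begin{equation*}
    O\bigl(nm^2\log^2 m + n^2m\log^2 m + n^3\bigr),
\end{equation*}
which matches the bound stated in the theorem, since every lower-order term (such as $O(nm^2\log m)$ or $O(n^2m)$) is dominated by one of these three summands.

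The space bound is then a straightforward audit of the structures kept simultaneously: the convex-hull hierarchy and extracted envelope sequences each take $O(nm^2)$ space across all segments of $Q$ and all polylines, the stored common tangents and stabbing-number information take $O(n^2)$, and the dynamic pseudoline data structures can be reused across segments so their working space is subsumed. The sum is $O(n^2 + nm^2)$, as claimed. The only subtlety I would flag as the main care-point is double-checking that the per-segment structures in Lemma~\ref{lem:pseudolineDataStructure} and Lemma~\ref{lem:pseudolineExtraction} are rebuilt (and discarded) as the algorithm sweeps through the $O(m)$ segments of $Q$, so that the cumulative space stays $O(nm^2)$ rather than $O(nm^3)$ — otherwise the time totals already include the rebuild cost per segment, so nothing further is needed.
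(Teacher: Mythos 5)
Your proposal is correct and follows essentially the same route as the paper, whose proof simply combines Lemmas~\ref{lem:convexHullHierachy} and \ref{lem:pseudolineDataStructure}--\ref{lem:wedgeArea} covering the preprocessing and traversal stages; your term-by-term summation reproduces exactly that aggregation. In fact, your explicit audit of the $O(n^2+nm^2)$ space bound and of the per-segment reuse of the pseudoline structures is more detailed than the paper's one-line argument.
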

\begin{proof}
This result follows immediately by combining the results of  Lemmas~\ref{lem:convexHullHierachy} and \ref{lem:pseudolineDataStructure}--\ref{lem:wedgeArea}, which detail the time required for prepossessing and subsequent computation stages of the algorithm.
\end{proof}

When $n \in \Theta(m)$, the running time in Theorem~\ref{thm:mainAlg} can be expressed as $O(r^{3/2}\log^2 r)$, where $r \in \Theta(nm)$ denotes the input size, i.e., the total number of vertices in 
$\{Q\} \cup \mathcal{P}$.

\subsection{Degenerate Cases: Relaxing the General Position Assumption}
\label{sec:degenerateCases}
If we relax the assumptions of general position to allow parallel segments between polylines of $\mathcal{P}$ and $Q$ Lemma~\ref{lem:pseudolineordering} no longer holds, and consequently the pseudoline data structure can no longer be directly used. In such situations, it is possible to subdivide the domains of the angular curves such that no two curves cross more than once on a single interval, and then apply the described method on each interval separately.

\section{Properties}
\label{sec:properties}
Curve stabbing depth is intended to serve as an empirical tool for the study of observational curve (trajectory) data without restriction on monotonicity. In this section we examine properties of this new depth measure.

In a fashion similar to the much cited work \cite{Zuo:2000}, which lists a set four properties multivariate depth functions should satisfy, the papers \cite{Reyes:2016} and \cite{Gijbels:2017} consider a list of six potentially desirable properties that functional depth measures ought to satisfy; the discussion therein highlights that there is currently no single generally assumed set of properties imposed on functional depth measures. Perhaps unsurprisingly, the domain of unparameterized curve depth measures, for curves which can be thought of informally as extending functional data, is equally sparse in established properties that ought to be satisfied. Despite this lack of agreement, we highlight a selection of these properties for our depth measure. 

Throughout Section~\ref{sec:properties}, the depth of a curve is taken to be normalized with respect to the number of curves in the sample; i.e., for a plane curve $Q\in \Gamma$ and a set $\mathcal{C}\subseteq \Gamma$ of $n$ plane curves, take 
\begin{equation*}
    D(Q,\mathcal{C})=\frac{1}{n \pi L(Q)}\int_{q\in Q}\int_{0}^{\pi} \min\{\stab_{\mathcal{C}}(\overrightarrow{q_\theta}),\stab_{\mathcal{C}}(\overrightarrow{q_{\theta+\pi}})\} \,d\theta\, ds.
\end{equation*}
This is done to eliminate dependence on the number of curves in $\mathcal{C}$ in the ensuing analysis.

\subsection{Bounded Depth}
It is trivial to observe the maximum obtainable stabbing number of a ray is bounded above by the number of curves in $\mathcal{C}$, and therefore the modified curve stabbing depth is bounded to the unit interval. More formally, we see
\begin{observation}
For any $Q$ and any finite subset $\mathcal{C}$ from $\Gamma$, the curve stabbing depth $D(Q,\mathcal{C})$ is bounded within $[0,1]$.
\end{observation}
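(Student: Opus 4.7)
The plan is to bound the integrand of \eqref{eq:depth} pointwise, then integrate. First I would note that for any point $q \in \mathbb{R}^2$ and any angle $\theta$, the stabbing number $\stab_{\mathcal{C}}(\overrightarrow{q_\theta})$ counts a subset of the finite set $\mathcal{C}$, so it is a non-negative integer with $0 \leq \stab_{\mathcal{C}}(\overrightarrow{q_\theta}) \leq n$. Taking the minimum of two such quantities preserves both bounds, so for every $q \in Q$ and every $\theta \in [0, \pi)$,
$$0 \;\leq\; \min\{\stab_{\mathcal{C}}(\overrightarrow{q_\theta}),\stab_{\mathcal{C}}(\overrightarrow{q_{\theta+\pi}})\} \;\leq\; n.$$

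Next I would integrate these pointwise bounds. Since the normalizing factor $1/(n \pi L(Q))$ is positive, integrating the left inequality over $\theta \in [0, \pi)$ and $q \in Q$ yields $D(Q, \mathcal{C}) \geq 0$. Integrating the right inequality gives
$$\int_{q \in Q} \int_0^\pi n \, d\theta \, ds \;=\; n \pi L(Q),$$
so dividing by the normalization yields $D(Q, \mathcal{C}) \leq 1$. Combining these inequalities establishes $D(Q,\mathcal{C}) \in [0,1]$.

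The only minor subtlety is the degenerate case $L(Q) = 0$, in which $Q$ collapses to a single point and \eqref{eq:depth} must be replaced by \eqref{eq:pointDepth}; but the same pointwise bound applies directly inside the single integral, so the argument carries over with the normalizing factor $1/(n\pi)$ and the measure of $[0,\pi)$ producing the same conclusion. I do not anticipate any real obstacle here: the observation is essentially a one-line application of monotonicity of integration against a trivial pointwise bound, and its purpose is to record that the normalization chosen at the start of Section~\ref{sec:properties} indeed rescales the depth into the unit interval.
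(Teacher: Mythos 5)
Your proof is correct and takes essentially the same approach as the paper: both bound the integrand pointwise by $0 \leq \min\{\stab_{\mathcal{C}}(\overrightarrow{q_\theta}),\stab_{\mathcal{C}}(\overrightarrow{q_{\theta+\pi}})\} \leq n$ and integrate, with the normalization $1/(n\pi L(Q))$ collapsing the upper bound to $1$. Your explicit treatment of the degenerate case $L(Q)=0$ is a minor addition the paper omits, but it does not change the substance of the argument.
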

\begin{proof}
By definition it follows that 
\begin{align*}
    0\leq D(Q,\mathcal{C}) &= \frac{1}{n \pi L(Q)}\int_{q\in Q}\int_{0}^{\pi} \min\{\stab_{\mathcal{C}}(\overrightarrow{q_\theta}),\stab_{\mathcal{C}}(\overrightarrow{q_{\theta+\pi}})\} \,d\theta\, ds\\
    &\leq \frac{1}{n \pi L(Q)}\int_{q\in Q}\int_{0}^{\pi} n \,d\theta\, ds\\
    &= 1.
\end{align*}
Where the lower bound of zero results from the fact that $\stab_{\mathcal{C}}(\cdot) \geq 0$, as shown in Observation~\ref{obs:non-zeroRegion}.
\end{proof}

\subsection{Nondegeneracy}
\label{sec:nondegeneracy}
Nondegeneracy is an important property when the goal of a depth measure is to provide a total ordering (ranking) to a set of elements, such as a center-outward ordering, and for the notion of a median to be well defined, as it establishes the supremum and infimum are distinguishable under the measure. If a depth measure proves to be degenerate, the comparison between two elements' depths is not necessarily meaningful.

\begin{corollary}\label{lem:nondegeneracy}
Curve stabbing depth is nondegenerate, meaning for any given finite set $\mathcal{C}\subseteq \Gamma$ of plane curves 
\begin{equation*}
    \inf_{Q\in \Gamma} D(Q,\mathcal{C}) < \sup_{Q\in \Gamma} D(Q,\mathcal{C}),
\end{equation*}
when $\ch(\mathcal{C})$, the convex hull composed of all curves in $\mathcal{C}$, has non-zero area. 
\end{corollary}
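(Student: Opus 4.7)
The plan is to exhibit two curves $Q_1, Q_2 \in \Gamma$ satisfying $D(Q_1, \mathcal{C}) = 0 < D(Q_2, \mathcal{C})$, which immediately yields the desired strict inequality.

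First I would establish that $\inf_{Q\in\Gamma} D(Q,\mathcal{C}) = 0$ by letting $Q_1$ be any curve whose trace lies entirely outside $\ch(\mathcal{C})$; such curves exist because $\ch(\mathcal{C})$ is bounded. For every $q$ on $Q_1$, separation of convex sets yields a line $\ell$ through $q$ with $\ch(\mathcal{C})$ on one closed side. For each $\theta \in [0,\pi)$ the opposing rays $\overrightarrow{q_\theta}$ and $\overrightarrow{q_{\theta+\pi}}$ point into opposite half-planes with respect to $\ell$, so at least one of them points away from $\mathcal{C}$ and has stabbing number zero. The integrand in \eqref{eq:depth} therefore vanishes identically, giving $D(Q_1,\mathcal{C}) = 0$.

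Next I would produce $Q_2$ with $D(Q_2,\mathcal{C}) > 0$. Because $\ch(\mathcal{C})$ has non-zero area, $\bigcup \mathcal{C}$ is not contained in any single line, so I can pick two distinct points $p_1, p_2$ lying in the relative interiors of curves $C_1, C_2 \in \mathcal{C}$ (possibly $C_1 = C_2$), chosen generically so that the line $\overline{p_1 p_2}$ is transverse to $C_1$ at $p_1$ and to $C_2$ at $p_2$. Set $q_0 = (p_1+p_2)/2$ and let $\theta_0$ be the direction from $q_0$ to $p_1$. At $\theta_0$ both opposing rays from $q_0$ stab $\mathcal{C}$ (at $p_1$ and $p_2$), so the integrand is at least one. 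By transversality, the wedges $w(q_0, C_1)$ and $w(q_0, C_2)$ contain open neighborhoods of $\theta_0$ and $\theta_0 + \pi$ respectively, so there is a nondegenerate open interval of $\theta$ on which both opposing rays stab $\mathcal{C}$; consequently the integrand in \eqref{eq:pointDepth} is bounded below by one on a set of positive Lebesgue measure, and $D(q_0,\mathcal{C}) > 0$. Taking $Q_2$ to be a sufficiently short segment centered at $q_0$, the transversal intersections persist by continuity, so the integrand remains bounded below by a positive constant along $Q_2$ and $D(Q_2,\mathcal{C}) > 0$.

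The main obstacle is the positive-measure claim at $q_0$: if $\overline{p_1 p_2}$ were tangent to $C_1$ at $p_1$ (or to $C_2$ at $p_2$), the set of $\theta$ for which both rays stab $\mathcal{C}$ could collapse to the single direction $\theta_0$ and contribute nothing to the integral. This is handled by the transversality choice, which is always possible because the tangent directions at any point of a piecewise-differentiable curve form a discrete set, so the non-transverse configurations constitute a measure-zero subset of the two-dimensional parameter space of admissible pairs $(p_1, p_2) \in C_1 \times C_2$.
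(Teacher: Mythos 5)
Your proposal is correct, and it takes a more self-contained route than the paper. The paper proves this corollary purely by citation: $\inf_{Q\in\Gamma} D(Q,\mathcal{C})=0$ is obtained from Corollary~\ref{cor:vanishingInfinity} (whose proof rests on Observation~\ref{obs:non-zeroRegion}), and $0<\sup_{Q\in\Gamma} D(Q,\mathcal{C})$ is attributed to Observation~\ref{obs:medianPoint}. Your infimum half is the same mechanism that drives Observation~\ref{obs:non-zeroRegion} --- a separating line through a point outside $\ch(\mathcal{C})$ forces one ray of every opposing pair to have stabbing number zero --- so there you are simply re-deriving what the paper cites. The genuine difference is on the supremum side: Observation~\ref{obs:medianPoint} only asserts that a deepest point $m\in\ch(\mathcal{C})$ exists, and the paper never actually argues that $D(m,\mathcal{C})>0$; your midpoint-plus-transversality construction (both rays at $\theta_0$ stab $\mathcal{C}$, and transversality opens the wedges $w(q_0,C_1)$ and $w(q_0,C_2)$ into intervals around $\theta_0$ and $\theta_0+\pi$, so the integrand of \eqref{eq:pointDepth} is at least $1$ on a set of positive angular measure) supplies exactly the argument the paper leaves implicit. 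The paper's route buys brevity and reuse of its machinery; yours buys an actual proof of the nontrivial half.

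One caveat, which affects the statement itself as much as your proof: your construction requires $C_1$ and $C_2$ to have non-zero arc length near $p_1$ and $p_2$, since for degenerate point-curves (which $\Gamma$ admits) there are no tangent directions and the corresponding wedges are single directions of measure zero. Indeed, for a set $\mathcal{C}$ of three non-collinear point-curves, $\ch(\mathcal{C})$ has positive area yet the integrand of \eqref{eq:depth} vanishes for almost every $\theta$ at every $q$, so $D(\cdot,\mathcal{C})\equiv 0$ and the corollary fails as stated. Your transversality/genericity step should therefore be read as carrying an implicit non-degeneracy hypothesis on the curves of $\mathcal{C}$; also note that for polylines the set of non-transverse pairs is not merely ``tangency at discretely many directions'' (every pair with $p_1$ on the supporting line of a segment $C_2$ is bad), although good pairs do exist whenever the curves have positive length and are not all collinear, which is what your argument really needs.
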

This corollary follows directly from the results of Section~\ref{sec:medianPoints} via Observation~\ref{obs:medianPoint}, which shows $0< \sup_{Q} D(Q,\mathcal{C})$, and Corollary~\ref{cor:vanishingInfinity}, which shows $\inf_{Q}D(Q,\mathcal{C}) = 0$.

\subsection{Transformation Invariance}
Transforming data to a different coordinate systems during the process of data analysis is common practice for multivariate data. Thus, the sensitivity of curve stabbing depth with respect to common classes of transformations is of natural interest.

\subsubsection{Affine Invariance}
\label{sec:properties.affine}
A depth measure, $d$, is {\em affine invariant} if for all plane curves $Q$, all sets $\mathcal{C}$ of plane curves, and all affine transformations $f:\mathbb{R}^2\to\mathbb{R}^2$, $d(f(Q), f(\mathcal{C})) = d(Q,\mathcal{C})$. That is, the depth of $Q$ relative to $\mathcal{C}$ remains unchanged when any affine transformation $f$ is applied to both $Q$ and $\mathcal{C}$.

Curve stabbing depth does not satisfy general affine invariance, as it is not invariant under shear transformations. Figure~\ref{fig:affineCounter} contains a counterexample. Moreover, the same figure illustrates the relative rankings of curves with respect to their depth need not be preserved under shear transformations. Consider as the population, $\mathcal{C}$, all line segments drawn in the figure (all black and blue line segments together). After applying the reverse shear transformation than depicted, the relative ranks of the two central curves will change from the central leftmost segment being deeper than the central rightmost to the opposite relation.
Various other depth measures for points in $\mathbb{R}^d$, e.g., Oja depth \cite{Oja:1983} and integrated rank-weighted depth \cite{Durocher:2019}, whose calculation depends on area or angles are also not invariant under affine transformations.

\begin{figure}[!ht]
    \centering
    \includegraphics[scale=0.8]{./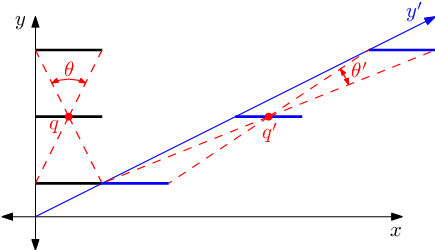}
    \caption{\small Three parallel line segments are drawn in the plane with a point $q$ and associated wedges highlighted in red on the center most segment. Drawn to the right, in blue, is the result of applying a shear transformation to the given segments, and highlighted wedge. In particular, notice the wedge angle is greatly reduced after the transformation, and thus the depth attributed to the point $q'$ is also proportionately reduced. This observation holds true for all points along the segments; consequently, the depth of all segments is not preserved.}
    \label{fig:affineCounter}
\end{figure}

\subsubsection{Similarity Invariance}
If a depth measure's calculation depends on areas or angles determined by its input, and it is not invariant under affine transformations, a natural property for it to satisfy is invariance under similarity transformations, a subset of affine transformations. 
Recall that two geometric objects are said to be {\em similar} if one can be obtained from the other by a similarity transformation consisting of a finite sequence of dilations and rigid motions (e.g., uniform scaling, translation, rotation, or reflection operations); see \cite{Modenov:1965} for a detailed discussion on geometric similarity. 

A depth measure, $d$, is invariant under similarity transformations if 
$d(f(Q),f(\mathcal{C})) = d(Q,\mathcal{C})$
for all sets of plane curves $\mathcal{C}$, all plane curves $Q$, 
and all similarity transformations $f:\mathbb{R}^2\to\mathbb{R}^2$.

As relative angles and lengths between points are preserved under similarity transformations, and the curve stabbing depth of a curve is normalized with respect to arc length and the sweep angle, it is immediate that curve stabbing depth satisfies the above criterion. The following observation is thus stated without proof.

\begin{observation}\label{lem:similarity}
Curve stabbing depth is invariant under similarity transformations.
\end{observation}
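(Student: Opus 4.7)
The plan is to decompose an arbitrary similarity transformation $f:\mathbb{R}^2\to\mathbb{R}^2$ as the composition of a uniform scaling by some factor $\lambda>0$ with an isometry $g$ (translation, rotation, and possibly a reflection), and to verify invariance under each ingredient separately; composition then yields the general statement. The key geometric fact underlying everything is that $f$ maps lines to lines and rays to rays, so for every $C\in\mathcal{C}$, $\overrightarrow{q_\theta}\cap C\neq\emptyset$ if and only if $f(\overrightarrow{q_\theta})\cap f(C)\neq\emptyset$; hence stabbing numbers transfer pointwise, $\stab_{f(\mathcal{C})}\bigl(f(\overrightarrow{q_\theta})\bigr)=\stab_{\mathcal{C}}(\overrightarrow{q_\theta})$.

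First I would dispatch uniform scaling. Under $f(x)=\lambda x$, arc length satisfies $L(f(Q))=\lambda L(Q)$, and the change of variables $q'=f(q)$ yields $dq'=\lambda\,dq$; angles, being scale-free, are unchanged. Substituting into Definition~\ref{def:curveDepth}, the factor $\lambda$ from $dq'$ cancels the one appearing in $L(f(Q))$ in the denominator, so that $D(f(Q),f(\mathcal{C}))=D(Q,\mathcal{C})$.

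Next I would treat the isometry $g$. Translations trivially preserve both lengths and the ray-curve incidence structure. For a rotation by angle $\theta_0$ (respectively a reflection), $g$ carries $\overrightarrow{q_\theta}$ to $\overrightarrow{g(q)_{\theta+\theta_0}}$ (resp.\ $\overrightarrow{g(q)_{\theta_0-\theta}}$), and one substitutes $\phi=\theta+\theta_0$ (resp.\ $\phi=\theta_0-\theta$) in the inner integral. After the substitution the domain of integration becomes a shifted or reversed interval of length $\pi$ rather than $[0,\pi)$. The crucial observation is that the integrand
\begin{equation*}
h(\theta)=\min\bigl\{\stab_{\mathcal{C}}(\overrightarrow{q_\theta}),\,\stab_{\mathcal{C}}(\overrightarrow{q_{\theta+\pi}})\bigr\}
\end{equation*}
is $\pi$-periodic in $\theta$, because interchanging $\theta$ and $\theta+\pi$ leaves the unordered pair of antipodal rays unchanged; hence the integral of $h$ over any closed interval of length $\pi$ equals its integral over $[0,\pi)$. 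The outer integral over $q\in Q$ transforms without any Jacobian since $g$ is an isometry, and $L(g(Q))=L(Q)$, completing the isometry case. Composition of the two cases gives the result for every similarity $f$.

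The main obstacle will be the bookkeeping around the angular parameter: because the definition of $D$ integrates over $\theta\in[0,\pi)$ rather than $[0,2\pi)$, a rotation or reflection does not in general preserve this range, so one must legitimize treating the integrand as a function on $\mathbb{R}/\pi\mathbb{Z}$ before a naive change of variables can be applied. Once that periodicity is established via the $\min$ over antipodal rays, and once the fact that reflections permute the pair $\{\overrightarrow{q_\theta},\overrightarrow{q_{\theta+\pi}}\}$ among antipodal pairs is noted, the remaining verification is routine.
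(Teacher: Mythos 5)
Your proof is correct, and it is in fact more than the paper provides: the paper states this observation \emph{without proof}, offering only the one-sentence justification that similarity transformations preserve relative angles and lengths, and that the depth is normalized by arc length and by the sweep angle $\pi$. Your argument is a rigorous elaboration of exactly that intuition, and the decomposition into uniform scaling plus isometry is the natural way to organize it: the scaling case hinges on the Jacobian $\lambda$ cancelling against $L(f(Q))=\lambda L(Q)$, and the isometry case hinges on the incidence-preservation fact $\stab_{f(\mathcal{C})}\bigl(f(\overrightarrow{q_\theta})\bigr)=\stab_{\mathcal{C}}(\overrightarrow{q_\theta})$. The genuinely valuable addition in your write-up is the $\pi$-periodicity of the integrand $h(\theta)=\min\{\stab_{\mathcal{C}}(\overrightarrow{q_\theta}),\stab_{\mathcal{C}}(\overrightarrow{q_{\theta+\pi}})\}$, which is the one subtle point the paper's informal justification glosses over: without it, the change of variables induced by a rotation or reflection moves the angular domain off $[0,\pi)$, and the equality of the two integrals is not literally a substitution. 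Identifying that the $\min$ over antipodal rays makes $h$ well defined on $\mathbb{R}/\pi\mathbb{Z}$, so that its integral over any interval of length $\pi$ is the same, is precisely what legitimizes the paper's appeal to the ``sweep angle'' normalization.
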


\subsection{Bounding the Region of Non-Zero Depth}
\label{sec:properties.non-zeroDepth}
Due to the discrete nature of stabbing rays, we can bound the region of non-zero depth for curves of  non-zero length using, $\ch(\mathcal{C})$, the convex hull of all planar curves in the sample population. Which is to say, only the portion of a curve that falls into $\ch(\mathcal{C})$ might contribute positively to the total depth score of the curve, with the proportion falling outside this region accumulating zero depth. This result is formalized in Observation~\ref{obs:non-zeroRegion}.

\begin{figure}[!ht]
    \centering
    \includegraphics[scale=0.55]{./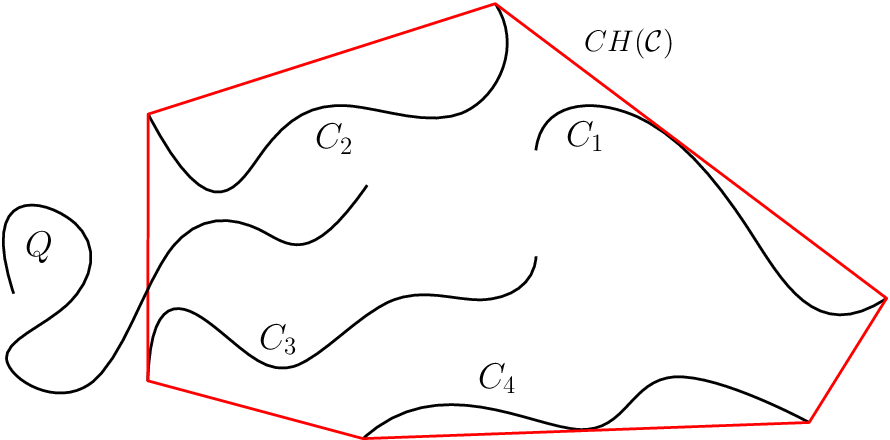}
    \caption{Depiction of the convex hull, $\ch(\mathcal{C})$, of a set of plane curves and a query curve $Q$ whose depth is being queried.}
    \label{fig:vanshingConvexHull}
\end{figure}

\begin{observation}\label{obs:non-zeroRegion}
For any set $\mathcal{C} \subseteq \Gamma$ of plane curves and any plane curve $Q \in \Gamma$, 
\begin{equation*}
    D(Q, \mathcal{C}) = \frac{L(Q \cap \ch(\mathcal{C}))}{L(Q)} D (Q \cap \ch(\mathcal{C}), \cal C).
\end{equation*}
 
%
\end{observation}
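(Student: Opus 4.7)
The plan is to show that the only contribution to the depth integral comes from points of $Q$ lying in $\ch(\mathcal{C})$, by proving that the inner integral $\int_0^\pi \min\{\stab_{\mathcal{C}}(\overrightarrow{q_\theta}),\stab_{\mathcal{C}}(\overrightarrow{q_{\theta+\pi}})\}\,d\theta$ vanishes for every $q \in Q \setminus \ch(\mathcal{C})$. This reduces the outer integral to one taken over $Q \cap \ch(\mathcal{C})$, which yields the stated equality (interpreting the right-hand side as the depth functional applied to the sub-curve $Q \cap \ch(\mathcal{C})$ relative to $\mathcal{C}$).

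To establish the pointwise vanishing, I would invoke the separating hyperplane theorem. Since $\ch(\mathcal{C})$ is a closed convex set and $q \notin \ch(\mathcal{C})$, there exists a line $\ell$ strictly separating $q$ from $\ch(\mathcal{C})$. Equivalently, the \emph{visibility cone} of $\ch(\mathcal{C})$ from $q$ (the set of directions $\theta \in [0,2\pi)$ for which $\overrightarrow{q_\theta}$ meets $\ch(\mathcal{C})$) is an open arc of angular width strictly less than $\pi$. For any $\theta \in [0,\pi)$, the rays $\overrightarrow{q_\theta}$ and $\overrightarrow{q_{\theta+\pi}}$ point in opposite directions, so at most one of them can lie inside an arc of width $<\pi$. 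The other ray misses $\ch(\mathcal{C})$ entirely, and since $\mathcal{C} \subseteq \ch(\mathcal{C})$, it stabs no curve in $\mathcal{C}$. Thus $\min\{\stab_{\mathcal{C}}(\overrightarrow{q_\theta}),\stab_{\mathcal{C}}(\overrightarrow{q_{\theta+\pi}})\} = 0$ for every $\theta$, so the inner integral is zero as claimed.

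Having shown this, the (double) integral in \eqref{eq:depth} reduces to
\begin{equation*}
    \int_{q\in Q}\!\int_{0}^{\pi} \min\{\stab_{\mathcal{C}}(\overrightarrow{q_\theta}),\stab_{\mathcal{C}}(\overrightarrow{q_{\theta+\pi}})\}\,d\theta\,dq
    = \int_{q\in Q\cap\ch(\mathcal{C})}\!\int_{0}^{\pi} \min\{\stab_{\mathcal{C}}(\overrightarrow{q_\theta}),\stab_{\mathcal{C}}(\overrightarrow{q_{\theta+\pi}})\}\,d\theta\,dq,
\end{equation*}
which is the required identity.

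The main obstacle is really a minor one: the boundary case where $q \in \partial\ch(\mathcal{C})$, at which the visibility cone may reach angular width exactly $\pi$ (along a supporting line). I would handle this by noting that the argument above uses only strict separation for $q$ outside $\ch(\mathcal{C})$, so boundary points are absorbed into $Q \cap \ch(\mathcal{C})$ and need not be treated separately. A secondary cosmetic issue is the normalization factor $L(Q)$ in \eqref{eq:depth}: the identity as written is really a statement about the integrand (or about the un-normalized depth), since replacing $Q$ by $Q \cap \ch(\mathcal{C})$ in the denominator would otherwise rescale the value; I would state this caveat explicitly and show that the equality holds for the integral contributing to depth, which is the content used in subsequent results.
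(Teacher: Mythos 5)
Your proposal is correct and follows essentially the same route as the paper's proof: decompose the depth integral over $Q \cap \ch(\mathcal{C})$ and its complement, then show the integrand vanishes outside the hull because two opposite rays rooted at $q \notin \ch(\mathcal{C})$ cannot both meet the convex set (the paper asserts this directly from convexity; your separating-hyperplane/visibility-cone argument is just a more explicit justification of the same fact). Your closing caveat about the normalization factor $L(Q)$ is well taken — the paper's proof likewise keeps $L(Q)$ in the denominator and only reduces the domain of integration — but this is a sharpening of the statement's phrasing, not a different proof.
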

\begin{proof}
The integral definition of the curve stabbing depth of any such curve $Q$ with respect to $\mathcal{C}$ can be broken in two,
\begin{align*}
    D(Q,\mathcal{C}) = \frac{1}{n \pi L(Q)}\bigg[ &\int_{Q\cap \ch(\mathcal{C})}\int_{0}^{\pi} \min\{\stab_{\mathcal{C}}(\overrightarrow{q_\theta}),\stab_{\mathcal{C}}(\overrightarrow{q_{\theta+\pi}})\} \,d\theta\, ds\\ +& \int_{Q\cap \overline{\ch(\mathcal{C})}}\int_{0}^{\pi} \min\{\stab_{\mathcal{C}}(\overrightarrow{q_\theta}),\stab_{\mathcal{C}}(\overrightarrow{q_{\theta+\pi}})\} \,d\theta\, ds \bigg].
\end{align*}
With the first integral taken over the region of intersection between $Q$ and $\ch(\mathcal{C})$ and the second the portion of $Q$ that does not cross into $\ch(\mathcal{C})$. It is plain that 
\begin{equation*}
    \min\{\stab_{\mathcal{C}}(\overrightarrow{q_\theta}),\stab_{\mathcal{C}}(\overrightarrow{q_{\theta+\pi}})\}=0
\end{equation*}
for any $q\in Q\cap \overline{\ch(\mathcal{C})}$, since regardless of the particular value of $\theta$ only one ray may be oriented in the direction of $\ch(\mathcal{C})$ with the opposing ray avoiding all possible intersections. To do otherwise, would require nonconvexity of $\ch(\mathcal{C})$. Thus, the above integral decomposition reduces to 
\begin{equation*}
     D(Q,\mathcal{C}) = \frac{1}{n \pi L(Q)}\int_{Q\cap \ch(\mathcal{C})}\int_{0}^{\pi} \min\{\stab_{\mathcal{C}}(\overrightarrow{q_\theta}),\stab_{\mathcal{C}}(\overrightarrow{q_{\theta+\pi}})\} \,d\theta\, ds,
\end{equation*}
which is to say the portion of $Q$ outside of $\ch(\mathcal{C})$ incurs zero depth. Moreover, provided the conditions of Lemma~\ref{lem:nondegeneracy} are satisfied, it follows by the results of said lemma the last integral expression is necessarily non-zero.
\end{proof}

A corollary of Observation~\ref{obs:non-zeroRegion} is that any curve $Q$ that lies entirely outside $CH(\mathcal{C})$ has curve stabbing depth zero relative to $\mathcal{C}$. This is consistent with many measures of depth for point data, e.g., simplicial depth, Tukey depth, convex hull peeling depth, etc., where any query point outside the convex hull of the population has depth zero relative to the population. 

\subsection{Vanishing at Infinity}
\label{sec:vanishing}
Intuitively, the further away a curve is from the central cloud of curves, which in our case admits a convex hull boundary, the lower its depth score should be. Formulating such a property statement for our intensely geometric measure of depth is almost immediate. Specifically, a depth measure, $d$, is said to {\em vanish at infinity} if for any $Q\in \Gamma$ and a fixed finite subset $\mathcal{C}\subset\Gamma$,
\begin{equation*}
    \lim_{\dist(Q,\mathcal{C}) \to \infty}d(Q,\mathcal{C}) = 0 
\end{equation*}
as $Q$ is translated away from $\mathcal{C}$, where $\dist(Q,\mathcal{C})$ understood to represent a distance between the curve $Q$ and a sample of curves, $\mathcal{C}$, e.g., taking the minimum over all Fr\'{e}chet-distances between $Q$ and the elements of $\mathcal{C}$. 

\begin{corollary}\label{cor:vanishingInfinity}
For any sample of plane curves $\mathcal{C}$, the curve stabbing depth $D(Q,\mathcal{C})$ vanishes at infinity.
\end{corollary}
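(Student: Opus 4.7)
The plan is to obtain the corollary almost immediately from Observation~\ref{obs:non-zeroRegion} together with the boundedness of $\ch(\mathcal{C})$. Since $\mathcal{C}$ is a finite set of rectifiable curves, $\ch(\mathcal{C})$ is a bounded (in fact compact) subset of $\mathbb{R}^2$ of finite diameter. Fix $Q$, and consider any sequence of translates $Q + t$ with $\|t\| \to \infty$: because $Q$ also has finite diameter, once $\|t\|$ exceeds the sum of the diameters of $Q$ and $\ch(\mathcal{C})$ plus the distance from $Q$ to $\ch(\mathcal{C})$, the translate $Q+t$ is disjoint from $\ch(\mathcal{C})$. Any reasonable convention for $\dist(Q,\mathcal{C}) \to \infty$ (e.g.\ minimum Fr\'echet distance between $Q$ and the elements of $\mathcal{C}$, or minimum Euclidean distance between $Q$ and $\bigcup \mathcal{C}$) forces this eventual disjointness, since if $Q$ met $\ch(\mathcal{C})$ then its distance to some element of $\mathcal{C}$ would be at most $\mathrm{diam}(\ch(\mathcal{C}))$.

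Next, I would verify that at any point $q \notin \ch(\mathcal{C})$, the integrand
\[
    \min\{\stab_{\mathcal{C}}(\overrightarrow{q_\theta}),\stab_{\mathcal{C}}(\overrightarrow{q_{\theta+\pi}})\}
\]
is identically zero in $\theta \in [0,\pi)$. This is the key geometric step, and it is essentially the argument already carried out inside the proof of Observation~\ref{obs:non-zeroRegion}: by separation of the point $q$ from the convex set $\ch(\mathcal{C})$, there is a line through $q$ with $\mathcal{C}$ lying entirely on one side; for each $\theta$, exactly one of the two opposing rays $\overrightarrow{q_\theta}$, $\overrightarrow{q_{\theta+\pi}}$ lies in the open half-plane not containing $\ch(\mathcal{C})$ and hence cannot stab any element of $\mathcal{C}$. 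The minimum of the two stabbing numbers is therefore $0$.

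Combining the two paragraphs, once the translated copy of $Q$ lies entirely outside $\ch(\mathcal{C})$, every point $q \in Q$ contributes zero to the outer integral in the definition of $D(Q,\mathcal{C})$; since the normalization factor $L(Q)$ is invariant under translation, we obtain $D(Q,\mathcal{C}) = 0$ exactly, not merely in the limit. Hence $D(Q,\mathcal{C}) \to 0$ as $\dist(Q,\mathcal{C}) \to \infty$, and in fact the depth stabilizes at $0$ past a finite threshold. The only real obstacle is bookkeeping: one must pin down a concrete interpretation of ``$\dist(Q,\mathcal{C}) \to \infty$'' and check that divergence in that chosen distance eventually pushes $Q$ out of the compact set $\ch(\mathcal{C})$, after which the argument is entirely a restatement of Observation~\ref{obs:non-zeroRegion}.
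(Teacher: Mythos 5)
Your proof is correct and follows essentially the same route as the paper's own (deliberately brief) argument: invoke Observation~\ref{obs:non-zeroRegion} so that only $Q \cap \ch(\mathcal{C})$ can contribute depth, then observe that translating $Q$ far enough makes it disjoint from the compact set $\ch(\mathcal{C})$, at which point $D(Q,\mathcal{C}) = 0$ exactly. Your write-up simply fills in the details the paper leaves implicit (the separation argument at points outside the hull, and the check that divergence of $\dist(Q,\mathcal{C})$ under translation forces eventual disjointness), so no further changes are needed.
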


The proof of this corollary follows directly from the proof of Observation~\ref{obs:non-zeroRegion}. As such, only an outline is given.

\begin{proof}
Let $Q$ be a given plane curve, $\mathcal{C}$ be any set of plane curves, and $\ch(\mathcal{C})$ the convex hull containing all elements of $\mathcal{C}$. By Definition~\ref{def:curveDepth}, $D(Q,\mathcal{C})=0$ when $Q$ is translated completely outside of $\ch(\mathcal{C})$. Consequently, it is clear that $\lim_{\dist(Q,\mathcal{C})\to \infty}D(Q,\mathcal{C})= 0$, as desired.
\end{proof}

\subsection{Median Curves and Depth Median Points}
\label{sec:medianPoints}
The notion of a median element (or deepest element) under a depth measure is useful for all the same reasons as in standard multivariate settings, .e.g., providing a sample estimator or sample statistic of an unknown distribution. As a result, the topic of a median element appears more than once in the discussion of the next few sections. However, before such development can take place, we must define the form of a median element under our depth measure.

As a starting point, observe not all points along the length of a curve $Q$ contribute equally to the curve stabbing depth of $Q$ relative to the set of curves $\mathcal{C}$.  Recall, the depth of a point (a degenerate curve) is given by Eq.~\eqref{eq:pointDepth}. Points along $Q$ might individually score any depth value within $[0,1]$, with any points of $Q$ that fall outside of $\ch(\mathcal{C})$ condemned to a value of zero as shown by Observation~\ref{obs:non-zeroRegion}. Thus, it should be evident that for some point $q\in Q$, $D(q,\mathcal{C}) \geq D(Q, \mathcal{C})$. Consequently, we state the following observation.

\begin{observation}[Depth Median Points]\label{obs:medianPoint}
For any given set $\mathcal{C}$ of plane curves, there exists a point $m \in \ch(\mathcal{C})$ that is a depth median of $\mathcal{C}$, meaning
\begin{equation*}
    D(m, \mathcal{C}) = \sup_{q\in\ch(\mathcal{C})}D(q,\mathcal{C}) = \sup_{Q \in \Gamma} D(Q, \mathcal{C}).
\end{equation*}
\end{observation}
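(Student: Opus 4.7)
The plan is to first reduce the supremum over all curves to a supremum over points via an averaging argument, and then establish attainment using upper semicontinuity on the compact set $\ch(\mathcal{C})$.

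\textbf{Step 1: Reduction via averaging.} Interchanging the order of integration in \eqref{eq:depth} gives
\begin{equation*}
D(Q, \mathcal{C}) \;=\; \frac{1}{L(Q)} \int_{q \in Q} D(q, \mathcal{C})\, ds,
\end{equation*}
so $D(Q, \mathcal{C})$ is the arc-length average of the pointwise depths along $Q$. Averages do not exceed suprema, and by Observation~\ref{obs:non-zeroRegion} any $q \notin \ch(\mathcal{C})$ contributes $D(q, \mathcal{C}) = 0$; hence
\begin{equation*}
D(Q, \mathcal{C}) \;\leq\; \sup_{q \in Q} D(q, \mathcal{C}) \;\leq\; \sup_{q \in \ch(\mathcal{C})} D(q, \mathcal{C}).
\end{equation*}
Taking the supremum over $Q \in \Gamma$ yields one inequality, and the reverse is immediate since each $q$ arises as a degenerate curve in the limit $L(Q) \to 0$ per the note following Definition~\ref{def:curveDepth}. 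Thus $\sup_{Q \in \Gamma} D(Q, \mathcal{C}) = \sup_{q \in \ch(\mathcal{C})} D(q, \mathcal{C})$.

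\textbf{Step 2: Attainment via upper semicontinuity.} To upgrade the supremum to a maximum I would show that $q \mapsto D(q, \mathcal{C})$ is upper semicontinuous on $\ch(\mathcal{C})$, so compactness then yields a maximizer $m$. For $q$ in the open complement of $\bigcup \mathcal{C}$, the tangent angles of each wedge vary smoothly with $q$; for each fixed direction, the integrand $\min\{\stab_{\mathcal{C}}(\overrightarrow{q_\theta}), \stab_{\mathcal{C}}(\overrightarrow{q_{\theta+\pi}})\}$ is locally constant in $q$ except at a measure-zero set of $\theta$ corresponding to rays through endpoints or tangencies of $\mathcal{C}$, and dominated convergence (the integrand is bounded by $|\mathcal{C}|$) yields continuity of $D(\cdot, \mathcal{C})$ there. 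When $q$ lies on a curve $C \in \mathcal{C}$, every ray rooted at $q$ meets $C$ at the root itself, so $C$ contributes at least $1$ to $\stab_{\mathcal{C}}(\overrightarrow{q_\theta})$ in every direction; by contrast, for nearby $q' \notin C$, the contribution of $C$ to the min is $0$ at almost every $\theta$. Consequently $D(q, \mathcal{C}) \geq \limsup_{q' \to q} D(q', \mathcal{C})$, establishing upper semicontinuity.

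\textbf{Main obstacle.} The principal technical difficulty is the discontinuity analysis at points $m$ where several curves cross or several branches of a single curve meet, since there the jumps contributed by different arcs interact and must be tracked jointly. I would handle this through a finite local case analysis inside a small disk around any such $m$, enumerating the finitely many arcs of $\mathcal{C}$ passing through $m$ (well-defined thanks to the subpath uniqueness assumption) and verifying configuration by configuration that each arc's contribution to the min is nondecreasing as $q' \to m$. Once upper semicontinuity is in hand, a standard Weierstrass-type argument on the compact set $\ch(\mathcal{C})$ produces the desired $m$, completing the observation.
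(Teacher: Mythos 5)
Your Step 1 is precisely the paper's own justification: the paper derives this observation from the remark immediately preceding it that $D(Q,\mathcal{C})$ is the arc-length average of the point depths $D(q,\mathcal{C})$ along $Q$, so some point of $Q$ is at least as deep as $Q$ itself, together with Observation~\ref{obs:non-zeroRegion} to confine attention to $\ch(\mathcal{C})$; no further argument is given. Where you genuinely depart from the paper is Step 2: the paper never proves that the supremum is attained --- the existence of the maximizer $m$ is simply asserted as part of the observation --- whereas you supply an upper-semicontinuity-plus-compactness argument, which is the right tool ($\ch(\mathcal{C})$ is compact because each curve is the continuous image of $[0,1]$, hence bounded and closed). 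Two refinements to your sketch. First, your claim that for nearby $q'\notin C$ the contribution of $C$ to the min is $0$ at almost every $\theta$ is not true in general: both opposite rays can hit $C$, and the ray achieving the min can hit $C$. What is true, and all that is needed, is that $C$ contributes at most $1$ per direction to either stabbing number for any $q'$, while it contributes exactly $+1$ to both stabbing numbers in every direction at $q\in C$ itself. Second, this uniform bound dissolves your ``main obstacle'': if $k$ curves of $\mathcal{C}$ pass through $q$, then at $q$ they add exactly $k$ to the integrand in every direction, at nearby $q'$ they add at most $k$, and the contribution of the remaining curves (none of which contains $q$) converges by dominated convergence as $q'\to q$; hence $\limsup_{q'\to q}D(q',\mathcal{C})\leq D(q,\mathcal{C})$ follows in one stroke, with no configuration-by-configuration analysis of crossings. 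So the proposal is correct: for the equality of the two suprema it coincides with the paper's averaging argument, and for attainment it proves a step the paper leaves unproven.
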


These observations, in combination with the preliminary discussion in Section~\ref{sec:decreasingDepth} on bounding a curve's depth by sample points along its length, allows for the formation of depth contours by sampling a region using points, e.g., the $\ch(\mathcal{C})$ region, which serve as depth estimators for curves within them.


\subsubsection{Decreasing Depth Relative to a Median Curve}\label{sec:decreasingDepth}
In the multivariate setting, it was outlined in \cite{Zuo:2000} that the depth of any point should ideally decrease monotonically with respect to a outwards translation from the deepest point. In place, the authors of \cite{Gijbels:2017} propose a generalization of this property for the functional data setting, which is reformulated and analyzed here.

For any sample of plane curves $\mathcal{C}$ and corresponding plane curve(s) of maximal depth $\med(\mathcal{C})$, a nondegenerate depth measure $D$ has \emph{decreasing depth relative to a deepest curve} provided
\begin{equation*}
    D(Q,\mathcal{C})\leq D(\med(\mathcal{C})+\alpha(Q-\med(\mathcal{C})),\mathcal{C})
\end{equation*}
for all plane curves $Q$ and $\alpha\in [0,1]$.

Note the quantity $D(\med(\mathcal{C}),\mathcal{C})$ represents a slight abuse of notation made for the sake of compactness. The quantity should properly be written in terms of any of the equally deep curves from the set $\med(\mathcal{C}) = \{C\in\mathcal{C} \mid D(C,\mathcal{C}) \geq D(S,\mathcal{C}) \text{ for all }S\in \Gamma\}$ from which we select and element.

Unfortunately, due to arbitrary sparsity of a sample of plane curves, $\mathcal{C}$, the curve stabbing depth cannot be guaranteed to satisfy this property. Figure~\ref{fig:decreasingDepthCounter} illustrates a simple counterexample for depth median points (degenerate curves) with respect to either query points (as depicted) or nondegenerate curves.

\begin{figure}[!ht]
    \centering
    \includegraphics[scale=0.8]{./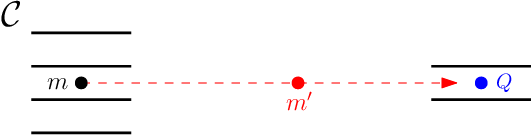}
    \caption{\small Composing $\mathcal{C}$ are six horizontal line segments drawn in black, which may be perturbed to satisfy assumptions of general position. The point $m$ is a median point of the given set $\mathcal{C}$, $Q$ is a degenerate query curve, and $m'$ a point along a affine path, drawn in red, from $m$ to $Q$. As depicted, it can be seen that $D(Q,\mathcal{C})\geq D(m',\mathcal{C}) = D(m+\alpha(Q-m),\mathcal{C})$ for some $\alpha$ corresponding to the position of $m'$ drawn, which is a violation of the desired inequality. In fact, $D(m',\mathcal{C})$ can be made arbitrarily close to zero by increasing the separation between the two groups of segments in $\mathcal{C}$.}
    \label{fig:decreasingDepthCounter}
\end{figure}

\subsection{Upper Semi-continuity}
\label{sec:semi-continuity}
Per a modification to the definition given in \cite{Reyes:2016}, with an additional requirement placed on the length of a perturbed curve, we define a depth measure $d$ operating on a set of plane curves $\mathcal{C}$ to be \emph{upper semi-continuous} if for any plane curve $Q$ and any $\epsilon>0$ there exists a $\delta>0$ such that 
\[
    \sup_{Q'\in \mathcal{Q}_\delta}d(Q',\mathcal{C})\leq d(Q,\mathcal{C})+\epsilon,
\]
where $\mathcal{Q}_\delta = \{Q'\in \Gamma \mid \dist(Q',Q)\leq\delta \text{ and } |L(Q)-L(Q')|\leq \delta\}$.

\begin{conjecture}\label{conj:semi-continuity}
Under the assumption of general position, where $Q$ has unique subpaths with respect to $\mathcal{C}$, the curve stabbing depth is upper semi-continuous; meaning, curve that are close to each other should score similar depth values.
\end{conjecture}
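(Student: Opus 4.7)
The plan is to reduce the claim to upper semi-continuity of the pointwise depth function
\[
d(p) = \frac{1}{\pi}\int_0^{\pi}\min\{\stab_{\mathcal{C}}(\overrightarrow{p_\theta}),\stab_{\mathcal{C}}(\overrightarrow{p_{\theta+\pi}})\}\, d\theta,
\]
so that $L(Q)\cdot D(Q,\mathcal{C}) = \int_Q d(q)\,dq$, and then to transfer the bound from the point level up to curves using a Fr\'{e}chet correspondence together with the length constraint $|L(Q)-L(Q')|\leq \delta$. The governing observation is that $\stab_{\mathcal{C}}(\overrightarrow{p_\theta})$ as a function of $p$ can only jump upward at a lower-dimensional set of configurations, namely when $p$ lies on a curve in $\mathcal{C}$ or when $\overrightarrow{p_\theta}$ becomes tangent to some $C_i$; away from such configurations the wedges $w(p,C_i)$ vary continuously, so the min of opposite-ray stabbing counts, and hence $d(p)$, is upper semi-continuous.

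First I would prove u.s.c.\ of $d$ pointwise on $\mathbb{R}^2 \setminus \bigcup \mathcal{C}$. For $p$ at positive distance from every $C_i$, the wedge formalism of Section~\ref{sec:defs.wedge} and the explicit arctangent formulas of Section~\ref{sec:wedgeAngles} show that each wedge $w(p,C_i)$ depends continuously on $p$: both its tangent points and its interior angle are continuous functions of $p$. Hence for each $\theta$ outside the finite set of wedge-boundary angles at $p$, the integer $p' \mapsto \stab_{\mathcal{C}}(\overrightarrow{p'_\theta})$ is locally constant near $p$; at the finitely many boundary angles $\stab_{\mathcal{C}}$ can only decrease under generic small perturbations. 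Combined with the uniform bound $d \leq n$, Fatou's lemma then delivers $\limsup_{p'\to p} d(p') \leq d(p)$.

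Next I would lift this to curves via Fr\'{e}chet-distance reparameterizations. Since $\dist(Q,Q') \leq \delta$, there exist continuous monotone reparameterizations $q, q' : [0,1] \to \mathbb{R}^2$ of $Q, Q'$ with $\|q(t)-q'(t)\| \leq \delta$ everywhere. Writing both depths as arc-length integrals, applying Fatou to $d \circ q'$, and using the convergence $L(Q')\to L(Q)$ guaranteed by $|L(Q)-L(Q')|\leq \delta$, yields $\limsup_{Q' \to Q} D(Q',\mathcal{C}) \leq D(Q,\mathcal{C})$, which is the required inequality after choosing $\delta$ small enough in terms of $\epsilon$. The degenerate case $L(Q)=0$ collapses immediately to u.s.c.\ of $d$ at the single point, which was already established.

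The hard part will be handling the Jacobian when changing variables between $Q$ and $Q'$: a Fr\'{e}chet-optimal reparameterization need not be arc-length-proportional, so $\|\dot{q}'(t)\|$ is not pointwise close to $\|\dot{q}(t)\|$ in general, and a naive Fatou step would be invalid. My strategy to circumvent this is to pass to arc-length parameterizations on each curve individually and argue that the mismatch between the two arc-length maps contributes at most $O(\delta/L(Q))$ to the depth integral, a bound controlled by the uniform estimate $d \leq n$ together with the length constraint $|L(Q)-L(Q')|\leq \delta$ and the assumed subpath-uniqueness of $Q$ with respect to $\mathcal{C}$. A secondary subtlety is that u.s.c.\ of $d$ may fail at the measure-zero subset of points of $Q$ that are collinear with a common tangent between a pair of curves in $\mathcal{C}$; by general position this set is finite along $Q$ and so contributes nothing to the depth integral and can be discarded from the Fatou estimate.
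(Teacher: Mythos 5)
First, a point of calibration: the paper does \emph{not} prove this statement. It is stated as Conjecture~\ref{conj:semi-continuity} and explicitly listed in Section~\ref{sec:conclusion} as an open problem (``Proving (or identifying counter-examples to) Conjectures~\ref{conj:semi-continuity} and~\ref{conj:stability} remains open''), so there is no paper proof to compare against; a completed version of your argument would settle an open question, not reproduce a known one. Your pointwise step is essentially correct, and in fact can be made simpler and stronger than you state: for fixed $\theta$ and a fixed curve $C$, the indicator of $\overrightarrow{p_\theta}\cap C\neq\emptyset$ is upper semi-continuous in $p$ \emph{everywhere}, by compactness of the trace alone (if $p_k\to p$ and $x_k\in\overrightarrow{{(p_k)}_\theta}\cap C$, then a subsequential limit of the $x_k$ lies in $\overrightarrow{p_\theta}\cap C$). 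Summing over $\mathcal{C}$, taking the minimum over opposite rays, and applying the \emph{reverse} Fatou lemma (the limsup version with the dominating bound $n$ --- plain Fatou goes the wrong way) gives u.s.c.\ of the point depth $d$ on all of $\mathbb{R}^2$; no general position, no excision of $\bigcup\mathcal{C}$, and no worry about common-tangent loci is needed, since at every degenerate configuration the stabbing number is \emph{higher} than at nearby ones, which is exactly consistent with upper semi-continuity.

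The genuine gap is in your lifting step, and it sits precisely where you flag ``the hard part'': you assert that the mismatch between the two arc-length parameterizations contributes at most $O(\delta/L(Q))$ to the depth integral, but no argument is given, and Fr\'echet-closeness together with $|L(Q)-L(Q')|\leq\delta$ does not obviously yield any linear-in-$\delta$ pointwise bound between arc-length parameterizations; this quantitative claim is the entire difficulty, so as written the proof is incomplete. The repair should be qualitative rather than quantitative. Argue by contradiction with a sequence $Q_k$ satisfying $\dist(Q_k,Q)\to 0$, $L(Q_k)\to L(Q)$, and $D(Q_k,\mathcal{C})>D(Q,\mathcal{C})+\epsilon_0$. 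Let $\gamma_k$ and $\gamma$ be constant-speed parameterizations on $[0,1]$; the $\gamma_k$ are equi-Lipschitz, so by Arzel\`a--Ascoli a subsequence converges uniformly to some $\eta$. Writing the near-optimal Fr\'echet correspondences as monotone reparameterizations $\alpha_k$ and using Helly selection, one obtains $\gamma=\eta\circ\alpha$ with $\alpha$ monotone, which forces $L(\eta)\geq L(Q)$; lower semi-continuity of length gives $L(\eta)\leq\lim_k L(Q_k)=L(Q)$, and equality then forces $\eta=\gamma$, i.e.\ the constant-speed parameterizations themselves converge uniformly. Constant speed is the key dividend: the Jacobian is the constant $L(Q_k)$, which cancels the normalization, so $D(Q_k,\mathcal{C})=\int_0^1 d(\gamma_k(t))\,dt$, and one application of reverse Fatou together with u.s.c.\ of $d$ at each $\gamma(t)$ gives $\limsup_k D(Q_k,\mathcal{C})\leq D(Q,\mathcal{C})$, a contradiction. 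Note that this route never uses the unique-subpaths/general-position hypothesis, suggesting the conjecture holds without it.
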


\subsection{Stability}
\label{sec:properties.stability}
Stability seeks to evaluate the degree, $k$, to which small changes in the input can lead to large changes in the output, where large values of $k$ are desirable, as they indicate insensitivity of the measure to perturbations in sample data. In our context, the input corresponds to the curves $Q$ and $\mathcal{C}$, and the output corresponds to the value of the depth measure $D(Q, \mathcal{C})$. Stability can be though of as a indicator of \textit{local} robustness; see Section~\ref{sec:robustness} for a more detailed discussion of robustness. 

Following \cite{Durocher:2009, Ramsay:2021}, we define the \emph{$k$-stability} of depth measure as follows.
A depth measure, $d$, is {\em $k$-stable} for a fixed $k>0$ if
\begin{equation*}
\forall \mathcal{C} \subseteq \Gamma \ \forall \mathcal{C}' \subseteq \Gamma \ \forall Q \in \Gamma \ \forall \epsilon > 0, \ \dist(\mathcal{C}, \mathcal{C}') \leq \epsilon \Rightarrow k\cdot|d(Q,\mathcal{C}) - d(Q,\mathcal{C}')| \leq \epsilon,
\end{equation*}
where the distance between two sets of curves $\mathcal{C}$ and $\mathcal{C}'$ is bounded by
\[ \dist(\mathcal{C}, \mathcal{C}') \leq \epsilon \Leftrightarrow \exists \text{continuous bijection } f:\mathcal{C} \to \mathcal{C}'
\text{ s.t. } \forall C \in \mathcal{C},\ \dist(C, f(C)) \leq \epsilon . \]



\begin{conjecture}\label{conj:stability}
There exists a fixed $k\in O(1)$ such that for any $\mathcal{C}$ and any $Q$ in $\Gamma$ with unique subpaths, the curve stabbing depth $D(Q,\mathcal{C})$ is $k$-stable. 
\end{conjecture}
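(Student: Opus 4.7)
The plan is to proceed in two stages: a telescoping reduction followed by a pointwise geometric bound. Given the bijection $f:\mathcal{C}\to\mathcal{C}'$ certifying $\dist(\mathcal{C},\mathcal{C}')\leq\epsilon$, enumerate $\mathcal{C}=\{C_1,\ldots,C_n\}$, set $C_i'=f(C_i)$, and define intermediate populations $\mathcal{C}_k=\{C_1',\ldots,C_k',C_{k+1},\ldots,C_n\}$, so that $\mathcal{C}_0=\mathcal{C}$ and $\mathcal{C}_n=\mathcal{C}'$. By the triangle inequality, $|D(Q,\mathcal{C})-D(Q,\mathcal{C}')|\leq\sum_k|D(Q,\mathcal{C}_{k-1})-D(Q,\mathcal{C}_k)|$. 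Since a single-curve swap alters $\stab$ pointwise by at most $1$ and $\min\{a,b\}$ is $1$-Lipschitz in each coordinate, Fubini gives
\begin{equation*}
|D(Q,\mathcal{C})-D(Q,\mathcal{C}')|\leq\frac{1}{n\pi L(Q)}\sum_{k=1}^{n}\int_{Q}|w(q,C_k)\triangle w(q,C_k')|\,dq.
\end{equation*}

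Next, I would target the geometric lemma $\int_Q|w(q,C)\triangle w(q,C')|\,dq\leq c\,\epsilon\,L(Q)$ for any pair of Fr\'{e}chet-close curves. The tangent-point angles at $q$ are Lipschitz in the corresponding tangent point $p$ with constant $1/|p-q|$, so for $q$ outside the $\epsilon$-tube of $C$ a Fr\'{e}chet-realizing deformation $C(t)\mapsto C'(t)$ changes each wedge boundary by $O(\epsilon/d(q,C))$. For $q$ inside the tube one bounds the integrand trivially by $2\pi$, and under subpath uniqueness $Q$ crosses $C$ transversally on a $Q$-length of $O(\epsilon)$ per crossing. Combining these contributions and summing across curves would yield the desired $O(\epsilon)$ bound on $|D(Q,\mathcal{C})-D(Q,\mathcal{C}')|$, hence $k=\pi/c$.

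The main obstacle is that the above lemma as stated \emph{fails} in general. In the configuration $C=[-1,1]\times\{0\}$, $C'=[-1,1]\times\{\epsilon\}$, $Q=[-1,1]\times\{\epsilon/2\}$, the wedges $w(q,C)$ and $w(q,C')$ point in opposite directions at every $q\in Q$ with angular extent close to $\pi$, so $|w(q,C)\triangle w(q,C')|\approx 2\pi$ uniformly and $\int|w\triangle w'|\,dq\approx 4\pi$ even though $\epsilon L(Q)\to 0$; yet the actual depth change is $0$, since $\min\{\stab_{\{C\}}(\theta),\stab_{\{C\}}(\theta+\pi)\}\equiv 0$ for any single straight segment. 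The Lipschitz-of-$\min$ estimate therefore discards a crucial cancellation. Repairing the argument requires abandoning curve-by-curve telescoping and arguing globally, exploiting that $g(q,\theta)=\min\{\stab_{\mathcal{C}}(\theta),\stab_{\mathcal{C}}(\theta+\pi)\}>0$ forces both antipodal stabbing numbers to be positive; the change $|g-g'|$ must therefore vanish wherever the antipodal stabbing remains zero, and the tube-parallel configurations can only contribute meaningfully when other curves of $\mathcal{C}$ already stab the antipodal rays. Translating this structural cancellation into a sharp integrated bound valid for arbitrary populations---where many curves may conspire to concentrate near a shared region of $Q$---is the principal technical difficulty and, in our view, the reason the statement remains conjectural.
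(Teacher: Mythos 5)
The statement you were asked to prove is not proved in the paper at all: it is stated as Conjecture~\ref{conj:stability}, and Section~\ref{sec:conclusion} explicitly lists proving it (or finding a counterexample) as open. So there is no reference proof to compare against, and any complete proof you produced would have gone beyond the paper. Your submission, as you yourself acknowledge in the final paragraph, is not such a proof: it is an attempted reduction that you correctly show breaks down at its key lemma. That self-assessment is accurate, and your write-up is best read as evidence for \emph{why} the statement remains conjectural rather than as a resolution of it.

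On the substance: the first stage of your argument is sound. Telescoping over single-curve replacements $\mathcal{C}_{k-1}\to\mathcal{C}_k$ is legitimate, and since the integrand of $D$ changes at a direction $\theta$ only when $\theta$ or $\theta+\pi$ lies in the angular symmetric difference $w(q,C_k)\,\triangle\, w(q,C_k')$, and then by at most $1$ (using that $\min$ is $1$-Lipschitz), the displayed inequality with the $\frac{1}{n\pi L(Q)}$ prefactor is correct. Your counterexample to the proposed geometric lemma is also valid: for $C=[-1,1]\times\{0\}$, $C'=[-1,1]\times\{\epsilon\}$, $Q=[-1,1]\times\{\epsilon/2\}$, the wedges of $C$ and $C'$ at each $q\in Q$ are near-antipodal half-plane-like regions, so the symmetric difference has angular measure close to $2\pi$ along essentially all of $Q$, while the actual depths are both $0$ (consistent with Observation~\ref{obs:non-zeroRegion}, since $Q$ lies outside both convex hulls). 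This shows precisely that the Lipschitz-of-$\min$ step throws away the cancellation coming from the antipodal minimum, which is the structural feature that makes $k$-stability hard to establish. The remaining gap --- converting that cancellation into a bound that survives when many curves of $\mathcal{C}$ crowd a common region of $Q$, so that the antipodal rays are \emph{not} empty and the discarded terms genuinely contribute --- is exactly what stands between your sketch and a theorem, and it is the reason the paper's authors left the statement as a conjecture. In short: no error of judgment on your part, but no proof either; what you have is a correctly isolated obstruction.
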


An immediate consequence of Conjecture~\ref{conj:stability}, assuming it to be true, would be that a depth median point with respect to curve stabbing depth is $k'$-stable for some $k'\geq k$.

\subsection{Robustness}
\label{sec:robustness}
The robustness of a statistical measure indicates the degree to which estimators based on that measure are sensitive to the presence of outliers or perturbations in the data. One of the most common methods for assessing robustness, particularly when dealing with finite samples, is the \emph{breakdown point}; in this case, we focus on test statistics and the \emph{finite replacement breakdown point}, which is equal to the minimum number of data points in the worst case configuration that must be corrupted to cause the measure to take on an arbitrarily different value. 

The following formulation of the breakdown point is common, similar variations can be seen in \cite{zuo:2003} and \cite{Ramsay:2021}. See \cite{Huber:2009} for a treatise on robustness and breakdown points. 

Given any finite sample $X$ of size $n$ and an appropriate test statistic $T$, for which we later substitute our depth measure, the \emph{breakdown point} of $T$ for $X$, denoted $\epsilon^*(T,X)$, is the minimum number data points in $X$ that need to be replaced to form a corrupted set $X^{(m)}$ such that $T(X^{(m)})$ can be made arbitrarily different from $T(X)$. Specifically, the replacement finite sample breakdown point is defined as
\begin{equation*}
    \epsilon^*(T,X) = \frac{1}{n}\argmin_{1\leq m\leq n}\left\{\sup_{X^{(m)}}\dist(T(X),T(X^{(m)}))=\infty\right\},
\end{equation*}
where $X^{(m)}$ differs from $X$ in $m$ entries.

The geometric interpretation of robustness in the multivariate point case is quite straightforward, being the number of points that must be moved off to infinity before the point of maximal depth ceases to be such. However, in the functional or more general curve setting, considerations much be made to the shape of the curves involved in addition to location changes. Thus, a refinement of this interpretation is not immediately forthcoming. 

With all this said, under the given definition of robustness the following observation can be made with respect to depth median points.

\begin{observation}
For any $n>0$, there exists a set $\mathcal{C}\in \Gamma$ of cardinally $n$ for which the depth median $\med(\mathcal{C})$, defined with respect to $D(q,\mathcal{C})$, has breakdown point $\epsilon^*(\med(\mathcal{C}),\mathcal{C}) = 1/n$. 
\end{observation}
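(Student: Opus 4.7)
The plan is to prove the observation by exhibiting, for each $n\geq 1$, a specific configuration $\mathcal{C}$ of $n$ curves whose depth-median set covers an entire union of segments and is therefore maximally fragile: replacing any single segment shifts some admissible median choice arbitrarily far. Since the breakdown point trivially satisfies $\epsilon^*\geq 1/n$ (at least one replacement is needed to cause any change), establishing the matching upper bound $\epsilon^*\leq 1/n$ via an explicit corruption completes the proof.

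For the construction I take $\mathcal{C} = \{C_1, \ldots, C_n\}$ to consist of $n$ unit line segments placed in general position in the plane, with pairwise Euclidean distance at least $R$, where $R$ is to be chosen sufficiently large. First I analyze $\med(\mathcal{C})$. For any $q\in C_i$, every ray rooted at $q$ contains $q$ and hence intersects $C_i$, so $\stab_{\mathcal{C}}(\overrightarrow{q_\theta})\geq 1$ for every $\theta$; the integrand in Definition~\ref{def:curveDepth} is therefore bounded below by $1$ uniformly in $\theta$, giving $D(q,\mathcal{C})\geq 1/n$ in normalized form. For $q\notin\bigcup_i C_i$, I argue that $D(q,\mathcal{C})<1/n$ by bounding the angular measure of directions for which \emph{both} $\overrightarrow{q_\theta}$ and $\overrightarrow{q_{\theta+\pi}}$ stab at least one curve: a unit segment at distance $d$ from $q$ subtends an arc of angular width $O(1/d)$, and summing over the $n$ segments yields total measure $O(n/R)$, which can be made strictly less than $1$ by taking $R$ large. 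Hence $\med(\mathcal{C}) = \bigcup_i C_i$, with common normalized depth equal (up to lower-order terms) to $1/n$.

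Next I carry out the corruption. Select $m_0\in C_1\subseteq \med(\mathcal{C})$ and form $\mathcal{C}^{(1)} = (\mathcal{C}\setminus\{C_1\}) \cup \{C_1'\}$, where $C_1'$ is another unit segment placed at distance $M$ from every other segment, with $M\gg R$. The set $\mathcal{C}^{(1)}$ is again a family of $n$ mutually far unit segments, so the analysis above applies verbatim and $C_1'\subseteq \med(\mathcal{C}^{(1)})$; selecting $m_1\in C_1'$, the displacement $\dist(m_0,m_1) \geq M-1$ is unbounded as $M\to\infty$. Therefore $\sup_{\mathcal{C}^{(1)}}\dist(\med(\mathcal{C}),\med(\mathcal{C}^{(1)})) = \infty$ with only $m=1$ replacement, establishing $\epsilon^*(\med(\mathcal{C}),\mathcal{C})\leq 1/n$ and hence equality.

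The main obstacle I anticipate is the off-curve bound in the first step: one must verify that no point $q\notin\bigcup_i C_i$ attains depth at least $1/n$, uniformly in $q$. The delicate regime is when $q$ lies close to a particular segment $C_i$, so the subtended arc of $C_i$ alone is not small and the naive estimate $\sum_j O(1/d_j)$ fails. The resolution is that in that regime at most one distance $d_j$ can be small while the rest are $\Omega(R)$, so the min-constraint forces the relevant angular set to be contained in the narrow arcs subtended by the other $n-1$ segments; this recovers the $O(n/R)$ bound and preserves uniformity. Once this uniform off-curve estimate is in hand, the remainder of the argument is routine bookkeeping.
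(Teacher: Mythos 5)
Your overall strategy (the lower bound $\epsilon^*\geq 1/n$ is automatic from the definition; exhibit a one-replacement corruption whose median moves arbitrarily far) matches the paper's in spirit, though your construction differs from the paper's, which uses the $n$ sides of a regular $n$-gon and replaces one side by a far-away closed loop so that interior points of the loop are stabbed in every direction. However, your version has a genuine gap at the crucial step, namely the sentence ``the analysis above applies verbatim and $C_1'\subseteq \med(\mathcal{C}^{(1)})$.'' The depth of a point $q$ lying on a segment of the family is not exactly $1/n$: it equals $1/n$ plus $\tfrac{1}{n\pi}$ times the angular measure of directions $\theta$ for which \emph{both} opposite rays hit curves other than the one containing $q$, and the median --- an $\argmax$ --- is decided precisely by these ``lower-order terms'' that you discard. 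For $q\in C_1'$ this extra term is identically zero once $M$ is large, because all other segments lie, as seen from $C_1'$, inside a cone of angular width less than $\pi$, so no pair of opposite rays can hit them both. But for points on the surviving segments $C_2,\dots,C_n$ the extra term can be strictly positive: pairwise distance at least $R$ does not prevent one cluster segment from lying between two others (three nearly collinear, far-apart segments), nor does ``distance $M$ from every other segment'' prevent $C_1'$ from lying roughly opposite some $C_k$ as seen from points of $C_j$. In either situation a positive-measure set of directions has minimum stabbing number at least $2$ at points of $C_j$, so the maximum depth over $\mathcal{C}^{(1)}$ is attained on a cluster segment, $\med(\mathcal{C}^{(1)})$ is disjoint from $C_1'$, and the displacement of the median is bounded by the fixed diameter of the cluster: no breakdown occurs for your corruption.

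The construction is repairable, but you must add two hypotheses that do real work. First, take the original segments in convex position, i.e., each $C_i$ disjoint from $\ch\bigl(\bigcup_{j\neq i}C_j\bigr)$; then for $q\in C_i$ there is a line through $q$ with all other segments strictly on one side, and two stabbed points can never lie on opposite rays from $q$, so all within-cluster extra terms vanish identically (this is exactly what the paper's convex polygon provides for free). Second, place $C_1'$ far away in a direction chosen outside the $O(n^2/R)$-measure set of directions that are antipodal, from some point of some $C_j$, to some other $C_k$; this kills the extra terms involving $C_1'$. With these additions, every point on every segment of $\mathcal{C}^{(1)}$ has depth exactly $1/n$, every off-curve point has strictly smaller depth (your uniform off-curve bound, which is the step you worried about, is in fact fine), and your selection $m_1\in C_1'$ is a legitimate median point, completing the upper bound. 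One further caveat, which you share with the paper: in both constructions the median is a \emph{set} containing both near and far points, so the conclusion rests on interpreting $\dist(\med(\mathcal{C}),\med(\mathcal{C}^{(1)}))$ via a worst-case (or Hausdorff) selection of median points; this interpretation should be stated explicitly.
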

\begin{proof}
A worst-case example that realizes this breakdown point for any $n$ is illustrated in Figure~\ref{fig:robustness}. Generalization of the same construction used in the figure to arbitrary $n$ is straightforward.
\begin{figure}[!ht]
    \centering
    \includegraphics[scale=1]{./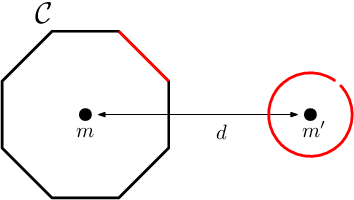}
    \caption{\small The set of curves $\mathcal{C}$ is composed of individual line segments along a regular eight sided polygon; more generally we may consider any regular $n$ sided polygon. All of the deepest points with respect to the population $\mathcal{C}$ are contained within the convex hull of $\mathcal{C}$. One of these is drawn and labeled $m$. The set $\mathcal{C}$ is then corrupted by replacement of the red line segment with a looped curve, drawn to the right, at a distance $d$ away from $m$. This results in the set of curves $\mathcal{C}^{(1)}$ with new median point $m'$. The distance $d$ between $m$ and $m'$ can be made arbitrarily large. Thus, this represents a worst-case breakdown point of $1/n$.}
    \label{fig:robustness}
\end{figure}
\end{proof}

Under the restriction of functional data, it is strongly suspected by the authors that the robustness of the depth median point is nearer to one half of the maximal depth (in terms of non-normalized depth).

\section{Monte Carlo Approximation}
\label{sec:approx}

Definition~\ref{def:curveDepth} initially suggests that efficient approximate computation by Monte Carlo methods is likely possible using a random sample of rays rooted along the query curve $Q$. In this section we explore three techniques to develop randomized algorithms that approximate curve stabbing depth. Firstly, we specify the exact form of the curves we consider, along with methods for approximate representation, e.g., discretization of continuous curves via polyline approximations or curve fitting to time sequence points. Secondly, we develop statistical methods for the application of random sampling and bounding the expected quality of approximation that can be achieved. Lastly, given efficient methods for computing ray-curve intersections for determining stabbing numbers, we analyze a worst-case bound on the performance of the constructed random sampling approximation method, and we compare our approximation algorithm against the exact algorithm proposed in Section~\ref{sec:alg} for the case of polylines.

\subsection{Data Representation and Approximation}
Section~\ref{sec:preliminaries.unparameterizedCurveSpace} discusses the sample space from which curves are drawn, along with polylines, which are effectively linear spline interpolants. We limit the ensuing discussion to curves drawn from $\Gamma$, and corresponding polyline approximations as similarly utilized in the Monte Carlo method proposed in \cite{Micheaux:2021}, with the obvious extension to fitting polylines to sequential (possibly time-stamped) data. Restricting input curves to polylines also allows for direct comparison between our exact algorithm and developed Monte Carlo techniques for computing curve stabbing depth.

\subsection{Monte Carlo Method}

This section develops the statistical backbone of the proposed Monte Carlo method. The notion of an \emph{oracle} is used in several places, which serves three purposes. The oracle is capable of each of the following: (1) returning (pseudo-)random points that are uniformly distributed along a curve according to its underlying representative parameterization, (2) returning a ray rooted at a specified point with a random angle taken from the $U[0,\pi)$ distribution, and (3) shooting rays into $\mathcal{C}$ and returning the value of the stabbing number for each such ray, all in $O(1)$ time. This oracle allows decoupling the existence of numerical methods and their performance from the initial discussion of the sampling scheme; additionally, it is not the goal of this paper to reiterate the many commonly used curve sampling methods that exists. Afterwards, known results are substituted in place of the oracle to yield more concrete bounds on the method's performance.

To begin, let $\mathcal{C}=\{C_1,C_2,\dots,C_n\}$ denote a set of curves from $\Gamma$ and let $Q$ be a curve from $\Gamma$. We are interested in computing $D(Q,\mathcal{C})$. In the Monte Carlo approximation scheme outlined below, we start by considering a sampling method built using uniform samples (produced via the oracle) of randomly oriented pairs of rays $\bm{\overrightarrow{q_\theta}}$ and $\bm{\overrightarrow{q_{\theta+\pi}}}$ rooted at points $\bm{q}$ selected uniformly at random along $Q$ 
with a uniformly random elevation $\bm{\theta}$ selected from $[0,\pi)$. These rays are then queried for their intersections with elements of $\mathcal{C}$ to approximate $D(Q,\mathcal{C})$. 
Later on, an approximate sampling method is developed using polyline approximations of smooth curves to handle cases where sampling cannot be easily performed (normally, because the uniform generation of points on $Q$ is either infeasible or impossible if an analytic parameterization is not known). To reduce possible ambiguity, note that the latter approximation step is unnecessary when the curves in question are  polylines. 

Recall the curve stabbing depth of a curve $Q$ taken with respect to a set $\mathcal{C}$ as defined in Eq.~\eqref{eq:depth}. Approaching this calculation from an alternative view, observe that it can be reformulated in terms of the expected stabbing number for a randomly selected ray along $Q$, the random selection being done uniformly. Adopting this view, the depth of $Q$ can be written as a linear combination of probabilities of the form 
\begin{equation*}
    p^{(k)} = \frac{1}{\pi L(Q)}\int_{q\in Q}\int_{0}^{\pi} \mathbb{I}\{\min\{\stab_{\mathcal{C}}(\overrightarrow{q_\theta}),\stab_{\mathcal{C}}(\overrightarrow{q_{\theta+\pi}})\} = k\} \,d\theta\, dq,
\end{equation*}
for $k=0,1,\dots,n$. We note that each of the above terms corresponds to the probability that a uniformly randomly generated ray $\overrightarrow{\bm{q_{\theta}}}$ rooted along $Q$, where $q$ is uniformly selected along $Q$ and $\theta$ is uniformly selected on $[0,\pi)$, has stabbing number $0,1,\dots,n$, respectively. 

Now, noting that
$$
\min\{\stab_{\mathcal{C}}(\overrightarrow{q_\theta}),\stab_{\mathcal{C}}(\overrightarrow{q_{\theta+\pi}})\}
= \sum_{k=0}^n k \, \mathbb{I}\{\min\{\stab_{\mathcal{C}}(\overrightarrow{q_\theta}),\stab_{\mathcal{C}}(\overrightarrow{q_{\theta+\pi}})\} = k\},
$$
we see the depth of a curve can be evaluated as
\begin{equation*}
    D(Q,\mathcal{C}) = \sum_{k=0}^n k \, p^{(k)}.
\end{equation*}
By learning estimates $\bm{\hat{p}^{(0)}},\bm{\hat{p}^{(1)}},\dots,\bm{\hat{p}^{(n)}}$ of the true values of $p^{(0)},p^{(1)},\dots,p^{(n)}$ via random uniform ray sampling along $Q$, the depth of $Q$ can be approximated as
\begin{equation*}
    \hat{D}(Q,\mathcal{C}) = \sum_{k=0}^n k \, \bm{\hat{p}}^{(k)}.
    \label{eq.D.hat}
\end{equation*}
This is the basis of the Monte Carlo approximation scheme we now propose.
In particular, for given $Q$ and $\mathcal{C}$, consider generating a random sample
$$
(\bm{\overrightarrow{q_\theta}})_1,(\bm{\overrightarrow{q_\theta}})_2,\ldots,(\bm{\overrightarrow{q_\theta}})_N
$$
of rays drawn uniformly and independently along $Q$; use the oracle to first generate a sample of random points $\bm{q}_0,\bm{q}_1,\dots,\bm{q}_N$ along $Q$ by utilizing the underlying representative parameterization and drawing from the $U[0,1]$ distribution to identify a point $\bm{q}$ on $Q$, and then respectively associate to each a random angle $\bm{\theta}_0,\dots,\bm{\theta}_N$ drawn from the $U[0,\pi)$ distribution.
The rays $\bm{\overrightarrow{q_\theta}}$ and $\bm{\overrightarrow{q_{\theta+\pi}}}$ are then shot into $\mathcal{C}$, whereby the oracle returns the stabbing numbers of each and we may compute the minimum of the two rays' stabbing numbers. From this, the resulting minimum stabbing numbers counts for the sample of rays can be obtained as
\begin{equation*}
    \bm{s^{(k)}} = \sum_{i=1}^N\mathbb{I}\{\min\{\stab_{\mathcal{C}}((\bm{\overrightarrow{q_\theta}})_i),\stab_{\mathcal{C}}((\bm{\overrightarrow{q_{\theta+\pi}}})_i)\} = k\}
\end{equation*}
for $k=0,1,\dots,n$. The estimates required for the construction of $\hat{D}(Q,\mathcal{C})$ given in Eq.~\eqref{eq.D.hat} are then simply
\begin{equation*}
    \bm{\hat{p}^{(k)}} = \frac{\bm{s^{(k)}}}{N},
\end{equation*}
the proportion of the total number of sampled rays that obtain the corresponding stabbing numbers.

Borrowing from Lemma~7 in \cite{Durocher:2017}, and exploiting a similar result from \cite{Devroye:1985}, we derive a bound on the probability of closeness of the resulting approximation based on the number of random rays used.

\begin{theorem}\label{thm:propClose}
For any finite set of continuous curves $\mathcal{C}\subseteq \Gamma$ and a query curve $Q\in \Gamma$ and any $\epsilon>0$, if the number of sampled rays $N$ is sufficiently large, namely $N\geq 20(n+1)n^2/\epsilon^2$, then 
\begin{equation*}
    \mathbb{P}\left(|\hat{D}(Q,\mathcal{C})-D(Q,\mathcal{C})|<\epsilon\right)\geq 1- 3e^{-\frac{N\epsilon^2}{25n^2}}.
\end{equation*}
\end{theorem}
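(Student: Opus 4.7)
The plan is to view $\hat{D}(Q,\mathcal{C})$ as a bounded linear functional of an empirical multinomial probability vector and then reduce the claim to a standard concentration inequality for multinomials. By construction, each sampled pair $(\bm{q}_i,\bm{\theta}_i)$ is drawn i.i.d., and the induced random variable $\min\{\stab_{\mathcal{C}}((\bm{\overrightarrow{q_\theta}})_i),\stab_{\mathcal{C}}((\bm{\overrightarrow{q_{\theta+\pi}}})_i)\}$ takes values in $\{0,1,\ldots,n\}$ with probabilities $(p^{(0)},\ldots,p^{(n)})$ by definition of the $p^{(k)}$. Hence $(\bm{s}^{(0)},\ldots,\bm{s}^{(n)})$ is multinomial with parameters $N$ and $(p^{(0)},\ldots,p^{(n)})$, and $\bm{\hat{p}}^{(k)}=\bm{s}^{(k)}/N$ is the corresponding empirical estimator.

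The first substantive step is to bound $|\hat{D}-D|$ by the $L_1$ deviation of $\bm{\hat{p}}$ from $p$. Using $\hat{D}=\sum_{k=0}^n k\,\bm{\hat{p}}^{(k)}$ and the analogous identity for $D$, the triangle inequality yields
\[
|\hat{D}(Q,\mathcal{C})-D(Q,\mathcal{C})| \;\leq\; \sum_{k=0}^n k\,|\bm{\hat{p}}^{(k)}-p^{(k)}| \;\leq\; n\sum_{k=0}^n |\bm{\hat{p}}^{(k)}-p^{(k)}|,
\]
so it suffices to control the right-hand $L_1$ norm by $\epsilon/n$ with the prescribed probability.

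Next, I would invoke the Devroye multinomial concentration inequality \cite{Devroye:1985} (the same form used as Lemma~7 in \cite{Durocher:2017}): for any $\epsilon'>0$ and any sample size $N\geq 20(n+1)/(\epsilon')^2$,
\[
\mathbb{P}\!\left(\sum_{k=0}^n |\bm{\hat{p}}^{(k)}-p^{(k)}| > \epsilon'\right) \;\leq\; 3\,e^{-N(\epsilon')^2/25}.
\]
Substituting $\epsilon'=\epsilon/n$ transforms the sample-size hypothesis into exactly $N\geq 20(n+1)n^2/\epsilon^2$ and the exponential rate into $-N\epsilon^2/(25n^2)$. Composing this bound with the triangle-inequality reduction and taking complements gives the stated lower bound on $\mathbb{P}(|\hat{D}(Q,\mathcal{C})-D(Q,\mathcal{C})|<\epsilon)$.

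The main thing to check carefully is the scaling: one has to verify that replacing $\epsilon$ by $\epsilon/n$ propagates consistently through both the sample-size hypothesis of Devroye's inequality and its exponent, and that the oracle does in fact supply independent uniform draws so that the multinomial framework applies. Once those are confirmed, no further geometric reasoning is required; the theorem is a direct statistical consequence of the definition of $\hat{D}(Q,\mathcal{C})$ in \eqref{eq.D.hat} together with the fact that the minimum-stabbing quantity is $\{0,1,\ldots,n\}$-valued, with the worst-case bound $k \leq n$ on the linear coefficients being the sole source of the factor $n^2$ in the sample complexity.
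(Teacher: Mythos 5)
Your proposal is correct and follows essentially the same route as the paper's proof: bound $|\hat{D}-D|$ by $n\sum_{k}|\bm{\hat{p}}^{(k)}-p^{(k)}|$ via the triangle inequality and $k\leq n$, observe that $(\bm{s}^{(0)},\dots,\bm{s}^{(n)})$ is multinomial with parameters $N$ and $(p^{(0)},\dots,p^{(n)})$, and apply Devroye's $L_1$ concentration inequality with deviation $\epsilon/n$, which yields both the sample-size condition $N\geq 20(n+1)n^2/\epsilon^2$ and the exponent $-N\epsilon^2/(25n^2)$. The only cosmetic difference is that you state Devroye's bound in terms of the empirical proportions $\bm{\hat{p}}^{(k)}$ while the paper rewrites the event in terms of the counts $\bm{s}^{(k)}$ and their expectations, which is an equivalent formulation.
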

\begin{proof}
Begin by observing that
\begin{align*}
     \mathbb{P}\left(|\hat{D}(Q,\mathcal{C})-D(Q,\mathcal{C})|<\epsilon\right) 
     &= \mathbb{P}\left(\left|\sum_{k=0}^n k\bm{\hat{p}^{(k)}} - \sum_{k=0}^n kp^{(k)}\right|<\epsilon\right)\\ 
     &\geq \mathbb{P}\left(\sum_{k=0}^n k\left|\bm{\hat{p}^{(k)}} - p^{(k)}\right|<\epsilon\right)\\ 
     &\geq \mathbb{P}\left(n\sum_{k=0}^n \left|\bm{\hat{p}^{(k)}}-p^{(k)}\right|<\epsilon\right)\\ 
     &= \mathbb{P}\left(\sum_{k=0}^n \left|\bm{s^{(k)}}-\mathbb{E}[\bm{s^{(k)}}]\right|<\frac{N}{n}\epsilon\right).
\end{align*}
The result then follows by Lemma~3.1 from \cite{Devroye:1985}, since  $\bm{s^{(0)}},\bm{s^{(1)}},\dots,\bm{s^{(n)}}$ are distributed according to a multinomial distribution with parameters $N$ and $(p^{(0)},p^{(1)},\dots,p^{(n)})$, as required.
\end{proof}

Reiterating the results of Theorem~\ref{thm:propClose}, we note that for a desired level of precision $\epsilon>0$, we require $N\in \Omega(n^3/\epsilon^2)$ samples to have a probability of closeness greater than $1- 3e^{-\frac{N\epsilon^2}{25n^2}}$. While this result does not preclude tighter bounds requiring fewer samples, Theorem~\ref{thm:propClose} provides a baseline indicator on performance.

\begin{corollary}
For any finite set of continuous curves $\mathcal{C}\subseteq \Gamma$, a query curve $Q\in \Gamma$, and any $\epsilon>0$, the depth estimate $\hat{D}(Q,\mathcal{C})$ converges in probability to $D(Q,\mathcal{C})$; that is, 
\begin{equation*}
    \lim_{N\to \infty}\mathbb{P}\left(|\hat{D}(Q,\mathcal{C})-D(Q,\mathcal{C})|<\epsilon\right) = 1.
\end{equation*}
\end{corollary}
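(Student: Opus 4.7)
The plan is to invoke Theorem~\ref{thm:propClose} directly and take a limit; this corollary is essentially a restatement of that theorem with $N$ pushed to infinity. Since $\mathcal{C}$ is fixed and finite, the number of curves $n = |\mathcal{C}|$ is a constant with respect to $N$, so the bound on the right-hand side becomes a function of $N$ alone once $\epsilon$ and $n$ are fixed.

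First, I would fix an arbitrary $\epsilon > 0$, together with $\mathcal{C}$ and $Q$, and observe that $n$ is finite. Then for any $N$ satisfying $N \geq 20(n+1)n^2/\epsilon^2$, Theorem~\ref{thm:propClose} yields
\begin{equation*}
    \mathbb{P}\!\left(|\hat{D}(Q,\mathcal{C})-D(Q,\mathcal{C})|<\epsilon\right) \;\geq\; 1 - 3e^{-N\epsilon^2/(25n^2)}.
\end{equation*}
Since the exponent $-N\epsilon^2/(25n^2)$ tends to $-\infty$ as $N \to \infty$ (with $\epsilon$ and $n$ held fixed), the term $3e^{-N\epsilon^2/(25n^2)}$ tends to $0$, so the lower bound tends to $1$. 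Combined with the trivial upper bound of $1$ on any probability, the squeeze gives
\begin{equation*}
    \lim_{N\to\infty}\mathbb{P}\!\left(|\hat{D}(Q,\mathcal{C})-D(Q,\mathcal{C})|<\epsilon\right) = 1,
\end{equation*}
which is the definition of convergence in probability.

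There is no real obstacle here, as the work has already been done in Theorem~\ref{thm:propClose}. The only thing to be careful about is that the threshold $N \geq 20(n+1)n^2/\epsilon^2$ in that theorem is a finite quantity depending only on $n$ and $\epsilon$, so for any fixed $\epsilon$ the hypothesis is satisfied for all sufficiently large $N$, making the limit argument valid. I would state this explicitly to ensure the reader sees that the conditions of Theorem~\ref{thm:propClose} eventually hold along the sequence $N \to \infty$.
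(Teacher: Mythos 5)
Your proposal is correct and follows essentially the same route as the paper: the paper's proof likewise just notes that since $N\to\infty$ independently of $n$, the bound of Theorem~\ref{thm:propClose} applies for all sufficiently large $N$, and convergence in probability follows by taking the limit. Your version merely spells out the details (the finite threshold on $N$, the decay of the exponential term, and the squeeze against the trivial upper bound of $1$) that the paper leaves implicit.
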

\begin{proof}
As $N\to \infty$ independently of $n$ in the current context, the bound given in Theorem~\ref{thm:propClose} is directly applicable for sufficiently large $N$. Convergence in probability then follows directly.
\end{proof}

Until now, our discussion in Section~\ref{sec:approx} has considered exact ray sampling along $Q$ in $\mathcal{C}$; next we examine  approximating the depth of $Q$ using a sample of rays shot into a set of polyline approximations of curves in $\mathcal{C}$, which is more in keeping with the input domain of our exact algorithm.
Namely, for each $C_i\in\mathcal{C}$ we construct a polyline approximation $P_i=(p_{i0},p_{i1},p_{i2},\dots,p_{im})$ anchored at sample points along the length of $C_i$ provided by the oracle, and likewise let $P_Q=(q_0,\dots,q_m)$ be a similarly constructed polyline approximation of $Q$. Assume that each polyline approximation has $m$ segments; the ensuing statements easily generalize to polylines composed of $O(m)$ segments. The sampling scheme utilized in this scenario follows a slight modification of that outlined above. Namely, rays are randomly sampled along $P_Q$ by selecting a segment $\bm{I}$ of the form $I_{i}=\overline{p_{i}p_{i+1}}$ with probability
\begin{equation*}
    \mathbb{P}(\bm{I}=I_{i}) = \frac{L(I_{i})}{L(P_Q)} = \frac{L(I_{i})}{\sum_{j=0}^{m-1}L(I_{j})},
\end{equation*}
which is the proportion of the total length of $P_Q$ contributed by $I_i$. Secondly, a point $\bm{q}$ along the selected segment $\bm{I}$ is generated randomly and uniformly through affine interpolation along the segment by utilizing the $U[0,1]$ distribution, with a random angle $\bm{\theta}$ selected from $U[0,\pi)$ as before. As before, the resulting rays $\bm{\overrightarrow{q_\theta}}$ and $\bm{\overrightarrow{q_{\theta+\pi}}}$ are shot into $\mathcal{P}$ and the minimum of the two rays' stabbing numbers is found. Importantly, notice the estimates $\bm{\hat{p}^{(0)}},\bm{\hat{p}^{(1)}},\dots,\bm{\hat{p}^{(n)}}$ formed by a random sample of $N$ rays, in the way previously outlined, are no longer direct estimates of the true stabbing number probabilities along $Q$ with respect to $\mathcal{C}$, but rather of the probabilities for stabbing rays along $P_Q$ with respect to $\mathcal{P}$. 

Consequently, to bound the quality of approximation one can inspect the difference 
\begin{align}
    |\hat{D}(P_Q,\mathcal{P}) - D(Q,\mathcal{C})| 
    & \leq |\hat{D}(P_Q,\mathcal{P}) - D(P_Q,\mathcal{P})| + |D(P_Q,\mathcal{P})-D(Q,C)|\\
    & \leq |\hat{D}(P_Q,\mathcal{P}) - D(P_Q,\mathcal{P})| \label{eq:D.L}\\ 
    & \quad + |D(P_Q,\mathcal{P})-D(Q,\mathcal{P})|\label{eq:D.C}\\
    & \quad + |D(Q,\mathcal{P})-D(Q,\mathcal{C})|, \label{eq:D.S}
\end{align}
where $\hat{D}(P_Q,\mathcal{P})$ denotes the $N$ sample estimate of $D(Q,\mathcal{C})$ utilizing $m$ segment polyline approximations of both $Q$ and $\mathcal{C}$. In the expansion, Eq.~\eqref{eq:D.L} embodies the variance (or Monte Carlo simulation error) with the sum between Eq.~\eqref{eq:D.C} and Eq.~\eqref{eq:D.S} bounding the bias of the overall approximation. 

The value of Eq.~\eqref{eq:D.L} can be bounded in probability via the result obtained in Theorem~\ref{thm:propClose}, while Eq.~\eqref{eq:D.C} relates to the continuity of $Q$ with respect to the sample of polyline approximates, as discussed in Section~\ref{sec:semi-continuity}, and Eq.~\eqref{eq:D.S} can be bounded according to the stability of curve stabbing depth from Section~\ref{sec:properties.stability}. As all but the first do not have known tight bounds, we defer derivation of a strict bound on this quantity to future research.

\subsubsection{Computation of Ray Intersection Queries}
As the stabbing number of each sampled ray in the Monte Carlo approximation must be calculated, it follows that the required query time for counting such ray intersections with a set of smooth curves (and later polylines) must be weighed against the desired precision in order to determine the number of sample points available before performance becomes untenable. For the case of polylines, the threshold at which running our exact algorithm becomes more efficient than the approximation scheme with respect to a desired precision, and requisite number of samples, should be found.

The discussion surrounding methods for shooting rays into arbitrary smooth curves drawn from $\Gamma$ leads outside the scope of this paper and into the realm of numerical analysis, so is not elaborated on further as we are primarily concerned with points of comparison against our exact algorithm for computing curve stabbing depth, which itself is not directly applicable to smooth curves without approximation. Such methods exist for several classes of curves, e.g., using a quadrature approximation of the curve used to determine intersection points, similar to those discussed below. 

For randomly positioned and oriented ray-polyline intersections, it is impractical to  use the dynamic tangent point and stabbing number data structures used in the exact algorithm, as these would add considerable overhead requiring all the update points between query rays to be computed and traversed in a continuous fashion. Instead, we use a different method for detecting ray intersections, better suited to performing multiple independent queries. This particular problem of counting the number of intersections between a ray and a set of distinct geometric objects composed of line segments is an instance of the \emph{colored segment intersection searching} problem. The best solutions known to the authors for solving this problem are those proposed by \cite{Agarwal:1996} and \cite{Gupta:1994}, the latter of the two methods being capable of performing a single query in $O(\log^2 (nm) + k\log(nm))$ time, where $k\in O(nm)$ is the number of distinct polyline segments intersected, plus an initial $O((nm+\chi)\log(nm))$ preprocessing worst-case time and space to construct the query data structure, where $\chi \in O(n^2 m^2)$ denotes the number of pairwise intersections. In the worst case, a query ray always intersects all $n$ distinct curves in the sample, corresponding to $k=nm$ in the forgoing bound, informing us through comparison with the time required for computing the exact algorithm, that we can afford at most $N\in O\left(\frac{n^3 + n^2m\log^2 m +nm^2\log^2 m}{\log^2(nm)+nm\log(nm)}\right)$ ray queries in the worst case before our Monte Carlo approximation becomes asymptotically less efficient than exact calculation, not accounting for any further processing required. 

To build intuition of this threshold on the number of ray query samples, consider the case $m\in O(n)$, which simplifies the above expression to $O(n^3)$. Therefore, when $m \in O(n)$, the Monte Carlo method is only faster than the exact algorithm when the number of samples is $o(n^3)$, which is only slightly lower than the $O(n^3/\epsilon)$ samples required to attain a guarantee on the probability of closeness given by Theorem~\ref{thm:propClose}. This, suggests that Monte Carlo, in the form described, may not be an efficient method of approximating the Curve Stabbing Depth.

\section{Discussion and Directions for Future Research}
\label{sec:conclusion}

In this section we briefly discuss possible generalizations of curve stabbing depth to higher dimensions, other approaches to calculating the depth of curves for both continuous and discretized curves, and additional possible directions for future research.

\subsection{Stability and Upper Semi-Continuity}
In Conjecture~\ref{conj:semi-continuity}, the authors hypothesize that curve stabbing depth is upper semi-continuous, and in Conjecture~\ref{conj:stability}, that it is also stable, i.e., that it has stability bounded by a constant. Proving (or identifying counter-examples to) Conjectures~\ref{conj:semi-continuity} and~\ref{conj:stability} remains open.

\subsection{Batched Depth Calculations}
As there is considerable overlap in the prepossessing stages for computing the depth of curves relative to the same sample, it might be feasible to accelerate \emph{batched queries}, as examined for select depth measures in \cite{Durocher:2022-2}.

\subsection{Applications to Classification and Clustering}
Based on empirical analysis of various examples, curve stabbing depth appears to implicitly weight points nested within clusters more than those that would otherwise be considered deep, according to other multivariate depth measures, within the larger cloud of curves. As such, investigation into the effectiveness of clustering curves using curve stabbing depth could prove fruitful in domains where some form of combined classification and depth information is desired, e.g., for outlier detection or anomalous data detection.

\subsection{Generalizations to Higher Dimensions}
\label{sec:conclusion.higherDim}

When a curve $Q$ and a set $\cal C$ of curves lie in a $k$-dimensional flat of $\mathbb{R}^d$ for some $k < d$, the $d$-dimensional curve stabbing depth of $Q$ as calculated using a ray relative to $\cal C$ is zero, whereas the $k$-dimensional curve stabbing depth of $Q$ relative to $\cal C$ is non-zero in general, meaning that the straightforward application of Definition~\ref{def:curveDepth} as expressed is not consistent across dimensions. 

Alternatively, a natural generalization of Definition~\ref{def:curveDepth} to higher dimensions is to replace the rotating stabbing ray by a rotating $k$-dimensional half-hyperplane, and to measure the number of curves it intersects as it rotates. Such generalizations could be explored to define measures of data depth for curves in higher dimension that remain consistent across dimensions, and reduce to the expression given in the planar case in Definition~\ref{def:curveDepth} for co-planar instances.

\subsection{Other Possible Measures of Data Depth for Sets of Curves}
\label{sec:conclusion.otherMeasures}
Alt and Godau \cite{Alt:1995} gave an algorithm to compute the Fr\'{e}chet distance between two $m$-vertex polylines in $O(m^2 \log m)$ time . Applying their algorithm iteratively, we can compute the mean Fr\'{e}chet distance between an $m$-vertex polyline $Q$ and a set $\mathcal{P}$ of $n$ polylines, each with $O(m)$ vertices, in $O(nm^2 \log m)$ time; this time is of similar magnitude to that of our algorithm for computing the curve stabbing depth of $Q$ relative to $\mathcal{P}$: $O(n^2m\log^2 m + nm^2 \log^2 m)$ (Theorem~\ref{thm:mainAlg}). 
The mean Fr\'{e}chet distance can be used to define a depth measure for sets of curves (and polylines), as
\begin{equation}
\label{eq:otherMeasures}
F(Q, \mathcal{C}) = \frac{1}{|\mathcal{C}|} \sum_{C \in \mathcal{C}} \dist(Q, C) .    
\end{equation}
A curve that minimizes $F(\cdot, \cdot)$ in Eq.~\eqref{eq:otherMeasures} generalizes the notion of the Weber point to sets of curves, as well as Type B depth functions, as defined by Zuo and Serfling \cite{Zuo:2000} for sets of points.  
Just as a median of a set $\mathcal{S}_1$ of points in $\mathbb{R}$ minimizes the average distance to points in $\mathcal{S}_1$,
recall that the Weber point, $w_{\mathcal{S}_2}$, of a set $\mathcal{S}_2$ of points in $\mathbb{R}^2$ minimizes the average Euclidean distance to points in $\mathcal{S}_2$. Unlike a one-dimensional median which can be computed 
using a linear-time selection algorithm,
the exact position of $w_{\mathcal{S}_2}$ cannot be computed exactly in general \cite{Bajaj:1988}, and must be approximated \cite{Bose:2003,Durocher:2009}.
The Weber point is highly unstable \cite{Durocher:2009,Durocher:2006}, i.e., it is not stable for any fixed $k$ (see Section~\ref{sec:properties.stability}), implying the same property for $F$.
To define a depth measure whose value is high when a curve is deep relative to $\mathcal{C}$ and is low for outliers relative to $\mathcal{C}$ requires taking an inverse or a difference, as is done for Type B depth functions in \cite{Zuo:2000}, e.g.,
\begin{equation*}
F'(Q, \mathcal{C}) = \frac{1}{1+F(Q,\mathcal{C})} 
= \frac{|\mathcal{C}|}{|\mathcal{C}| + \sum_{C \in \mathcal{C}} \dist(Q, C)}.
\end{equation*}
Such a depth measure, however, does not have a bounded region of non-zero depth (see Section~\ref{sec:properties.non-zeroDepth}).
$F'$ can be made invariant under similarity transformations by adding a scaling factor, e.g., scale by the inverse of the radius of the smallest enclosing ball of $\mathcal{C}$ (see Section~\ref{sec:properties.affine}). 

\section*{Acknowledgements}
The authors thank Kelly Ramsay for suggesting a helpful reference related to the discussion in Section~\ref{sec:conclusion.otherMeasures}.


\bibliographystyle{abbrv}
\bibliography{bibliography.bib}

\end{document}